\newtheorem{theorem}{Theorem}[section]
\newtheorem{lemma}[theorem]{Lemma}
\newtheorem{corollary}[theorem]{Corollary}
\newtheorem{proposition}[theorem]{Proposition}
\newtheorem{definition}[theorem]{Definition}
\newenvironment{numberedtheorem}[1]{%
\begin{theorem}}{\end{theorem}\addtocounter{theorem}{-1}}
\newenvironment{numberedlemma}[1]{%
\begin{lemma}}{\end{lemma}\addtocounter{theorem}{-1}}
\newcommand{\reals}{\mathbb R}
\newcommand{\prob}[2][]{\text{\bf Pr}\ifthenelse{\not\equal{}{#1}}{_{#1}}{}\!\left[#2\right]}
\newcommand{\expect}[2][]{\text{\bf E}\ifthenelse{\not\equal{}{#1}}{_{#1}}{}\!\left[#2\right]}
\newcommand{\given}{\,\mid\,}
\newcommand{\setsize}[1]{\left| #1 \right|}
\DeclareMathOperator{\argmax}{argmax}
\newcommand{\super}[1]{^{(#1)}}
\newcommand{\Xcomment}[1]{}
\newcommand{\mech}{{\cal M}}
\DeclareMathOperator{\RSOL}{RSOL}
\newcommand{\bid}{b}
\newcommand{\bidi}[1][i]{\bid_{#1}}
\newcommand{\val}{v}
\newcommand{\vals}{{\mathbf \val}}
\newcommand{\valsmi}{{\mathbf \val}_{-i}}
\newcommand{\vali}[1][i]{\val_{#1}}
\newcommand{\valith}[1][i]{\val_{(#1)}}
\newcommand{\util}{u}
\newcommand{\utili}[1][i]{\util_{#1}}
\newcommand{\virt}{\varphi}
\newcommand{\dist}{F}
\newcommand{\disti}[1][i]{{\dist_{#1}}}
\newcommand{\dens}{f}
\newcommand{\densi}[1][i]{{\dens_{#1}}}
\newcommand{\price}{p}
\newcommand{\prices}{{\mathbf \price}}
\newcommand{\pricei}[1][i]{\price_{#1}}
\newcommand{\alloc}{x}
\newcommand{\allocs}{{\mathbf \alloc}}
\newcommand{\alloci}[1][i]{\alloc_{#1}}
\newcommand{\feasibles}{{\cal X}}
\newcommand{\expval}{\mu}
\newcommand{\marg}{\vartheta}
\newcommand{\ironmarg}{\bar{\marg}}
\newcommand{\highprice}{p}
\newcommand{\lowprice}{q}
\newcommand{\lowprob}{\rho}
\newcommand{\oneprice}{r}
\newcommand{\valweight}{\gamma_\val}
\newcommand{\priceweight}{\gamma_\price}
\newcommand{\ratio}{\alpha}
\newcommand{\bias}{\chi}
\newcommand{\bm}{{\cal G}}
\newcommand{\of}{{\cal F}}
\newcommand{\bms}{\bm \super 2}
\newcommand{\ofs}{\of \super 2}
\DeclareMathOperator{\Mye}{Opt}
\DeclareMathOperator{\Lottery}{Lot}
\newcommand{\A}{\mathcal{A}}
\newcommand{\C}{\mathcal{C}}
\newcommand{\pwcfunc}{\marg}
\newcommand{\interval}{a}
\newcommand{\exval}{\mu}
\newcommand{\lbparam}{\beta}
\newcommand{\lbind}{\kappa}
\begin{document}

\title{Optimal Platform Design\thanks{There is some overlap between
    this paper and the paper ``Optimal Mechanism Design and Money
    Burning,'' which appeared in the STOC 2008 conference.  However,
    the focus of this paper is different, with some of our earlier
    results omitted and several new results included.}} 

\author{Jason D. Hartline\thanks{Electrical Engineering and Computer
Science, Northwestern University, Evanston, IL 60208.
Email: {\tt hartline@eecs.northwestern.edu}. This work was done
while author was at Microsoft Research, Silicon Valley.} \and
Tim Roughgarden\thanks{Department of Computer Science,  Stanford
University, Stanford, CA 94305.
Email: {\tt tim@cs.stanford.edu}.
  Supported in part by NSF CAREER Award CCF-0448664, an ONR Young
Investigator Award, and an Alfred P. Sloan Fellowship.}}

\date{}

%\address{Electrical Engineering and Computer
%Science, \\ Northwestern University, \\ Evanston, IL 60208.\\
%Email: {\tt hartline@eecs.northwestern.edu}.}

%\address{Department of Computer Science, \\ Stanford
%University,\\ Stanford, CA 94305.\\
%Email: {\tt tim@cs.stanford.edu}.}

\maketitle

\begin{abstract}
An auction house cannot generally provide the optimal auction
technology to every client.  Instead it provides one or several
auction technologies, and clients select the most appropriate one.
For example, eBay provides ascending auctions and ``buy-it-now''
pricing.  For each client the offered technology may not be optimal,
but it would be too costly for clients to create their own.  We call
these mechanisms, which emphasize generality rather than optimality,
platform mechanisms.  A platform mechanism will be adopted by a client
if its performance exceeds that of the client's outside option, e.g.,
hiring (at a cost) a consultant to design the optimal
mechanism.  We ask two related questions.  First, for what costs of
the outside option will the platform be universally adopted?  Second,
what is the structure of good platform mechanisms?  We answer these
questions using a novel prior-free analysis framework in which we seek
mechanisms that are approximately optimal for every prior.
\end{abstract}

\section{Introduction}
\label{sec:intro}

Auction houses, like Sotheby's, Christie's, and eBay, exemplify the
commodification of economic mechanisms, like auctions, and warrant an
accompanying theory of design.  The field of {\em mechanism
  design} suggests how special-purpose mechanisms might be optimally
designed; however, in commodity industries there is a trade-off
between special-purpose and general-purpose products.  While for any
particular setting an optimal special-purpose product is better, a
general-purpose product may be favored, for instance, because of its
cheaper cost or greater versatility.  We develop a theory for
the design of general-purpose mechanisms, henceforth, {\em platform
  design}.

Consider the following simple model for platform design.  The platform
{\em provider} offers a {\em platform} mechanism to potential
customers ({\em principals}), who each wish to employ the mechanism in
their particular {\em setting}.  For example, the provider is eBay,
the platform is the eBay auction, the principals are sellers, and the
settings are the distinct markets of the sellers, which comprise of a
set of buyers ({\em agents}) with preferences drawn according to a
distribution.  Each principal has the option to not adopt the platform
and instead to employ a consultant to design the optimal auction for
his specific setting.  We assume that this outside option comes at a
greater cost than the platform, and thus the platform provider has a
{\em competitive advantage}.

We impose two restrictions to focus on the differences between the
special-purpose optimal mechanism design and the general-purpose
optimal platform design.  First, we restrict the platform to be a
single, unparameterized mechanism (unlike eBay where sellers can set
their own reserve prices).\footnote{In a separate study, we consider
  the technically orthogonal topic of reserve-price based
  platforms~\citep{HR-09}.}  Second, we require that the platform is
universally adopted.  Without this assumption, we would need to model
in detail the relative value of adoption in each setting, and this
would likely give less general results.  We ask: {\em What must the
  competitive advantage of the platform be to guarantee universal
  adoption by all principals?  What is the platform designer's
  mechanism that guarantees universal adoption?}

There are two important points of contact between this theory of platform
design and the existing literature.  First, the problem of optimal
platform design provides a formal setting in which to explore the
\citet{wil-87} doctrine, which critiques mechanisms that are overly dependent on
the details of the setting but does not quantify the cost of
this dependence.  
%
%XXX QUESTIONS FROM TIM: Quotes around ``Wilson doctrine''?  Reference?
%
A universally adopted platform, by definition, performs well in all
settings and hence is not dependent on the details of setting.
Second, the optimal platform design problem is closely
related to {\em prior-free optimal mechanism design}.  Indeed, our
study of platform design formally connects the prior-free and Bayesian
theories of optimal mechanism design.  We make a rigorous comparison
between the two settings and quantify the Bayesian designer's relative
advantage over the prior-free designer.

\paragraph{Platform Design.}
%%
%% classical mechanism design
%%
%
In classical Bayesian optimal mechanism design, a principal
designs a mechanism for a set of self-interested agents that have
private preferences over the outcomes of the mechanism.  These private
preferences are drawn from a known probability distribution.
The optimal mechanism is the one that maximizes the expected value of
the principal's objective function when the agents' strategies are in
Bayes-Nash equilibrium.

%%
%% platform design vs MD for dist over dists.
%%
%% Notably our platform design question is not solved by simply
%% considering optimal mechanism design for distributions over
%% distributions.  Optimal mechanisms for distributions over
%% distributions give good average performance over the distributions;
%% but the actual performance for any particular distribution may be very
%% far from what would be optimal for this distribution.  A principal
%% facing this distribution would then not choose to adopt the platform.
%% For a platform to be universally adopted it must perform well for each
%% distribution.

%%
%% approximation
%%
For a given distribution and objective function, 
the {\em approximation factor} of a candidate mechanism is the ratio 
between the expected performance of an optimal mechanism and that of
the candidate mechanism.
A good mechanism is one with a small approximation factor (close
to~1); a bad one has a large approximation factor.  

%%
%% costs
%%
We assume that the cost of designing
the optimal mechanism is higher than the cost of adopting the
platform.  For this reason, a principal might choose to adopt the
sub-optimal platform mechanism.  We assume this {\em competitive
  advantage} of the platform is multiplicative.  This assumption is
consistent with commission structures in marketing and, from a
technical point of view, frees the model from artifacts of scale.  The
platform's competitive advantage gives an upper bound on the
approximation factor that the platform mechanism needs to induce a
principal to adopt the platform instead of hiring a consultant to design
the optimal mechanism.
%%
%% min-max
%%
Each principal's decision to adopt is based on the platform mechanism's
performance in the principal's setting.  Therefore, universal
adoption demands that the platform mechanism's approximation factor
on every distribution is at most its competitive advantage.  Of
particular interest is the minimum competitive advantage for which
there is a platform that is universally adopted, and also the
platform that attains this minimum approximation factor.  This {\em
  optimal platform} is the mechanism that minimizes (over mechanisms) the
maximum (over distributions) approximation factor.  Optimal platform
design is therefore inherently a min-max design criterion.

%%
%% formal question
%%
The basic formal question of platform design is: {\em What is the
  minimum competitive advantage $\beta$ and optimal platform mechanism
  $\mech$ such that for all distributions $\dist$ the expected
  performance of $\mech$ when values are drawn i.i.d.~from $\dist$ is
  at least $\frac{1}{\beta}$ times the expected performance of the
  optimal mechanism for $\dist$?}

%% In phrasing this question we have deliberately restricted attention to
%% i.i.d.~distributions.  For example, when the agents' values can come
%% from a general (non-identical) product distribution, there is no
%% finite $\beta$ that permits a universally adopted platform.
%% Fundamentally, in the non-identical case the optimal mechanism has too
%% much of an advantage, as it knows that agents are distinct and can
%% exploit this knowledge to discriminate.  Of course, the optimal
%% mechanism in a symmetric setting is symmetric.
%For this reason, we restrict
%our focus to the case where the agents' values are independent and
%identically distributed.  In such symmetric settings, there is an
%optimal mechanism that is symmetric.
%
%MAYBE MOVE 2 SENTENCES BELOW TO CONCLUSIONS/FUTURE WORK?
%To consider such asymmetric settings it would be
%reasonable to also provide the platform a means to discriminate.
%Models like those of~\cite{BBHM-08,GH-05} may be appropriate for such
%a setting.  

Directly answering the platform design questions above is
difficult as it requires simultaneous consideration of all
distributions.  This difficulty motivates a more stringent version of
the basic question which has the following economic interpretation.
Suppose that instead of requiring the principal to choose ex ante
between the optimal mechanism and the platform, we allow him to choose
ex post?  Clearly, this makes the platform designer's task even more
challenging, in that the minimum achievable $\beta$ is only higher.

The formal question of platform design now becomes: {\em What is the
  minimum competitive advantage $\beta$ and optimal platform mechanism
  $\mech$ such that for all valuation profiles $\vals =
  (\vali[1],\ldots,\vali[n])$ the performance of $\mech$ on $\vals$ is
  at least $\frac{1}{\beta}$ times the supremum over symmetric\footnote{Our study focuses solely on settings where the agents are a
  priori indistinguishable.  This focus motivates our restriction to
  i.i.d.\ distributions and symmetric optimal mechanisms.
  Distinguishable agents are considered by \citet{BBHM-08} and
  \citet{B+-13}.}
 Bayesian optimal
  mechanisms' performance on $\vals$?}

This question motivates the definition of a {\em performance
  benchmark} that is defined point-wise on valuation profiles,
specifically as the supremum over optimal symmetric mechanisms'
performance on the given valuation profile.  Notice that this
benchmark is prior-free.  The analysis of a platform mechanism is then
a comparison of the performance of a prior-free platform mechanism and
a prior-free performance benchmark.

%%
%% At the heart of our study of a platform designers inability to
%% specialize a mechanism for the context is the comparison between the
%% performance achievable with and without specialization.  For this
%% purpose consider two principals: Roger who can always deploy the
%% optimal (specialized) mechanism for a given setting; and Robert who
%% must deploy the same (unspecialized) mechanism for all settings.  As
%% Roger's choice of mechanism follows from classical mechanism design
%% theory \cite{mye-81}, our remaining question is two-fold: (a) What is
%% Roger's competitive advantage over Robert? and (b) What is this single
%% mechanism that Robert deploys?  

\paragraph{Results.}

Our contributions are two-fold.  First, we propose a conceptual
framework for the design and analysis of general-purpose platforms.
Second, we instantiate this framework to derive novel platform
mechanisms for specific problems and, in some cases, prove their
optimality. 

In more detail, we consider the problem of optimal platform design in
general symmetric settings of multi-unit unit-demand allocation
problems and for general linear (in agents' payments and values)
objectives of the principal.  For much of the paper, we focus on the
canonical objective of {\em residual surplus}, which is the difference
between the winning agents' values and payments.  Residual surplus is
interesting in its own right \citep[e.g.,][]{MM-92,con-12,CK-12} and
is, in a sense, technically more general than the objectives of
surplus and profit.\footnote{For surplus maximization, the Vickrey
  auction is optimal for every distribution.  For profit maximization,
  reserve-price-based auctions are optimal for standard
  distributions assumptions~\citep{mye-81}.  For
  residual surplus, reserve-price-based auctions are not optimal even
  for standard distributions.}  Intuitively, maximizing the residual
surplus involves compromising between the competing goals of
identifying high-valuation agents and of minimizing payments.  For
example, with a single item, the Vickrey auction performs well when
there is only one high-valuation agent, while giving the item away for
free is good when all agents have comparable valuations.
%
%(e.g., {\em monotone hazard rate} or {\em
%    regularity} \cite{mye-81}) imply optimality of reserve-price based
%  auctions.  Reserve-price based auctions are not generally optimal.
%  To avoid confusion we consider an objective where reserve-priced
%  based auctions are not optimal.} so we focus on the exemplary
%objective of {\em residual surplus} which is the difference between the
%agent values and payments.  Residual surplus is a relevant objective
%(e.g., \cite{}) and, for the most part, representative of general
%linear objectives.

Our approach comprises four steps.
\begin{enumerate}

\item We characterize Bayesian optimal mechanisms for multi-unit
  unit-demand allocation problems and general linear objectives by a
  straightforward generalization of the literature on optimal
  mechanism design.

\item We characterize the prior-free performance benchmark, i.e., the supremum
  over optimal symmetric mechanisms' performance on a given
  valuation profile, as an ex post optimal two-level lottery.

\item \label{result:upper-bounds} We give a general platform design
  and a finite upper bound on the 
  competitive advantage necessary for universal adoption.  

\item We give a lower bound on the competitive advantage for which
  there exists a platform that achieves universal adoption.

\end{enumerate}

Importantly, the platform mechanisms that we identify as being
universally adopted with finite competitive advantage are not standard
mechanisms from the literature on Bayesian optimal mechanisms.
Indeed, we prove that no standard mechanism is universally adopted
with any finite competitive advantage.  Instead, general purpose
mechanisms for platforms require novel features, which we identify in
Step~\ref{result:upper-bounds}.

%For two bidder auctions and the objective of residual surplus the
%minimal competitive advantage for universal adoption is 4/3 and it is
%achieved by the following randomized platform mechanism: with
%probability 1/4 give the item to agent 1, with probability 1/4 give
%the item to agent 2, with probability 1/4 sell the item to the agent
%$i$ that maximizes $i \vali$, with probability 1/4 sell the item to
%the agent $i$ that maximizes $\vali/i$.  (Equivalently, pick a random
%weight $w$ uniformly from $\{0,1/2,2,\infty\}$ and optimize weighted
%surplus, $\max(\vali[1],w \cdot\vali[2])$, and charge the winner the
%Clarke payments.)

\paragraph{Example.}  Our main results are 
interesting to interpret in the special case of allocating a single
item to one of two agents to maximize the residual surplus.  Denote
the high agent value by $\valith[1]$ and the low agent value by
$\valith[2]$.  We characterize the performance benchmark as $\max
(\tfrac{\valith[1] + \valith[2]}{2}, \valith[1] -
\tfrac{\valith[2]}{2})$.  As the supremum of Bayesian optimal
mechanisms, the first term in this benchmark arises from a lottery and
the second term from the two-level lottery that serves a random agent
with value strictly above price $\valith[2]$ if one exists and
otherwise serves a random agent (at price zero).  The optimal platform
mechanism randomizes between a lottery and weighted Vickrey auctions.
Precisely, it sets $w_1 = 1$, draws $w_2$ uniformly from
$\{0,1/2,2,\infty\}$, and serves the agent $i \in \{1,2\}$ that
maximizes $w_i \vali$.  It is universally adopted with
competitive advantage $\frac43$ and no other mechanism is better.

While all possible prior distributions are considered when deriving
the performance benchmark above, the actual benchmark for a particular
valuation profile is given by a simple formula with no distributional
dependence.  Consequently, our analysis that shows the $\frac43$
competitive advantage is a simple comparison between a (prior-free)
platform mechanism and a (prior-free) performance benchmark in the
worst case over valuation profiles.

\paragraph{Related Work.} 

%%
%% Bayesian.
%%
Our description of Bayesian optimal mechanisms for general linear
objectives follows from the work on optimal mechanism design
(see \citealp{mye-81}, and \citealp{RS-81}).  Within this theory, the
residual surplus objective coincides with that of the grand coalition
in a {\em weak cartel}, where agents wish to maximize the cartel's
total utility without side payments amongst themselves, so that
payments to the auctioneer are effectively ``burnt''.  Our
characterizations are thus related to those in the literature on
collusion in multi-unit auctions, e.g., by \citet{MM-92} and
\citet{con-12}.  Recently, \citet{CK-12} also specifically studied
Bayesian optimal auctions for residual surplus.

%%
%% rebalancing Vickrey payments.
%%
There is a growing literature on ``redistribution mechanisms'' where,
similar to the objective of residual surplus, payments are bad, e.g.,
see \citet{mou-06} and \citet{GC-09}.  These mechanisms transfer some
of the winners' payments back to the losers so that the residual
payment left over is as small as possible.  The mechanisms considered
are prior-free.

%%
%% prior-free profit
%%
Finally, as already mentioned, there is a large related literature on
prior-free optimal mechanism design.  \citet{GHW-01}, \citet{seg-03},
\citet{BV-03}, and \citet{BBHM-08} consider asymptotic approximation
of the Bayesian optimal mechanisms by a single (prior-free) mechanism.
This is quite different from our question of platform design as it
says nothing about whether or not a principal in a small or
moderate-sized market would adopt the platform.  The line of research
initiated by \citet{GHKSW-06} on prior-free profit maximization can be
reinterpreted in the context of platform design;
Section~\ref{sec:profit} describes this connection in detail.
%
%%% XXX QUESTION BY TIM: hasn't the 3.25 been improved since? 
%%% XXX ANSWER BY JASON: yes, but not by much, and not by a very
%%% interesting mechanism.

%%%
%%% frugality
%%%
%There is a related literature on the problem of cost minimization in
%structured procurement known as ``frugal mechanism design''
%(See~\cite{AT-02,tal-03,ESS-04,KKT-05}).  Results therein are
%prior-free but technically unrelated to ours.

\section{Warm-up: Monopoly Pricing}
\label{sec:monop-pricing}

Consider the following monopoly pricing problem.  A monopolist seller
(principal) of a single item faces a single buyer (agent).  The seller
has no value for the item and wishes to maximize his revenue, i.e.,
the payment of the buyer.  The buyer's value for the item is $\val \in
[1,h]$ and she wishes to maximize her utility which is her value less
her payment.  The seller may post a price $\price$ and the buyer may
take it or leave it.  The buyer will clearly take any price $\price
\leq \val$.

The seller's optimal mechanism, when the buyer's value comes from the
distribution $\dist$ (where $\dist(z) = \prob{\val \le z}$), is to
post the price $\price$ that maximizes $\price (1-\dist(\price))$,
a.k.a., the {\em monopoly price}.  The performance benchmark
$\bm(\val)$, i.e., the revenue of the best of the Bayesian optimal
mechanism when the buyer's value is $\val$, is then $\bm(\val) =
\val$.  The platform designer must give a single mechanism with
revenue that approximates $\val$ for every value $\val$ in the support
$[1,h]$.  The optimal platform and its competitive advantage for
universal adoption are given by the theorem below.

\begin{theorem}
The optimal platform mechanism offers a price drawn from distribution
$P$ with cumulative distribution function $P(z) = (1 + \ln z)/(1 + \ln
h)$ on $[1,h]$, and a point mass of $1/(1+\ln h)$ at $1$, and is
universally adopted with competitive advantage $1 + \ln h$.
\end{theorem}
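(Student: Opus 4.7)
The plan has two parts: (i) compute the revenue of the proposed mechanism and verify the upper bound of $1+\ln h$, and (ii) show via a duality argument that no mechanism can beat this.

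For (i), the proposed randomized posted-price mechanism (draw $\price \sim P$; sell iff $\price \le \val$) has expected revenue $R(\val) = \int_{[0,\val]} \price\, dP(\price)$. The atom of size $1/(1+\ln h)$ at $\price = 1$ contributes $1/(1+\ln h)$ whenever $\val \ge 1$, and the density $1/(\price(1+\ln h))$ on $(1,\val]$ contributes $(\val-1)/(1+\ln h)$. Summing, $R(\val) = \val/(1+\ln h)$ for every $\val \in [1,h]$. Since the benchmark is $\bm(\val) = \val$, the approximation factor is exactly $1+\ln h$ uniformly in $\val$, so the mechanism is universally adopted at this competitive advantage.

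For (ii), any truthful single-agent mechanism is equivalent to a distribution over posted prices, so without loss of generality the designer picks a probability measure $P$ on $[0,h]$ (prices outside $[0,h]$ are suboptimal). The plan is to apply LP duality. Assume the mechanism satisfies $R(\val) \ge \val/C$ for every $\val \in [1,h]$, and integrate this family of constraints against the dual measure $d\mu(\val) = \val^{-2}\, d\val$ on $[1,h]$ together with a point mass of $1/h$ at $\val = h$. Swapping the order of integration, the left side telescopes to
\[
\int_{[0,h]} \frac{\price}{\max(\price,1)}\, dP(\price) \;\le\; 1,
\]
while the right side evaluates to $(\ln h)/C + 1/C = (1+\ln h)/C$; hence $C \ge 1+\ln h$, matching the upper bound.

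The main difficulty is guessing the dual measure $\mu$. Complementary slackness guides the search: since the upper-bound mechanism makes $R(\val) = \val/C$ tight for \emph{every} $\val \in [1,h]$, any optimal dual must have full support on $[1,h]$. Solving the continuous LP dual --- maximize $\int \val\, d\mu(\val)$ subject to $\price \cdot \mu([\max(\price,1),h]) \le 1$ for every $\price \in [0,h]$ --- forces exactly $d\mu(\val) = d\val/\val^2$ plus a point mass at $h$. Once this structural insight is in hand, step (ii) reduces to the one-line swap-of-integrals calculation above.
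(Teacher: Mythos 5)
Your proof is correct and takes essentially the same approach as the paper: the dual measure you arrive at, $d\mu(\val) = \val^{-2}\,d\val$ on $[1,h]$ plus a point mass $1/h$ at $h$, is exactly the paper's \emph{equal-revenue} distribution $\dist(z) = 1-1/z$, which it exhibits directly as a worst-case prior under which every posted price earns expected revenue~$1$ while $\expect{\val} = 1 + \ln h$. Your LP-duality phrasing and the paper's ``integrate the pointwise guarantee against a bad prior and invoke an averaging argument'' are the same argument, as the paper itself notes in its remark about the zero-sum game between the designer and Nature.
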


\begin{proof}
An easy calculation verifies that, for every $\val \in [1,h]$, the
expected revenue from such a random price from $P$ is $\val/(1 + \ln
h)$.  Thus, the competitive advantage for universal adoption is $1+\ln
h$ as claimed.

To show that this is the optimal platform, we can similarly find a
distribution $\dist$ over values $\val$ such that the expected revenue
of every platform mechanism is 1.  The {\em equal revenue}
distribution has distribution function $\dist(z) = 1-1/z$, a point
mass of $1/h$ at $h$, and any price $\price$ is accepted by the agent
with probability $1/\price$ for an expected revenue of 1.  The
expected value of the benchmark for the equal-revenue distribution can
be calculated as $\expect{\bm(\val)} = \expect{\val} = 1+\ln h$.
Thus, the ratio of these expectations is $1+\ln h$, and for any
platform mechanism there must be some $\val \in [1,h]$ that achieves
the ratio.  We conclude that no platform is universally adopted with
competitive advantage less than $1+\ln h$.
\end{proof}

This analysis can be viewed as a zero-sum game between the platform
designer and Nature where the solution is a mixed strategy on the part
of both players, every action in the game achieves equal payoff, and
the value of the game is the optimal competitive advantage.

To conclude, we considered a simple monopoly pricing setting and
derived for it the optimal platform.  While a logarithmic competitive
advantage may seem impractical, except when the maximum variation~$h$
in values is small, the ideas from this design and analysis play an
important role in the developments of this paper.  The platform
mechanisms we derive subsequently, however, will be universally
adopted with a competitive advantage that is an absolute constant,
independent of the number of agents, the number of units, and the
range of agent values.

\section{Review of Bayesian Optimal Mechanism Design}
\label{sec:bayesian}

In this section we review Bayesian optimal mechanism design for
single-dimensional agents, i.e.,  with utility given by the value for
receiving a good or service less the required payment, and develop the
notation employed in the remainder of the paper.  Characterizing
Bayesian optimal mechanisms is the first step in our approach to
platform design.

We consider mechanisms for allocating $k$ units of an indivisible item
to $n$ unit-demand agents.  The outcome of such a mechanism is an {\em
  allocation vector}, $\allocs = (\alloci[1],\ldots,\alloci[n])$,
where $\alloci$ is~1 if agent~$i$ receives a unit and~0 otherwise, and
a non-negative {\em payment vector}, $\prices = (\pricei[1],\ldots,\pricei[n])$.
The allocation vector $\allocs$ is required to be feasible, i.e.,
$\sum_i \alloci \leq k$, and we denote this set of feasible allocation vectors
by $\feasibles$.

We assume that each agent~$i$ is risk-neutral, has a privately
known valuation $\vali$ for receiving a unit, and aims to maximize
her (quasi-linear) utility, defined as $\utili = \vali \alloci -
\pricei$.  Each agent's value is drawn independently and identically
from a continuous distribution $\dist$, where $\dist(z)$ and $\dens(z)$
denote the cumulative distribution and density functions,
respectively.  We denote the {\em valuation profile} by $\vals =
(\vali[1],\ldots,\vali[n])$.
% and the joint distribution by $\dists = \dist^n$.

We consider general symmetric, linear objectives of the mechanism
designer.  For valuation
coefficient $\valweight$ and payment coefficient $\priceweight$, the objective for maximization is:
\begin{align}
\label{eq:objective}
\sum\nolimits_{i=1}^n \valweight \vali \alloci +
\priceweight \pricei.
\end{align}  
We single out three such objectives:
{\em surplus} with $\valweight = 1$ and
$\priceweight = 0$, {\em profit} with $\valweight = 0$ and
$\priceweight = 1$, and {\em residual surplus} with $\valweight = 1$
and $\priceweight = -1$.  We will not discuss surplus maximization in
this paper as the optimal mechanism for this objective is 
simply the prior-free $k$-unit
Vickrey auction; therefore, we assume that $\priceweight\neq 0$.

We assume that agents play in Bayes-Nash equilibrium and moreover if
truthtelling is a Bayes-Nash equilibrium then agents truthtell.  When
searching for Bayesian optimal mechanisms, the revelation
principle~\citep{mye-81} allows us to restrict attention to Bayesian
incentive compatible mechanisms, i.e., ones with a truthtelling
Bayes-Nash equilibrium.

\paragraph{Characterization of incentive compatibility.}
%The third step of Bayesian mechanism design is characterizing the
%class of implementable {\em allocation rules}.  
The {\em allocation rule}, $\allocs(\vals)$, is the mapping (in
equilibrium) from agent valuations to the outcome of the mechanism.
Similarly the {\em payment rule}, $\prices(\vals)$, is the mapping
from valuations to payments.  Given an allocation rule
$\allocs(\vals)$, let $\alloci(\vali)$ be the interim probability with
which agent $i$ is allocated when her valuation is $\vali$ (over the
probability distribution on the other agents' valuations):
$\alloci(\vali) = \expect[\valsmi]{\alloci(\vali,\valsmi)}.$ Similarly
define $\pricei(\vali)$.  We require interim individual rationality,
i.e., that non-participation in the mechanism is an allowable agent
strategy.  The following lemma provides the standard characterization
of allocation rules that are implementable by Bayesian incentive
compatible mechanisms and the accompanying payment rule (which is
unique up to additive shifts, and usually fixed by setting $\pricei(0)
= 0$).
\begin{lemma}\citep{mye-81} 
\label{l:bic}\label{l:ic} 
Every Bayesian incentive compatible mechanism satisfies, for all $i$ and $\vali \geq \vali'$:
\begin{enumerate}[(a)]
\item Allocation monotonicity: 
$\alloci(\vali) \geq \alloci(\vali').$
\item Payment identity: 
$\pricei(\vali) = \vali\alloci(\vali) - \int_0^{\vali}
  \alloci(z)\, dz + \pricei(0)$.
\end{enumerate}
\end{lemma}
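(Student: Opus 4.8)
The plan is to work with the agent's interim utility in the truthtelling equilibrium, $\utili(\vali) = \vali\alloci(\vali) - \pricei(\vali)$, and to extract both claims from the two incentive constraints obtained by having an agent of true value $\vali$ contemplate the report $\vali'$ and an agent of true value $\vali'$ contemplate the report $\vali$.

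First I would write down, for $\vali \ge \vali'$, the two Bayesian incentive compatibility inequalities: an agent with value $\vali$ weakly prefers reporting $\vali$ to reporting $\vali'$, giving $\utili(\vali) \ge \vali\alloci(\vali') - \pricei(\vali')$; and an agent with value $\vali'$ weakly prefers reporting $\vali'$ to reporting $\vali$, giving $\utili(\vali') \ge \vali'\alloci(\vali) - \pricei(\vali)$. Rewriting both in terms of $\utili(\cdot)$ and $\alloci(\cdot)$ yields the sandwich
\[
(\vali - \vali')\,\alloci(\vali') \;\le\; \utili(\vali) - \utili(\vali') \;\le\; (\vali - \vali')\,\alloci(\vali).
\]
Comparing the two ends (with $\vali > \vali'$; the case $\vali = \vali'$ is trivial) gives $\alloci(\vali) \ge \alloci(\vali')$, which is exactly allocation monotonicity~(a).

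For the payment identity~(b) I would use the monotonicity from~(a) together with the sandwich to obtain an integral representation of $\utili$. Since $\alloci(\vali) = \expect[\valsmi]{\alloci(\vali,\valsmi)} \in [0,1]$ is an interim probability, the sandwich shows $\utili$ is Lipschitz and nondecreasing; partitioning $[0,\vali]$ into fine subintervals and summing the sandwich inequalities over consecutive partition points bounds $\utili(\vali) - \utili(0)$ between the lower and upper Riemann sums of the monotone function $\alloci(\cdot)$, both of which converge to $\int_0^{\vali}\alloci(z)\,dz$. Hence $\utili(\vali) = \utili(0) + \int_0^{\vali}\alloci(z)\,dz$. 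Substituting $\utili(\vali) = \vali\alloci(\vali) - \pricei(\vali)$ and $\utili(0) = -\pricei(0)$ and rearranging gives~(b). (Alternatively, one can observe that $\utili(\vali) = \max_{s}[\vali\alloci(s) - \pricei(s)]$ is a supremum of affine functions of $\vali$, hence convex with derivative $\alloci(\vali)$ wherever $\alloci$ is continuous, and then integrate; the Riemann-sum argument avoids appealing to convex-analytic facts.)

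The only delicate point is the passage from the pairwise inequalities to the integral identity in~(b), and the monotonicity established in~(a) is precisely what makes the Riemann sums squeeze — so the two parts must be proved in this order. Note that interim individual rationality plays no role in the identity itself; it only fixes the customary normalization and the sign of $\pricei(0)$, so the statement holds as written with $\pricei(0)$ as a free additive constant.
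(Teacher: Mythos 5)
The paper cites this lemma to Myerson (1981) and does not reproduce a proof, so there is no in-paper argument to compare against; your proposal is the standard one from the literature. Your argument is correct: the two IC constraints give the sandwich $(\vali-\vali')\alloci(\vali') \le \utili(\vali)-\utili(\vali') \le (\vali-\vali')\alloci(\vali)$, from which (a) follows by comparing the ends, and (b) follows by telescoping the sandwich over a fine partition of $[0,\vali]$ and using Riemann integrability of the monotone function $\alloci(\cdot)$ (established in (a)); substituting $\utili(0)=-\pricei(0)$ then rearranges to the payment identity. Your observations about the necessary ordering of (a) before (b) and about individual rationality being irrelevant to the identity itself are also accurate.
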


\paragraph{Virtual valuations.}
\citet{mye-81} defined {\em virtual valuations} and showed that the
virtual surplus of an agent is equal to her expected payment.  For
$\val\sim\dist$, this {\em virtual valuation for payment} is:
\begin{align}
\label{eq:virtual-value-for-payment}
\virt(\vali) &= \vali - \tfrac{1-\dist(\vali)}{\dens(\vali)}.
\end{align}

\begin{lemma}\citep{mye-81}
\label{l:vv} 
In a Bayesian incentive-compatible mechanism with allocation rule
$\allocs(\cdot)$, the expected payment of an agent equals her expected
virtual surplus:
%\begin{align*}
%\label{eq:expected-payment=expected-virtual-surplus}
$\expect[\vals] {\pricei(\vals)} = \expect[\vals] {\virt(\vali)\,\alloci(\vals)}.$
%\end{align*}
\end{lemma}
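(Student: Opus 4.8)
The plan is to obtain this as a direct consequence of the payment identity in \lref{l:ic}(b), normalized so that $\pricei(0) = 0$, by taking expectations and exchanging an order of integration. First I would reduce to a single-agent statement: since $\alloci(\vali) = \expect[\valsmi]{\alloci(\vali,\valsmi)}$ and, analogously, $\pricei(\vali) = \expect[\valsmi]{\pricei(\vali,\valsmi)}$, the tower property gives $\expect[\vals]{\pricei(\vals)} = \expect[\vali]{\pricei(\vali)}$ and $\expect[\vals]{\virt(\vali)\,\alloci(\vals)} = \expect[\vali]{\virt(\vali)\,\alloci(\vali)}$, where the outer expectation is over $\vali \sim \dist$. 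So it suffices to show $\expect[\vali]{\pricei(\vali)} = \expect[\vali]{\virt(\vali)\,\alloci(\vali)}$.

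Next I would substitute \lref{l:ic}(b), obtaining $\expect[\vali]{\pricei(\vali)} = \expect[\vali]{\vali\,\alloci(\vali)} - \expect[\vali]{\int_0^{\vali}\alloci(z)\,dz}$. The first term is $\int \vali\,\alloci(\vali)\,\dens(\vali)\,d\vali$. The second term is the double integral $\int \dens(\vali)\bigl(\int_0^{\vali}\alloci(z)\,dz\bigr)d\vali$, and the crux of the argument is to swap the order of integration and rewrite it as $\int \alloci(z)\bigl(\int_z^{\infty}\dens(\vali)\,d\vali\bigr)dz = \int \alloci(z)\,(1-\dist(z))\,dz$. Combining the two terms collapses the expression to $\int \alloci(\vali)\bigl(\vali - \tfrac{1-\dist(\vali)}{\dens(\vali)}\bigr)\dens(\vali)\,d\vali$, which by the definition of the virtual valuation (\eqnref{eq:virtual-value-for-payment}) is precisely $\expect[\vali]{\virt(\vali)\,\alloci(\vali)}$; undoing the initial conditioning then yields the stated identity.

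The only genuine technical point --- and hence the part to be careful about --- is the interchange of integration. Because $\alloci(\cdot) \in [0,1]$ and $\dens \geq 0$, the integrand $\dens(\vali)\,\alloci(z)\,\mathbf{1}[\,z \leq \vali\,]$ is nonnegative, so Tonelli's theorem licenses the swap with no integrability hypothesis, and the same nonnegativity makes the behavior at the endpoints of the support of $\dist$ harmless. (Finiteness of $\expect[\vali]{\pricei(\vali)}$, should one want it, follows from individual rationality, which bounds $\pricei(\vali) \leq \vali$.) Everything else is bookkeeping.
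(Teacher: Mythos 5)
The paper itself does not prove Lemma~\ref{l:vv}; it simply cites \citet{mye-81}. Your argument is the standard Myerson computation and it is correct: reduce to the single-agent interim quantities by the tower property, substitute the payment identity from Lemma~\ref{l:ic}(b) with the normalization $\pricei(0)=0$, and apply Tonelli to swap the order of integration so that $\expect[\vali]{\int_0^{\vali}\alloci(z)\,dz}=\int \alloci(z)\,(1-\dist(z))\,dz$, which collapses the expression to $\int \alloci(\vali)\,\virt(\vali)\,\dens(\vali)\,d\vali$ via the definition in~\eqref{eq:virtual-value-for-payment}. Your side remarks (nonnegativity justifying Tonelli; individual rationality bounding the payment) are also the right ones, and the normalization $\pricei(0)=0$ is consistent with the paper's stated convention. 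Nothing to add.
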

The notion of virtual valuations applies generally to linear
objectives.  By substituting virtual values for payments into the
objective~\eqref{eq:objective} we arrive at a formula for {\em general
  virtual values}: 
%\begin{align*}
$\marg(\vali) = (\valweight+\priceweight)\vali -
\priceweight \tfrac{1 - \disti(\vali)}{\densi(\vali)}.$
%\end{align*}  
For the
objective of residual surplus, i.e., the sum of the agent utilities,
{\em virtual values for utility} are given by: 
\begin{align}
\label{eq:virtual-value-for-utility}
\marg(\vali) &=
\tfrac{1-\disti(\vali)}{\densi(\vali)}.
\end{align}

The revenue-optimal mechanism for a given distribution is the one that
maximizes the virtual surplus for payment subject to feasibility and
monotonicity of the allocation rule.
%Analogously, the residual-surplus-optimal mechanism is the
%one that maximizes the virtual surplus for utility.  
Analogously, optimal mechanisms for general linear objectives are
precisely those that maximize the expected (general) virtual surplus
subject to feasibility and monotonicity of the allocation rule.
%
% ADD PARAGRAPH HERE IF SPACE!
%
Unfortunately, choosing $\allocs$ to maximize $\sum_i \marg(\vali)
\alloci$ for each valuation profile $\vals$ does not generally result
in a monotone allocation rule.  When $\marg(\cdot)$ is not monotone
increasing, an increase in an agent's value may decrease her virtual
value and cause her to be allocated less frequently.  Notice that
under the standard ``monotone hazard rate'' assumption the virtual
value function for utility $\marg(\val) =
\frac{1-\disti(\val)}{\densi(\val)}$ is monotone {\em in the wrong
  direction}.

\paragraph{Ironing.}
%The fifth step of Bayesian optimal mechanism design is in {\em
%ironing} the virtual valuations to get much more convenient quantity
%for optimization.  The goal of ironing it to end up with an {\em
We next generalize the ``ironing'' procedure of \citet{mye-81} that
transforms a possibly non-monotone virtual valuation function into an
{\em ironed virtual valuation} function that is monotone; optimizing
ironed virtual surplus results in a monotone allocation rule.
Furthermore, the ironing procedure preserves the target objective, so
that an optimal allocation rule for the ironed virtual valuations is
equal to the optimal monotone allocation rule for the original virtual
valuations.

Given a distribution function $\dist(\cdot)$ with virtual valuation
function $\marg(\cdot)$, the {\em ironed virtual valuation function},
$\ironmarg(\cdot)$, is constructed as follows:
%by the following ironing 
%procedure results in a monotone non-decreasing function
%$\ironmarg(z)$.
\begin{enumerate}
\item For $q \in [0,1]$, define $h(q) = \marg(\dist^{-1}(q))$.
\item Define $H(q) = \int_0^q h(r) dr$.
\item Define $G$ as the convex hull of~$H$ --- the largest
convex function bounded above by~$H$ for all $q \in [0,1]$.
%to be the convex hull of $H(q)$.
\item Define $g(q)$ as the derivative of $G(q)$, where defined, 
  extended to all of $[0,1]$ by right-continuity. 
\item Finally, define $\ironmarg(z) = g(\dist(z))$.
\end{enumerate}
Convexity of $G$ implies that Step~4 of the ironing procedure is well
defined and that $g$, and hence~$\ironmarg$, is a monotone
non-decreasing function.

From the main theorem of \citet{mye-81}, maximizing the expectation of
a general linear objective subject to incentive compatibility is
equivalent to maximizing the expected ironed virtual surplus.
Different tie-breaking rules, however, can yield different optimal mechanisms.
In our symmetric settings, with i.i.d.\@ agents and the symmetric
feasibility constraint $\feasibles$ of $k$-unit auctions, it is
natural to consider symmetric optimal mechanisms.

\begin{theorem} %\citep{mye-81}
\label{t:main} 
For every general linear objective and distribution $\dist$, the
$k$-unit auction that allocates the units to the agents with the
highest non-negative ironed virtual values, breaking ties randomly and
discarding all leftover units, maximizes the expected value of the objective.
\end{theorem}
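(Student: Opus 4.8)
The plan is to follow the Myerson program, adapted to $k$-unit feasibility and general linear objectives. By the revelation principle it suffices to compare the stated mechanism against all Bayesian incentive compatible mechanisms, and by Lemma~\ref{l:ic} these are exactly the mechanisms with a monotone non-decreasing interim allocation rule $\alloci(\cdot)$ together with the payment identity; fixing $\pricei(0)=0$ makes interim utility $\utili(\vali)=\int_0^{\vali}\alloci(z)\,dz\ge0$, so interim individual rationality is automatic. Applying Lemma~\ref{l:vv} to the payment term of the objective~\eqref{eq:objective} and using linearity, the expected objective of any such mechanism equals the expected general virtual surplus $\sum_i\expect{\marg(\vali)\,\alloci(\vali)}$, where $\marg(\vali)=(\valweight+\priceweight)\vali-\priceweight\tfrac{1-\dist(\vali)}{\dens(\vali)}$ is the general virtual value and $\alloci(\cdot)$ is the interim allocation. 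So I only need to show that the stated rule maximizes expected general virtual surplus over feasible monotone allocation rules.

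The crux is the ironing inequality: for every monotone $\alloci(\cdot)$,
\[
\expect{\marg(\vali)\,\alloci(\vali)}\ \le\ \expect{\ironmarg(\vali)\,\alloci(\vali)},
\]
with equality whenever $\alloci(\cdot)$ is constant on every maximal interval of values on which $H>G$. I would prove this by substituting $q=\dist(\vali)$ to write the left-hand side as $\int_0^1 h(q)\,\widetilde\alloc_i(q)\,dq$ with $\widetilde\alloc_i(q)=\alloci(\dist^{-1}(q))$ non-decreasing, integrating by parts to get $H(1)\widetilde\alloc_i(1)-\int_0^1 H(q)\,d\widetilde\alloc_i(q)$, then invoking $H\ge G$ pointwise together with $H(0)=G(0)$, $H(1)=G(1)$, and $d\widetilde\alloc_i\ge0$ to replace $H$ by $G$; integrating back by parts recovers $\int_0^1 g(q)\,\widetilde\alloc_i(q)\,dq=\expect{\ironmarg(\vali)\,\alloci(\vali)}$. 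Equality forces $d\widetilde\alloc_i$ to be supported where $H=G$, i.e. $\widetilde\alloc_i$ flat on the ironed intervals. Summing over $i$ bounds $\sum_i\expect{\marg(\vali)\alloci}$ by $\sum_i\expect{\ironmarg(\vali)\alloci}$ for every feasible monotone rule.

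Next I would optimize the upper bound pointwise. Since $\ironmarg(\cdot)$ is monotone non-decreasing, for each profile $\vals$ the feasible allocation maximizing $\sum_i\ironmarg(\vali)\alloci$ subject to $\sum_i\alloci\le k$ is greedy: assign units to the agents with the largest ironed virtual values, include an agent only if her ironed virtual value is non-negative, and discard leftover units; random tie-breaking is the symmetric choice among these maximizers, so this is exactly the stated rule $\allocs^*$. Two checks finish the argument. First, $\allocs^*$ is a legitimate mechanism: raising $\vali$ weakly raises $\ironmarg(\vali)$ and hence weakly raises agent $i$'s chance of being selected (a standard monotone coupling over the other values and the tie-break coins), so its interim rule is monotone and its payments are pinned down by the payment identity. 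Second, $\allocs^*$ is tight for the ironing inequality: $\ironmarg$ is constant on each ironed interval, so within such an interval every agent sits in the same priority class and, under random tie-breaking, gets a common interim allocation probability, making $\widetilde\alloc_i$ flat there. Chaining the steps, for any feasible monotone rule $\allocs$,
\[
\expect{\textstyle\sum_i\marg(\vali)\alloci}\ \le\ \expect{\textstyle\sum_i\ironmarg(\vali)\alloci}\ \le\ \expect{\textstyle\sum_i\ironmarg(\vali)\alloci^{*}}\ =\ \expect{\textstyle\sum_i\marg(\vali)\alloci^{*}},
\]
which together with the reduction of the first paragraph proves that $\allocs^*$ maximizes the expected objective.

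I expect the ironing inequality together with its equality characterization to be the only real obstacle; everything else is bookkeeping. The secondary subtlety is the symmetric tie-breaking: one must verify that randomizing over which of the top agents win still yields a monotone interim rule that is flat on ironed intervals, and that the corresponding ex post critical-value payments realize the interim payment identity, so that $\allocs^*$ is a well-defined, individually rational mechanism rather than merely an interim allocation rule.
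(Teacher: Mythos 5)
Your proposal is correct and takes essentially the same approach as the paper: the paper states Theorem~\ref{t:main} without a written-out proof, citing the main theorem of \citet{mye-81} to reduce objective maximization to ironed-virtual-surplus maximization and noting only the tie-breaking issue, and your argument is precisely a careful reconstruction of that Myerson program (interim-allocation reduction via Lemmas~\ref{l:ic} and~\ref{l:vv}, the integration-by-parts ironing inequality with its equality characterization, pointwise greedy optimization under the $k$-unit constraint, and the monotone-coupling check that random tie-breaking yields a monotone interim rule flat on ironed intervals). No gap; you have simply supplied the details the paper delegates to the reference.
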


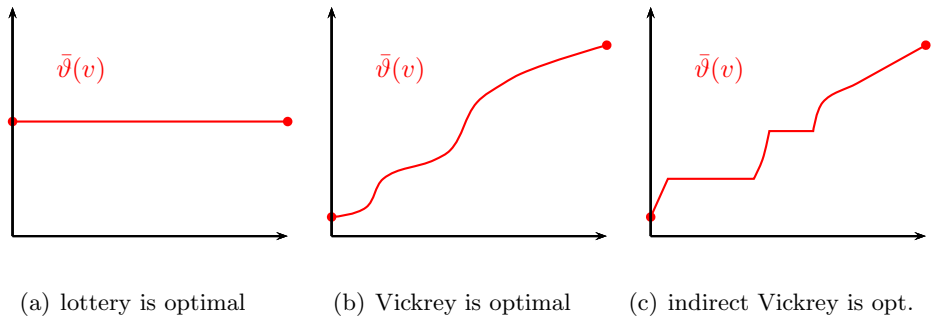
\begin{figure}[t]
\begin{center}
\psset{yunit=1in,xunit=.9in}
\subfigure[lottery is optimal]{\small
\begin{pspicture}(-.2,-.2)(1.6,1.3)

\rput[b](.4,.8){\red$\ironmarg(\val)$}

\psline[linecolor=red]{*-*}(0,.6)(1.6,.6)

\psaxes[labels=none,ticks=none]{->}(1.6,1.2)

\end{pspicture}}
%
% antiMHR
%
\subfigure[Vickrey is optimal]
{\small
\begin{pspicture}(-.2,-.2)(1.6,1.3)

\rput[b](.4,.8){\red$\ironmarg(\val)$}

\pscurve[linecolor=red]{*-*}(0,.1)(.2,.15)(.3,.3)(.6,.4)(.69,.45)(.845,.70)(1,.8)(1.1,.85)(1.6,1)

\psaxes[labels=none,ticks=none]{->}(1.6,1.2)

\end{pspicture}}
%
% nonMHR
%
\subfigure[indirect Vickrey is opt.]
{\small
\begin{pspicture}(-.2,-.2)(1.6,1.3)

\rput[b](.4,.8){\red$\ironmarg(\val)$}

\pscurve[linecolor=red]{*-*}(0,.1)(.1,.3)(.1,.3)(.6,.3)(.6,.3)(.65,.4)(.69,.55)(.69,.55)(.945,.55)(.945,.55)(1,.7)(1.2,.8)(1.3,.85)(1.6,1)

\psaxes[labels=none,ticks=none]{->}(1.6,1.2)

\end{pspicture}}
\end{center}
\caption{Ironed virtual value functions in the three distributional
  cases.  For the objective of residual surplus the cases correspond
  to (a) MHR distributions,
(b) anti-MHR distributions,
and 
(c) non-MHR distributions.
} 
\label{fig:ironed-residual-cases}
\end{figure}

\paragraph{Interpretation for residual surplus maximization.}
Consider the residual surplus objective, where $\marg(\val) =
\frac{1-\dist(\val)}{\dens(\val)}$, and the following three types of
distributions (Figure~\ref{fig:ironed-residual-cases}).  Monotone
hazard rate (MHR) distributions; e.g., uniform, normal, and
exponential; have monotone non-increasing $\marg(\val)$.  In this
case, ironing $\marg(\cdot)$ to be non-decreasing results in
$\ironmarg(\cdot) = \expect{\val}$, a constant function.  The optimal
(symmetric) mechanism is therefore a lottery that awards the $k$ units
to $k$ agents uniformly at random.  For distributions with a hazard
rate monotone in the opposite direction, henceforth {\em anti-MHR}
distributions, $\marg(\cdot)$ is non-negative and monotone
non-decreasing.  Power-law distributions, such as $\dist(z) = 1 -
1/z^c$ with $c > 0$ on $[1,\infty)$, are canonical examples.  In this
  case, the optimal mechanism awards the $k$ units to the $k$ highest
  valued agents, i.e., it is the $k$-Vickrey auction. Thus, as also
  observed by \citet{MM-92}, \citet{CK-06}, and \citet{con-07}, the
  optimal mechanism depends on whether or not the distribution is
  heavy-tailed.

The final case occurs when the distribution is neither MHR nor
anti-MHR, henceforth {\em non-MHR}.  Here, the ironed virtual
valuation function $\ironmarg(\cdot)$ is constant on some intervals
and monotone increasing on other intervals.  The optimal mechanism can
be described, for instance, as an indirect Vickrey auction where
agents are not allowed to bid on intervals where the ironed virtual
value is constant.  For example, consider the two-point distribution
with probability mass $\tfrac{1}{2}$ on~1 and~$\tfrac{1}{2}$ on~$h >
1$.  Provided~$h$ is sufficiently large, the
residual-surplus-maximizing mechanism allocates to a random high-value
agent or, if there are no high-value agents, to a random (low-value)
agent.  This final case is the most general, in that it subsumes both
the MHR and anti-MHR cases. Our general theory of platform design
necessitates understanding this non-MHR case in detail.

\Xcomment{

%% As discussed above the monotonicity of the residual valuation function
%% affects the nature of the optimal mechanism.  Call a distribution {\em
%%   regular} if the virtual residual valuation of the highest type is less than
%% the virtual residual valuation of the lowest type.  For distributions with
%% unbounded support, we look at the limit of the virtual residual valuation
%% function as the type approaches the high type.  Examples of regular
%% distributions include
%% \begin{itemize}
%% \item any finite support distribution (as the virtual residual
%% valuation of the high type is zero) and
%% \item any distribution satisfying the monotone hazard rate.
%% \end{itemize}

To interpret Theorem~\ref{t:main}, recall that the {\em hazard rate}
of distribution $\dist$ at $v$ is defined as
$\frac{\dens(v)}{1-\dist(v)}$.  The {\em monotone hazard rate} (MHR)
assumption is that the hazard rate is monotone non-decreasing and is a
standard assumption in mechanism design \citep[e.g.,][]{mye-81}.  We will
analyze this standard setting (MHR), the setting in which the hazard
rate is monotone in the opposite sense (anti-MHR), and the setting
where it is neither monotone increasing nor decreasing (non-MHR).
Notice that the hazard rate function is precisely the reciprocal
virtual valuation (for utility) function.  Our interpretation is
summarized by Figure~\ref{fig:ironed-residual-cases}.

%We first consider the MHR setting.  In this setting 
When the valuation distribution satisfies the MHR condition,
the ironed virtual
valuations (for utility) have a special form: they are constant with
value equal to their expectation. 
%(see below, Lemma~\ref{l:const}, with proof in Appendix~\ref{app:bayesian}).  
%(see Appendix~\ref{app:bayesian} for details).  

\begin{lemma}
\label{l:const}
For every distribution~$\dist$ that satisfies the monotone hazard rate
condition, the 
ironed virtual valuation (for utility) function is constant with
$\ironmarg(z) = \expval$, where $\expval$ denotes the expected value
of the distribution.
\end{lemma}

%\paragraph{Lemma~\ref{l:const}}
%{\em For every distribution~$\dist$ that satisfies the monotone hazard rate
%condition, the 
%ironed virtual valuation (for utility) function is constant with
%$\ironmarg(z) = \expval$, where $\expval$ denotes the expected value
%of the distribution.}

\begin{proof}
Apply the ironing procedure from Definition~\ref{d:ironing} to
$\marg(z)$.  The monotone hazard rate condition implies that
$\marg(z)$ is monotone non-increasing.  Since $\dist(z)$ is monotone
non-decreasing so is $\dist^{-1}(q)$ for $q \in [0,1]$. Thus, $h(q) =
\marg(\dist^{-1}(q))$ is monotone non-increasing.  The integral $H(q)$
of the monotone non-increasing function $h(q)$ is concave.  The convex
hull $G(q)$ of the concave function $H(q)$ is a straight line.  In
particular, $H(q)$ is defined on the range $[0,1]$, so $G(q)$ is the
straight line between $(0,H(0))$ and $(1,H(1))$.  Thus, $g(q)$ is the
derivative of a straight line and is therefore constant with value
equal to the line's slope, namely~$H(1)$.  Thus, $\ironmarg(z) = H(1)$.
It remains to show that $H(1) = \expval$.  By definition,
\begin{align*}
H(1) &= \int_0^1 \marg(\dist^{-1}(q))dq.\\
\intertext{Substituting $q = \dist(z)$, $dq = \dens(z)dz$, and the support of $\dist$ as $(a,b)$, we have}
H(1) &= \int_a^b \marg(z) \dens(z) dz.\\
\intertext{Using the definition of $\marg(\cdot)$ and the definition of expectation for non-negative random variables gives}
H(1) &= \int_a^b (1 - \dist(z)) dz = \expval.
\end{align*}
\end{proof}

Therefore, under MHR the mechanism that maximizes the
ironed virtual surplus is the one that {\em maximizes the ex ante expected
surplus}, without asking for bids and without any transfers.  
%This is
%quite a general observation that applies to more broadly to general
%single-parameter agent settings with independent but not necessarily
%identical distributions.  For our problem of allocating $k$-units to
%i.i.d.~agents, giving the item to any $k$ of the agents is optimal.
For example, in a multi-unit auction with i.i.d.\ agents, all agents
are equal ex ante, and thus any allocation rule that ignores the bids
and always allocates all $k$ units (charging nothing) is optimal.  
%The truthtelling payment rule is for all agents to pay zero.  Thus,
%the optimal mechanism is $k$-unit lottery

\begin{corollary}
\label{c:k-lottery}
For agents with i.i.d.\ valuations satisfying the MHR condition,
an optimal (symmetric) money-burning mechanism for allocating $k$
units is a $k$-unit lottery.
\end{corollary}

%\begin{corollary} 
%\label{c:k-lottery}
%When the agents' valuations satisfy the monotone hazard
%  rate condition and are i.i.d., the mechanism that optimizes residual
%  surplus is a $k$-unit lottery.
%\end{corollary}

%The preceding results actually generalize quite naturally to
%non-identical distributions and the more general single-parameter
%agent setting.  All subsequent result in this paper apply only to the
%case of allocating $k$-units in the symmetric case (e.g., i.i.d.).

Suppose the distribution satisfies the anti-MHR condition
which implies that
the virtual valuation (for utility) functions are monotone
non-decreasing.  The ironed virtual valuation
function is then identical to the virtual valuation function.  The
i.i.d.~assumption implies that all agents have the same virtual
valuation function, so the agents with the highest virtual valuations
are also the agents with the highest valuations.  Therefore, an
optimal money-burning
mechanism for allocating $k$ units assigns the units to the $k$ agents
with the highest valuations.\footnote{Virtual valuations
need not be strictly increasing, so two agents with different
valuations may have identical virtual valuations.  In the anti-MHR
case, it is permissible to break ties in favor of the
agent with the highest valuation.  In the notation of
Lemma~\ref{l:main}, $G = H$ throughout $[0,1]$, so the tie-breaking
rule does not affect the expected residual surplus.}
%  We will see shortly that in the
%non-MHR case it is not permissible to break ties in ironed virtual
%valuation in favor of the the agent with the highest valuation.} 
This is precisely the allocation rule used by the $k$-unit Vickrey
\citeyearpar{vic-61} auction, so the truthtelling payment rule is that
all winners pay the $k+1$st highest valuation.

\begin{corollary}\label{cor:anti}
For agents with i.i.d.\ valuations satisfying the anti-MHR condition,
an optimal (symmetric) money-burning mechanism for allocating $k$
units is a $k$-unit Vickrey auction.
\end{corollary} 

To optimally allocate $k$ units of an item in the non-MHR case, we
simply award the items to the agents with the largest ironed virtual
valuations (for utility). 
Ironed virtual valuations are constant over regions in which 
non-trivial ironing takes place, resulting in potential ties among
players with distinct valuations.
The allocation rule of an optimal mechanism cannot change over
ironed regions (Lemma~\ref{l:main}), so we cannot break ties among ironed
virtual valuations in favor of agents with higher valuations.  We can
break these ties arbitrarily (e.g., based on a predetermined total
ordering of the agents) or randomly.  In either case the optimal mechanism
can be described succinctly as an {\em indirect} generalization of the
$k$-unit Vickrey auction where the bid space is restricted to be
intervals in which the ironed virtual valuation function is strictly
increasing.  The $k$ agents with the highest bids win and ties are
broken in a predetermined way.  Payments in this mechanism are given
by Lemma~\ref{l:ic} and are described in more detail for this
case in the next section.

\begin{corollary} 
For agents with i.i.d.\ non-MHR valuations,
an optimal (symmetric) money-burning
mechanism for allocating $k$ units is an indirect $k$-unit Vickrey
auction: for valuations in the range $R = [a,b]$ 
and subrange $R' \subset R$ on which $\ironmarg(\val)$ has
positive slope, it is the indirect mechanism where agents bid $\bidi
\in R'$ and the $k$ agents with the highest bids win, with
ties broken uniformly at random.
\end{corollary}

}

\Xcomment{
OLD SUMMARY OF ABOVE
In the non-MHR case, we simply award items to the agents with the
largest ironed virtual valuations (for utility).  It is crucially
important, as implied by Lemma~\ref{l:main}, that we break ties
arbitrarily or by lottery and not by valuation.  This results in a
mechanism that is a combination of the Vickrey auction and a lottery.
More discussion of this case appears in
%to the next section and
Appendix~\ref{app:bayesian}.
}

\Xcomment{
It is fairly simple to construct examples where the ironed residual
valuation functions are not constant.  The most natural examples come
from tail heavy distributions.  

\paragraph{Example, a tail heavy distribution.} 
Consider the
i.i.d.~distribution with $\dist(z) = 1 - z^{-2}$ and $\dens(z) =
2z^{-3}$ with support $[1,\infty)$.  The virtual residual valuation
  function is $\marg(\val) = z/2$ and monotone.  The single-item auction
maximizing virtual residual surplus awards the item to the agent with
the highest valuation, i.e., it is the second-price auction.

\paragraph{Example, a finite support distribution.}
Consider two agents, i.i.d.~valuations in $[0,10]$ with density function 
$$
f(z) = \begin{cases} 1/4 &    z \in [0,2)\\
                     1/16 &  z \in [2,10].
\end{cases}
$$
It is possible to calculate the ironed residual virtual valuations as
$$\ironmarg(\val) = \begin{cases} 3 & \val \in [0,2) \\
                                 4 & \val \in [2,10].
                   \end{cases}
$$ We thus view the agents as either having a high type ($\vali \in
[2,10]$) or a low type ($\vali \in [0,2)$).  Notice that both agents
are low with probability $1/4$, but otherwise, at least one is high.
To maximize the ironed virtual residual surplus we simply allocate to
a low type in the first case and a high type in the latter case.
We break ties in favor of agent~1.  The expected ironed virtual
residual surplus of this allocation procedure is $1/4 \times 3 + 3/4
\times 4 = 3 + 3/4 = 15/4$.  The mechanism is this:

\begin{itemize}
\item 
    if $b_2 < 2$, (happens with prob.~$1/2$)\\
        allocate to agent 1, charge nothing. (residual surplus $= 7/2$)
\item if $b_2 \geq 2$ and $b_1 \geq 2$ (happens with prob.~$1/4$)\\
        allocate to agent 1, charge $\$2$. (residual surplus $= 4$)
\item if $b_2 \geq 2$ and $b_1 < 2$ (happens with prob.~$1/4$)\\ 
        allocate to agent 2, charge $\$2$. (residual surplus $= 4$)
\end{itemize}
The total expected residual surplus is $1/2 \times 7/2 + 1/4 \times 4 +
1/4 \times 4 = 15/4$ and, as expected, it is equal to the expected
ironed virtual residual surplus.

Notice that the optimal mechanism with no transfers, i.e., the
dictator mechanism, would always just pick agent~1.  The residual
surplus is the expected valuation of agent~1 which is $7/2$.  Notice
that $15/4 > 7/2$ as one would expect.
}

\section{The Performance Benchmark}
\label{sec:bm}\label{sec:benchmark}

In this section we leverage the characterization of Bayesian optimal
mechanisms from the preceding section to identify and characterize a
simple prior-free performance benchmark.  This constitutes the
second step of our approach to platform design.

The performance benchmark is derived as follows.  As discussed in
Section~\ref{sec:bayesian}, Bayesian optimal mechanisms are ironed
virtual surplus optimizers.  For $k$-unit environments, these
mechanisms simply select the $k$ agents with the highest non-negative
ironed virtual values.  Among these optimal mechanisms, the symmetric
one breaks ties randomly.  Denote the symmetric optimal mechanism for
distribution $\dist$ by $\Mye_\dist$.  Denote by $\Mye_\dist(\vals)$
the expected performance (over the choice of random allocation)
obtained by the mechanism $\Mye_\dist$ on the valuation profile
$\vals$.

\begin{definition} 
\label{d:benchmark}
The {\em performance benchmark} is the supremum of Bayesian optimal
mechanisms, 
%\begin{equation}\label{eq:bm}
$\bm(\vals) = \sup\nolimits_\dist \Mye_\dist(\vals).$
%\end{equation}
\end{definition}

For one interpretation of the definition of $\bm$, observe that
\begin{equation}\label{eq:bm2}
%$$
\expect[\vals]{\bm(\vals)} \geq \expect[\vals]{\Mye_\dist(\vals)}
\end{equation}
%$$
for valuation profiles drawn i.i.d.~from an arbitrary distribution
$\dist$.  
Thus, the approximation of the
performance benchmark $\bm$ implies the simultaneous
approximation of all 
symmetric Bayesian optimal mechanisms.
%\footnote{The restriction to
%  symmetric mechanisms is not only natural in our symmetric settings,
%  but also necessary for positive results.}
%%   To see this, suppose we
%%   redefine the benchmark~\eqref{eq:bm} using the supremum over {\em all}
%%   mechanisms that are Bayesian optimal for some i.i.d.\ distribution,
%%   including asymmetric ones.  Consider
%%   residual-surplus-maximization with a single item and $n$ bidders
%%   with valuations drawn i.i.d.\ from the exponential distribution
%%   $\dist(z) = 1 - e^{-z}$.  This is a MHR distribution and the results
%%   in Section~\ref{sec:bayesian} imply that every mechanism of the form
%%   ``give the item to agent $i$ for free'' is optimal for it.  Thus,
%%   these~$n$ mechanisms participate in the redefined benchmark.  The
%%   expected residual surplus of an optimal mechanism is~1 in this
%%   setting.  On the other hand, the expected value of the redefined
%%   benchmark is about~$\ln n$.  We conclude that the redefined
%%   benchmark is overly strong --- because it can effectively discriminate
%%   between bidders ex post, no platform mechanism
%%   approximates it well.  
%Of course, for a given distribution over values, no mechanism is
%better than the Bayesian optimal mechanism.  We conclude that
%platforms that are universally adopted with modest competitive
%advantage exist only when the reverse inequality of~\eqref{eq:bm2}
%approximately holds.  In such a case, the benchmark is approximately 

We now give a simple characterization of the performance benchmark for
general linear objectives by considering ex post outcomes of symmetric
Bayesian optimal mechanisms.  When $k$ units are available, a
symmetric Bayesian optimal mechanism serves these units to the $k$
agents with the highest non-negative ironed virtual values.  Ties,
which occur in ironed virtual surplus maximization when two (or more)
agents' values are mapped to same ironed virtual value, are broken
randomly.  Ex post, we can classify the agents into at most
three groups: those that win with certainty (winners), those that lose
with certainty (losers), and those that win with a common probability
strictly between~0 and~1 (partial winners).

%We phrase our characterization of ex post outcomes in terms of {\em
%  two-level lotteries}.

\begin{definition}
A {\em two-level $(\highprice,\lowprice)$-lottery}, denoted
$\Lottery_{\highprice,\lowprice}$, first serves agents with values
strictly more than~$\highprice$, then serves agents with values
strictly more than~$\lowprice$, while supplies last (breaking ties
randomly, as needed).  All agents with
values at most $\lowprice$ are rejected.  
\end{definition}

%Payments in the priority lottery are determined, as always for
%incentive compatible mechanisms, by Lemma~\ref{l:ic}.  
It will be useful to calculate explicitly, using Lemma~\ref{l:ic}, the
payments of a two-level lottery.  
Let $S$ and $T$ denote the sets of agents with value in the ranges
$(\highprice,\infty)$ and $(\lowprice,p]$, respectively.  Let $s =
  \setsize{S}$ and $t = 
  \setsize{T}$.  For simplicity,  assume that $s \leq k < s+t$, where
  $k$ is the 
  number of units available.  The payments are as follows.
\begin{enumerate}
\item Agents $i \in S$ are each allocated a unit and charged
\begin{align}
\label{eq:highpayment}
\pricei &= \highprice - (\highprice - \lowprice) \, \tfrac{k-s+1}{t+1}.
\end{align}
%$$
%\tfrac{k-s+1}{t+1} \lowprice + 
%\tfrac{s+t-k}{t+1} \highprice.
%$$
\item The remaining $k-s$ units are allocated uniformly at random to
  the $k-s$ agents $i \in T$, i.e., by lottery; each such winner pays
  $\pricei = \lowprice$.
\end{enumerate}

We characterize the performance benchmark for platform design for
general linear objectives in terms of two-level lotteries.

\begin{theorem}
\label{t:benchmark}
$\bm(\vals) = \sup\nolimits_\dist \Mye_\dist(\vals) =
\sup_{\highprice,\lowprice} \Lottery_{\highprice,\lowprice}(\vals)$.
\end{theorem}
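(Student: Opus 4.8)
The first equality, $\bm(\vals)=\sup_\dist\Mye_\dist(\vals)$, is simply Definition~\ref{d:benchmark}, so the content is the second equality, which I would prove as two inequalities holding pointwise in $\vals$. The underlying principle, already visible in the discussion preceding the theorem, is that the ex post behavior of an ironed-virtual-surplus maximizer on a fixed profile is \emph{exactly} that of a two-level lottery, and conversely that every two-level lottery is realized as such a maximizer for a suitably chosen distribution.

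For $\sup_\dist\Mye_\dist(\vals)\le\sup_{\highprice,\lowprice}\Lottery_{\highprice,\lowprice}(\vals)$, fix $\dist$ and $\vals$. By Theorem~\ref{t:main}, $\Mye_\dist$ ranks agents by the monotone ironed general virtual valuation $\ironmarg=\ironmarg_\dist$, serves the $k$ agents with the largest non-negative ironed values, and breaks ties uniformly at random. Let $\tau\ge0$ be the ironed value shared by the partial winners on this profile (if there are none, $\Mye_\dist$ is deterministic on $\vals$ and the argument below goes through with $\highprice=\lowprice$); since $\ironmarg$ is non-decreasing, $\{z:\ironmarg(z)=\tau\}$ is an interval, and I let $\highprice$ and $\lowprice$ be its right and left endpoints, perturbed slightly if necessary so no coordinate of $\vals$ equals either. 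Then the certain winners are exactly the agents with $\vali>\highprice$, the partial winners exactly those with $\vali\in(\lowprice,\highprice]$, and the rest are rejected, so the allocation of $\Mye_\dist$ on $\vals$ coincides with that of $\Lottery_{\highprice,\lowprice}$. Both mechanisms are dominant-strategy incentive compatible with monotone allocation rules, so each agent's ex post payment is the threshold payment $\vali\alloci(\vals)-\int_0^{\vali}\alloci(z,\valsmi)\,dz$ determined by its own allocation curve; since lowering $\vali$ crosses the same two thresholds $\highprice,\lowprice$ under both mechanisms, the payments, and hence the realized objective values, agree --- this is exactly the computation behind \eqref{eq:highpayment}. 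Thus $\Mye_\dist(\vals)=\Lottery_{\highprice,\lowprice}(\vals)$, and taking the supremum over $\dist$ gives the inequality.

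For the reverse inequality I fix $\lowprice<\highprice$ and exhibit a distribution $\dist$ with $\Mye_\dist=\Lottery_{\highprice,\lowprice}$ as mechanisms, so in particular $\Mye_\dist(\vals)=\Lottery_{\highprice,\lowprice}(\vals)$ and the supremum over $\dist$ dominates. It suffices to find $\dist$ whose general virtual valuation $\marg(\cdot)$ is the increasing step function equal to some $\tau_0<0$ below $\lowprice$, to some $\tau_1\in(0,\tau_2)$ on $(\lowprice,\highprice]$, and to $\tau_2$ above $\highprice$: then $H$ is already convex, no ironing occurs, $\ironmarg=\marg$, and the ironed-surplus maximizer serves agents above $\highprice$ first (random ties), then runs a lottery on $(\lowprice,\highprice]$ for the leftover units, and rejects everyone below $\lowprice$ --- precisely $\Lottery_{\highprice,\lowprice}$. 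On each piece the requirement $\marg(\val)=\tau_i$ is a first-order relation for the hazard rate $\dens/(1-\dist)$ with an explicit solution (piecewise exponential when $\valweight+\priceweight=0$, as for residual surplus, and piecewise power otherwise); gluing $\dist$ continuously at $\lowprice$ and $\highprice$ and normalizing yields a genuine continuous distribution with the desired $\marg$.

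The one delicate point --- and the step I expect to be the real obstacle --- is producing the rejection region in this construction. When $\priceweight<0$ (for instance residual surplus, where $\marg=\tfrac{1-\dist}{\dens}$ is non-negative everywhere), $\marg$ cannot be made negative near the bottom of the support, so a two-level lottery with $\lowprice$ strictly inside the value range is not literally Bayesian optimal. I would dispose of this by either (i) when $\valweight+\priceweight<0$, using a density that blows up near the left endpoint of the support to push $\marg$ strictly below zero there, or (ii) otherwise, observing that for these objectives discarding a unit is never beneficial, so that $\sup_{\highprice,\lowprice}\Lottery_{\highprice,\lowprice}(\vals)$ is already attained with $\lowprice$ at the bottom of the value range --- a case the two-piece version of the construction covers. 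The remaining bookkeeping --- boundary values of $\ironmarg$, the degenerate regimes $s\ge k$ and $s+t\le k$ in the definition of $\Lottery_{\highprice,\lowprice}$, and passing from generic profiles to all $\vals$ by continuity of both sides in $(\highprice,\lowprice)$ --- is routine.
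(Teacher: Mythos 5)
Your first inequality and your construction for objectives where the virtual value can go negative track the paper's argument closely: the ex post outcome of an ironed-virtual-surplus maximizer is a two-level lottery (take $\highprice,\lowprice$ as the thresholds of the ironed interval containing the partial winners), and conversely a three-piece step virtual valuation realizes a given two-level lottery as a Bayesian optimum. That part is fine. The gap is in your handling of the residual-surplus case, and the ``delicate point'' you flagged is indeed the crux, but your proposed fix (ii) is false.

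You claim that when discarding a unit is never beneficial, $\sup_{\highprice,\lowprice}\Lottery_{\highprice,\lowprice}(\vals)$ is attained with $\lowprice$ at the bottom of the value range, so a two-piece construction with no rejection region suffices. That is not true. Take $n=3$, $k=1$, $\vals=(10,5,0.01)$ with the residual surplus objective. Then $\Lottery_{5,\,0.01}(\vals)$ has $S=\{1\}$, $T=\{2\}$, $s=t=1$, so agent~$1$ wins at price $5-(5-0.01)\cdot\tfrac12=2.505$, giving residual surplus $7.495$. Dropping $\lowprice$ to $0$ adds agent~$3$ to $T$, raising agent~$1$'s price to $5-5\cdot\tfrac13\approx3.33$ and cutting the surplus to about $6.67$; and one checks that the best choice with $\lowprice=0$ is $\Lottery_{0.01,\,0}(\vals)=\tfrac12(9.99+4.99)=7.49<7.495$. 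So the supremum genuinely requires a positive $\lowprice$: expelling a low-valued bidder from the bottom pool raises the effective competition for the certain winner and lowers her threshold payment, and this gain can outweigh the small $\lowprice$ charged to the partial winners. Hence the two-piece construction cannot realize the supremum, and your option (ii) does not close the case. (Option (i), the blowing-up density, also does not apply here: residual surplus has $\valweight+\priceweight=0$, so $\marg=\tfrac{1-\dist}{\dens}\geq 0$ regardless of the density.)

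The paper's resolution is different and worth noting. For objectives with non-negative virtual values, the paper keeps the three-piece construction but sets the bottom level of the step function to a small \emph{positive} constant (e.g.\ $\ironmarg(\val)=1/2$ for $\val\leq\lowprice$ instead of $-1$). The resulting Bayesian optimum prioritizes the bottom agents last; whenever $s+t\geq k$ it reproduces $\Lottery_{\highprice,\lowprice}$ exactly on $\vals$ (the payment thresholds still sit at $\lowprice$ and $\highprice$), and when $s+t<k$ it additionally serves some low-valued agents at price essentially zero, which only increases the residual surplus. Either way one gets $\Mye_\dist(\vals)\geq\Lottery_{\highprice,\lowprice}(\vals)$, which is all the $\geq$ direction needs. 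This is the step you should replace options (i)/(ii) with.
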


\begin{proof} 
The outcome of ironed virtual surplus
maximization is equivalent to a $k$-unit
$(\highprice,\lowprice)$-lottery.  To see this, consider an ironed
virtual valuation function $\ironmarg$ and a valuation profile
$\vals$.  Set $\highprice$ to be the infimum bid that the
highest-valued agent can make and be a winner (possibly larger than
the agent's value), and
%win with probability one (th 
%may be above the high-valued agent's value).  
$\lowprice$ to be the infimum bid that a partial winner can make and
remain a partial winner (or $\highprice$ if there are no partial
winners).  The two mechanisms have the same outcome on profile $\vals$.
Conversely, every $(\highprice,\lowprice)$-lottery arises in ironed
virtual surplus maximization with respect to some i.i.d.\
distribution, for example
%From the above discussion we are able to characterize the performance
%benchmark, which quantifies over all distributions, in terms of
%priority lotteries which quantify only over $\highprice$ and
%$\lowprice$.  One direction of the characterization comes from our
%initial construction of priority lotteries as ex post outcomes, the
%other direction of the characterization comes from the fact that
%priority lotteries can be described as ironed virtual surplus
%optimizers 
with $\ironmarg(\val) = 2$ for $\val \in (\highprice,\infty)$,
$\ironmarg(\val) = 1$ for $\val \in (\lowprice,\highprice]$, and
  $\ironmarg(\val) = -1$ for $\val \leq \lowprice$.\footnote{For
    objectives like residual surplus where the virtual values are
    always non-negative, set $\ironmarg(\val) = 1/2$ instead of $-1$
    for $\val \leq q$.  See the construction in
    Appendix~\ref{app:distribution-construction} for details.}  
\end{proof}

We conclude with a simple but useful observation: The values of
$\highprice$ and $\lowprice$ that attain the supremum in
Theorem~\ref{t:benchmark} must each either be zero, infinity, or an
agent's value.  Observe that the objective $\sum_i \valweight \vali \alloci +
\priceweight \pricei$ is linear in payments.  If $\lowprice$ or
$\highprice$ is not in the valuation profile, then it can either be
increased or decreased without decreasing the objective.  For example,
lowering $\highprice$ or $\lowprice$ without changing the allocation
increases residual surplus.

%$\priceweight$ in the 
%linear objective 
%is either positive or negative and this sign would suggest
%that the objective could be increased while keeping the allocation the
%same by either lowering or raising the priority levels.  

%the residual surplus objective is increased by lowering the priority
%levels while keeping the allocation the same.

\section{Residual Surplus}\label{sec:worst}
\label{sec:residual-surplus}

In this section we consider platform design for the objective of
residual surplus.  We consider separately the $n=2$ agent case
and the general $n>2$ agent case.  For $n=2$ agents (and a single
item) we completely 
execute our template for platform design by reinterpreting the
benchmark, giving a platform mechanism that is universally adopted
with competitive advantage~$4/3$, and proving that no platform
mechanism is universally adopted with a smaller competitive
advantage.  The platform mechanism that achieves this bound is 
neither a standard auction nor a mixture over standard auctions, where
by ``standard'' we mean a symmetric Bayesian-optimal mechanism with
respect to some valuation distribution.

For every number $n>2$ of agents and $k \ge 1$ of items, we give a
heuristic platform that guarantees universal adoption with a constant
competitive advantage (independent of~$k$, $n$, and the support of the
valuations).  This platform is not a mixture of standard auctions, and
we show in Appendix~\ref{app:standard-auction-lb} that no such mixture
is universally adopted with any finite competitive advantage (as $n
\rightarrow \infty$).  This heuristic mechanism identifies properties
of good platforms and is a proof-of-concept that good platforms exist.

\subsection{Single-unit Two-agent Platforms}
\label{subsec:n=2}

We now execute the framework for platform design for two agents, a
single unit, and the objective of residual surplus.  Bayesian optimal
mechanisms and our benchmark are characterized in
Sections~\ref{sec:bayesian} and~\ref{sec:benchmark}, respectively; for
two agents and a single item, the benchmark takes a simple form.

There are only two relevant $(\highprice,\lowprice)$-lotteries for the
performance benchmark, the degenerate $\highprice=\lowprice =0$
lottery, and the $\highprice = \valith[2]$ and $\lowprice = 0$
lottery; here $\valith[1]$ and $\valith[2]$ denote the highest and
second-highest agent values, respectively.  From
equation~\eqref{eq:highpayment}, the residual surpluses of these
two-level lotteries are $\tfrac{\vali[1] + \vali[2]}{2}$ (i.e., the
average value) and $\valith[1] - \tfrac{\valith[2]}{2}$, respectively.
Thus,
\begin{align}
\label{eq:benchmark-n=2}
\bm(\vals) &= \max \{ \tfrac{\vali[1] + \vali[2]}{2}, \valith[1] - \tfrac{\valith[2]}{2}\}.
\end{align}
This benchmark is depicted in Figure~\ref{fig:benchmark-n=2}.

We now turn to the problem of designing a platform mechanism that is
universally adopted with a minimal competitive advantage.  As
mentioned above, the lottery is adopted with a competitive advantage
of~2.  A natural approach to platform design is to randomly mix over
two platforms that are good in different settings.  
%As noted in the Section~\ref{sec:intro}, 
For example, the Vickrey auction is good on the valuation
profile $\vals = (1,0)$, whereas the lottery is good on the valuation
profile $\vals = (1,1)$.  Considering only these two valuation
profiles (where $\bm(\vals) = 1$), choosing the Vickrey auction with
probability~$1/3$ and the lottery with probability~$2/3$ balances the
competitive advantage necessary for adoption of the platform for each
profile at~$3/2$.  In fact, a routine calculation shows that this
mixture is universally adopted with competitive advantage~$3/2$.  This
platform mechanism is, however, not optimal.

One approach to solving for the optimal platform mechanism is to look
for a mechanism that achieves the same approximation factor to the
benchmark for every valuation profile.\footnote{Our
  optimal platform for monopoly pricing in
  Section~\ref{sec:monop-pricing} also exhibits this property.}
Inspecting the benchmark (Figure~\ref{fig:benchmark-n=2}), we conclude
that an auction with identical approximation factor on all inputs must
have a discontinuity in behavior only where the ratio between the high
and low value is~2.  Importantly, there should be no discontinuity in
behavior when the values are equal, that is, the optimal platform
should never mix over the Vickrey auction.  These observations suggest
the following parameterized class of auctions.

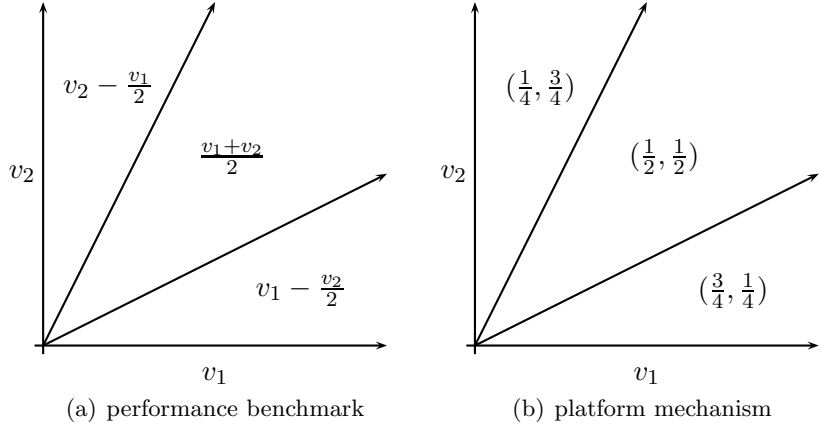
\begin{figure}
\psset{unit=.9in}
\begin{center}
\subfigure[performance benchmark]{\label{fig:benchmark-n=2}%
\begin{pspicture}(-.2,-.2)(2.2,2)

%axes
\psline{->}(-.05,0)(2,0)
\psline{->}(0,-.05)(0,2)

\psline{->}(0,0)(2,1)
\psline{->}(0,0)(1,2)

\rput(.38,1.5){$\vali[2]-\tfrac{\vali[1]}{2}$}
\rput(1.1,1.1){$\tfrac{\vali[1]+\vali[2]}{2}$}
\rput(1.5,.32){$\vali[1]-\tfrac{\vali[2]}{2}$}

\rput[l](-.2,1){$\vali[2]$}
\rput[B](1,-.2){$\vali[1]$}

\end{pspicture}}
\subfigure[platform mechanism]{\label{fig:platform-n=2}
\begin{pspicture}(-.2,-.2)(2.2,2)

%axes
\psline{->}(-.05,0)(2,0)
\psline{->}(0,-.05)(0,2)

\psline{->}(0,0)(2,1)
\psline{->}(0,0)(1,2)

\rput(.38,1.5){$(\tfrac14,\tfrac34)$}
\rput(1.1,1.1){$(\tfrac12,\tfrac12)$}
\rput(1.5,.32){$(\tfrac34,\tfrac14)$}

\rput[l](-.2,1){$\vali[2]$}
\rput[B](1,-.2){$\vali[1]$}

\end{pspicture}}
\end{center}
\caption{The performance benchmark~\eqref{eq:benchmark-n=2} and
  optimal platform mechanism 
  for the single-item, two-agent, residual-surplus-maximization problem.
  The positive quadrant is partitioned by the lines $\vali[1] =
  2\vali[2]$ and $2\vali[1] = \vali[2]$.  The allocation rule of the
  platform mechanism is given as $(\alloci[1],\alloci[2])$.}
\label{fig:benchmark-platform-n=2}
\end{figure}

\begin{definition} 
The two-agent single-item {\em ratio auction with
ratio $\ratio \geq 1$ and bias $\bias \in [1/2,1]$} allocates the good
according to a fair coin if the agent values are within a factor
$\ratio$ of each other and, otherwise, according to a biased coin with
probability $\bias$ in favor of 
the high-value agent.\footnote{Appropriate payments can be derived by
  reinterpreting the ratio auction as a distribution over weighted
  Vickrey auctions; see also the proof of
  Lemma~\ref{l:optimal-platform-n=2}.} 
\end{definition}

The Vickrey auction and the lottery are special cases of the ratio auction.
With bias $1/2$ the ratio auction is a lottery
(for every ratio); with bias $\bias = 1$ and ratio $\ratio = 1$ it
is the Vickrey auction.  We next show that the optimal two-agent
single-item platform for 
residual surplus is the ratio auction with ratio $\ratio
= 2$ and bias $\bias = 3/4$.  The allocation probabilities of this
auction are depicted in Figure~\ref{fig:platform-n=2}.  It is adopted
with competitive advantage~$4/3$.

%% The optimal platform must approximate the benchmark at every
%% valuation profile.  Just as the benchmark transitions on the lines
%% $\vali[1] = 2\vali[2]$ and $2\vali[1] = \vali[2]$
%% (Figure~\ref{fig:benchmark-platform-n=2}), so should the optimal
%% mechanism.  In fact, the mechanism that does this is the convex
%% combination between a lottery and weighted Vickrey auctions.  In
%% particular, it selects $w_1 = 1$ and $w_2$ uniformly from
%% $\{0,1/2,2,\infty\}$ and allocates the item to the agent $i$ who
%% maximizes $w_i\vali$ (Figure~\ref{fig:benchmark-platform-n=2}).  
%The
%following lemma can be observed via a case analysis.

\begin{lemma}
\label{l:optimal-platform-n=2}
The ratio auction with ratio $\ratio = 2$ and bias $\bias = 3/4$
is universally adopted with competitive advantage~$4/3$.
\end{lemma}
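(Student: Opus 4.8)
The plan is to verify the inequality $\Lottery$-benchmark $\leq \tfrac43 \cdot (\text{ratio auction payoff})$ pointwise on every valuation profile $\vals = (\valith[1], \valith[2])$, exploiting scale-invariance to reduce to a one-parameter family. Both the benchmark $\bm(\vals) = \max\{\tfrac{\valith[1]+\valith[2]}{2},\, \valith[1] - \tfrac{\valith[2]}{2}\}$ from \eqref{eq:benchmark-n=2} and the residual surplus of the ratio auction are homogeneous of degree one in $\vals$, so I may normalize $\valith[1] = 1$ and let $t = \valith[2] \in [0,1]$ be the single free variable. The positive quadrant is partitioned by the line $\valith[1] = 2\valith[2]$, i.e.\ $t = 1/2$, which is exactly where the ratio auction (with $\ratio = 2$) switches from the fair coin to the $\bias = 3/4$ biased coin, and also where the two terms in the benchmark cross.

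First I would write down the residual surplus of the ratio auction explicitly using the payment rule obtained by viewing it as a mixture of weighted Vickrey auctions (as the definition's footnote and the promised argument indicate): allocating to the high agent with probability $\bias$ yields residual surplus (value minus payment, summed over the winner) that is a concrete affine function of $t$ on each of the two regions. Concretely, for $t \in [1/2, 1]$ the agents are within a factor of two, the fair coin is used, and the residual surplus works out to $\tfrac{1+t}{2} - (\text{expected payment})$; for $t \in [0, 1/2]$ the biased coin with weight $3/4$ toward the high agent is used. I would compute these two affine expressions, and likewise split the benchmark: $\bm = \tfrac{1+t}{2}$ on $[1/2,1]$ and $\bm = 1 - \tfrac t2$ on $[0,1/2]$.

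The core of the proof is then four routine verifications: on each of the two intervals, the ratio $\bm(\vals) / (\text{ratio auction payoff})$ is a ratio of affine functions of $t$, hence monotone, so it is maximized at an endpoint of the interval; I check the value at $t = 0$, $t = 1/2$, and $t = 1$ and confirm that all three equal (or are at most) $\tfrac43$. The endpoints $t=0$ and $t=1$ correspond to the profiles $(1,0)$ and $(1,1)$ already flagged in the text as the extreme cases for Vickrey and the lottery respectively, and $t = 1/2$ is the kink; the design values $\ratio = 2$, $\bias = 3/4$ are precisely those that equalize the ratio at these breakpoints, so the $\tfrac43$ bound is tight at all three. Finally I would note that since the supremum defining $\bm$ in Theorem~\ref{t:benchmark} is attained by one of these two two-level lotteries, the pointwise bound against $\bm$ gives universal adoption with competitive advantage $\tfrac43$, as claimed.

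The main obstacle I anticipate is bookkeeping rather than conceptual: getting the payment rule of the ratio auction right on the region where the values differ by less than a factor of two. There the fair coin is used, but the incentive-compatible payment is not simply the loser's value (as in Vickrey) — it must be reverse-engineered from the allocation rule via the payment identity of Lemma~\ref{l:ic}, or equivalently from the weighted-Vickrey representation, and a sign or normalization slip here would throw off all four endpoint checks. Once the residual-surplus formulas on the two regions are pinned down correctly, the remaining algebra is mechanical.
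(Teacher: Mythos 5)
Your plan is sound and takes essentially the same route as the paper: both reduce the calculation to the weighted-Vickrey representation of the ratio auction (with $w_2$ uniform over $\{0,1/2,2,\infty\}$) and a case split on whether $\vali[1]$ exceeds $2\vali[2]$. The only difference is that the paper's direct averaging of the four weighted-Vickrey residual surpluses shows the ratio auction earns exactly $\tfrac{3}{4}\bm(\vals)$ on every profile, so your extra step of arguing monotonicity of the ratio of affine functions and checking endpoints turns out to be unnecessary scaffolding — you would discover the ratio is identically $\tfrac43$. Your flagged concern about the payment rule is well-placed, and the weighted-Vickrey route you propose is exactly how the paper dispenses with it.
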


\begin{proof}
The ratio auction (with ratio $\ratio$) can always be expressed as a
distribution over weighted Vickrey auctions, where $w_1 = 1$, $w_2$
is selected randomly from some distribution over the set
$\{0,1/\ratio,\ratio,\infty\}$, and the agent~$i$ that maximizes
$w_i\vali$ winning the item.  With bias $\bias = 3/4$, the
distribution over the set is uniform.  We calculate the auction's
approximation of the benchmark via simple case analysis.  the expected
residual surplus from the four choices of $w_2$ averages to
$\frac{1}{4}\left[\vali[1] + (\vali[1] - \frac{\vali[2]}{2}) +
  (\vali[2] - \frac{\vali[1]}{2}) + \vali[2]\right] = \frac{3}{4}
\tfrac{\vali[1] + \vali[2]}{2}$ when $\vali[1] \in
      [\vali[2]/2,2\vali[2]]$ and to $\frac{1}{4}\left[\vali[1] +\-
        (\vali[1] - \frac{\vali[2]}{2}) + (\vali[1] - 2\vali[2]) +
        \vali[2]\right] = \frac{3}{4}\left(\vali[1] -
      \frac{\vali[2]}{2}\right)$ when $\vali[1] > 2\vali[2]$.  The
      case where $\vali[1] < \vali[2]/2$ is symmetric.  In each case,
      the expected residual surplus is exactly $\frac{3}{4}
      \bm(\vals)$.
\end{proof}

We now show that the ratio auction with ratio $\ratio = 2$ and bias
$\bias = 3/4$ is an optimal platform; meaning, no platform is
universally adopted with competitive advantage less than~$4/3$.  We
first note that, for every distribution~$\dist$, the expected residual
surplus of the ratio auction with ratio $\ratio = 2$ and bias $\bias =
3/4$ is exactly $3/4$ times the expected value of the benchmark~$\bm$.
Of course, the Bayesian optimal auction for~$\dist$ is no worse.

\begin{corollary}\label{cor:43}
For every distribution $\dist$ and $n=2$ agents and $k=1$ item, the
expected benchmark is at most~$4/3$ times the expected residual
surplus of the optimal auction, that is, $\expect{\bm(\vals)} \leq
\frac{4}{3}\expect{\Mye_\dist(\vals)}.$
\end{corollary}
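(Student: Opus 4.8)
The plan is to deduce the corollary immediately from Lemma~\ref{l:optimal-platform-n=2} together with the Bayesian optimality of $\Mye_\dist$. The essential point is that the guarantee in (the proof of) Lemma~\ref{l:optimal-platform-n=2} is \emph{pointwise}: for every valuation profile $\vals$, the ratio auction with $\ratio = 2$ and $\bias = 3/4$ obtains expected residual surplus exactly $\tfrac34\,\bm(\vals)$. Taking expectations over $\vals$ drawn i.i.d.\ from $\dist$ then shows that the expected residual surplus of this ratio auction is exactly $\tfrac34\,\expect{\bm(\vals)}$.

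Next I would note that the ratio auction is a genuine mechanism that $\Mye_\dist$ dominates. As observed in the proof of Lemma~\ref{l:optimal-platform-n=2}, the ratio auction is a distribution over weighted Vickrey auctions; it is therefore (dominant-strategy, hence Bayesian) incentive compatible, individually rational, and symmetric. By Theorem~\ref{t:main}, the symmetric Bayesian optimal mechanism $\Mye_\dist$ maximizes the expected value of the residual-surplus objective over exactly this class of mechanisms when values are i.i.d.\ from $\dist$. Consequently $\expect{\Mye_\dist(\vals)} \geq \tfrac34\,\expect{\bm(\vals)}$, and rearranging gives $\expect{\bm(\vals)} \leq \tfrac43\,\expect{\Mye_\dist(\vals)}$, as claimed.

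I expect essentially no obstacle here: the argument is a two-line consequence of the preceding lemma and theorem. The only point requiring a word of care is the legitimacy of the comparison mechanism --- namely that the ratio auction is incentive compatible and individually rational so that $\Mye_\dist$ is indeed no worse --- and this is immediate from its representation as a mixture of weighted Vickrey auctions.
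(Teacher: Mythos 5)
Your argument is exactly the one the paper intends: Lemma~\ref{l:optimal-platform-n=2} gives the pointwise identity $\tfrac34\bm(\vals)$ for the ratio auction, you take expectations over $\vals \sim \dist^2$, and then invoke optimality of $\Mye_\dist$ (which is legitimate because the ratio auction is a mixture of weighted Vickrey auctions, hence symmetric, IC, and IR). Your extra paragraph justifying the comparison mechanism is simply an unpacking of the paper's ``Of course, the Bayesian optimal auction for~$\dist$ is no worse,'' so this is the same approach, just spelled out a bit more.
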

The following technical lemma exhibits a distribution~$\dist$ for
which the inequality in Corollary~\ref{cor:43} is tight.  Intuitively,
this distribution is the one with constant virtual value for utility.

\begin{lemma}
\label{l:lower-bound-n=2}
For the exponential distribution $\dist(z)= 1-e^{-z}$, $n=2$
agents, $k=1$ unit, the expected value of the benchmark is~$4/3$
times the expected residual surplus of the optimal auction, that is, 
$\expect{\bm(\vals)} = \frac{4}{3}\expect{\Mye_\dist(\vals)}$.
\end{lemma}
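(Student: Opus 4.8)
The plan is to evaluate both expectations in closed form and divide. For the denominator, observe that the exponential distribution $\dist(z)=1-e^{-z}$ has constant hazard rate, so by~\eqref{eq:virtual-value-for-utility} its virtual value for utility is $\marg(\val)=\tfrac{1-\dist(\val)}{\dens(\val)}\equiv 1$. By Lemma~\ref{l:vv}, applied to the linear objective of residual surplus, the expected residual surplus of any Bayesian incentive-compatible two-agent single-item mechanism with allocation rule $\allocs(\cdot)$ equals $\expect[\vals]{\marg(\vali[1])\alloci[1](\vals)+\marg(\vali[2])\alloci[2](\vals)}=\expect[\vals]{\alloci[1](\vals)+\alloci[2](\vals)}$, which is at most $k=1$, with equality precisely when the unit is always awarded. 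The symmetric lottery always awards the unit, and by the payment identity of Lemma~\ref{l:ic} it charges nothing, so $\Mye_\dist$ is the lottery, $\Mye_\dist(\vals)=\tfrac{\vali[1]+\vali[2]}{2}$ pointwise, and hence $\expect{\Mye_\dist(\vals)}=\expect{\val}=1$.

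For the numerator, I would rewrite the two-agent benchmark~\eqref{eq:benchmark-n=2} using $\max\{a,b\}=a+(b-a)^+$ with $a=\tfrac{\valith[1]+\valith[2]}{2}$ and $b=\valith[1]-\tfrac{\valith[2]}{2}$, which gives $\bm(\vals)=\tfrac{\valith[1]+\valith[2]}{2}+\tfrac12(\valith[1]-2\valith[2])^+$. By linearity of expectation, $\expect{\bm(\vals)}=\expect{\val}+\tfrac12\expect{(\valith[1]-2\valith[2])^+}=1+\tfrac12\expect{(\valith[1]-2\valith[2])^+}$. The remaining expectation is evaluated against the joint density $2e^{-(m+M)}$ of the order statistics $(\valith[2],\valith[1])=(m,M)$ on $0\le m\le M$: carrying out the $M$-integral over $[2m,\infty)$ first (substituting $u=M-2m$) yields $e^{-2m}$, and then $\int_0^\infty 2e^{-m}e^{-2m}\,dm=\tfrac23$. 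Thus $\expect{\bm(\vals)}=1+\tfrac13=\tfrac43$, and dividing by $\expect{\Mye_\dist(\vals)}=1$ gives the claimed identity.

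I do not expect a genuine obstacle: the argument reduces to one short order-statistics integral together with the standard fact that an MHR distribution makes the lottery residual-surplus-optimal. The points requiring a little care are the route establishing that $\Mye_\dist$ is the lottery --- the virtual-surplus identity above is self-contained and sidesteps the informal discussion following Theorem~\ref{t:main} --- and getting the order-statistic density and its change of variables right. As a sanity check, the conclusion is consistent with Corollary~\ref{cor:43}, which already gives $\expect{\bm(\vals)}\le\tfrac43\expect{\Mye_\dist(\vals)}$: in view of Lemma~\ref{l:optimal-platform-n=2} (the ratio auction with $\ratio=2$ and $\bias=3/4$ has expected residual surplus exactly $\tfrac34\expect{\bm(\vals)}$ for every $\dist$), proving this lemma amounts to exhibiting a distribution on which the Bayesian optimum does no better than that ratio auction, and indeed $\expect{\Mye_\dist(\vals)}=1=\tfrac34\cdot\tfrac43=\tfrac34\expect{\bm(\vals)}$.
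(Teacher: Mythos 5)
Your proof is correct, and it arrives at the same conclusion as the paper's via a recognizably different organization of the computation. For the numerator, the paper conditions on the smaller value $\valith[2]=\val$, writes $\valith[1]=\val+x$ with $x$ exponentially distributed (invoking memorylessness), evaluates the conditional expectation of the max as $\val+\tfrac12(1+e^{-\val})$, and then integrates out against the rate-2 density of the minimum. You instead decompose the benchmark additively as $\bm(\vals)=\tfrac{\valith[1]+\valith[2]}{2}+\tfrac12(\valith[1]-2\valith[2])^+$, get $\expect{\tfrac{\valith[1]+\valith[2]}{2}}=\expect{\val}=1$ for free by linearity (ordering does not change the sum), and reduce the work to a single double integral of $(\valith[1]-2\valith[2])^+$ against the joint order-statistic density $2e^{-(m+M)}$ on $\{0\le m\le M\}$. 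The decomposition makes the origin of $4/3$ transparent---$1$ is the lottery's value and $1/3$ is the benchmark's excess over the lottery---and avoids nested conditioning; the paper's conditioning route makes the memorylessness of the exponential do the bookkeeping instead. For the denominator, the paper simply cites the MHR discussion of Section~\ref{sec:bayesian}; your argument via Lemma~\ref{l:vv}---that constant virtual value~$1$ forces expected residual surplus to equal the expected number of allocations, hence at most $k=1$, with equality exactly for mechanisms that always allocate---is more self-contained and bypasses the ironing machinery entirely. Both computations are elementary and both are correct; neither has a gap.
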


\begin{proof}
Since the exponential distribution has a monotone hazard rate, a
 lottery maximizes the expected residual surplus (Section~\ref{sec:bayesian}).  The expected value of an
exponential random variable is~1 so 
$\expect{\Mye_\dist(\vals)} = \expect{\val} = 1$.

We now calculate the expected value of the benchmark $\bm(\vals)$
defined in equation~\eqref{eq:benchmark-n=2}.  Write the smaller value as $\val
= \valith[2]$ and the higher value as $x + \val = \valith[1]$ for $x
\geq 0$.  In terms of $\val$ and $x$ the benchmark is $\val +
\frac{x}{2}$ when $x \leq \val$ and $\frac{\val}{2} + x$ when $x \geq
\val$.  Therefore, the expectation of $\bm$ conditioned on $\val$ is
\begin{align*}
\expect{\bm(x+\val,\val) \given \val}
& = 
\int_0^{\val} \left(\val+\tfrac{x}{2} \right)\, e^{-x}\,dx
+ 
\int_{\val}^{\infty} \left(\tfrac{v}{2} + x\right)\,e^{-x}\,dx\\
& = 
\val(1-e^{-\val}) + \tfrac{1}{2} \left( 1 - (\val + 1)e^{-\val} \right)
+ \tfrac{\val}{2}e^{-\val} + (\val + 1)e^{-\val}\\
& = 
\val + \tfrac{1}{2} \left( 1 + e^{-\val} \right).\\
\intertext{The smaller value~$\valith[2] = \val$ is distributed
  according to an exponential distribution with rate~2.
Integrating out yields}
\expect{\bm(x + \val,\val)}
& = 
\int_{0}^{\infty} \left(\val + \tfrac{1}{2} + \tfrac{1}{2}e^{-\val}
\right)\, 2e^{-2\val}\,d\val\\
& = 
\tfrac{1}{2} + \tfrac{1}{2} + \int_0^{\infty} e^{-3\val}\,d\val
 =  \tfrac{4}{3}. \qquad\qquad\qquad\qedhere
\end{align*}
\end{proof}

%We conclude with two immediate corollaries.  First, since the expected
%performance of a Bayesian optimal mechanism is at least as large as the
%expected performance of every platform mechanism; 
%Lemma~\ref{l:optimal-platform-n=2} and
%Lemma~\ref{l:lower-bound-n=2} imply that for all distributions four
%thirds of the Bayesian optimal expected performance is at least the
%expected performance benchmark.  

For the setting of Lemma~\ref{l:lower-bound-n=2}, the optimal mechanism
has expected residual surplus $\frac{3}{4}\expect{\bm(\vals)}$.  Any
platform mechanism is only worse and, by the definition of
expectation, there must be a valuation profile $\vals$ where this
platform mechanism has residual surplus at most $\frac{3}{4}
\bm(\vals)$.

%Second, since the expected
%performance benchmark is four thirds the Bayesian optimal expected
%performance (which is at least four thirds the expected performance of
%any platform mechanism), there must be a valuation profile that
%provides at least this separation, and therefore no platform mechanism
%is universally adopted with competitive advantage less than four
%thirds. We conclude that the platform mechanism described above is
%optimal.

\begin{corollary}
For $n\geq 2$ agents, $k=1$ item, and the residual surplus objective,
no platform mechanism is universally adopted with competitive
advantage less than~$4/3$.
\end{corollary}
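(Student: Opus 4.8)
The plan is to reduce the case $n>2$ to the two‑agent case, which is already in hand: for $n=2$ the corollary is exactly the paragraph preceding it, since by Lemma~\ref{l:lower-bound-n=2} the optimal auction for the exponential distribution attains only $\tfrac34$ of the expected benchmark, so every two‑agent platform attains no more and therefore, by the definition of expectation, has $\mech(\vals)\le\tfrac34\bm(\vals)$ on some profile, i.e.\ competitive advantage at least $\tfrac43$. So fix $n>2$, let $\mech$ be a single‑item platform for $n$ agents that is universally adopted with competitive advantage $\beta$ — that is, $\mech(\vals)\ge\tfrac1\beta\bm(\vals)$ for every $\vals$ — and restrict attention to the padded profiles $\vals=(\vali[1],\vali[2],0,\dots,0)$.

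The reduction rests on two facts. First, the $n-2$ zero‑valued agents are invisible to both sides. On $\mech$'s side: by individual rationality together with non‑negativity of payments, an agent with value $0$ has utility exactly $0$ whether or not she is served, so the residual surplus of $\mech$ on $\vals$ equals that of the two‑agent single‑item mechanism $\mech'$ obtained from $\mech$ by hard‑wiring the reports of agents $3,\dots,n$ to $0$. On the benchmark side: by Theorem~\ref{t:benchmark}, $\bm(\vals)=\sup_{\highprice,\lowprice}\Lottery_{\highprice,\lowprice}(\vals)$, and every two‑level lottery rejects all agents with value at most $\lowprice\ge 0$, hence rejects all the zero‑valued agents, so the supremum collapses to $\max\{\tfrac{\vali[1]+\vali[2]}{2},\,\valith[1]-\tfrac{\valith[2]}{2}\}$, which is exactly the two‑agent benchmark of equation~\eqref{eq:benchmark-n=2}. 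Second, $\mech'$ is a genuine two‑agent single‑item mechanism: it inherits individual rationality, and — since the platform is a dominant‑strategy mechanism (as is implicit in the pointwise, prior‑free formulation of platform performance, and as holds for all the platforms we construct) — truthful reporting survives the hard‑wiring, so $\mech'$ is incentive compatible as well.

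Combining the two facts, $\mech'(\vali[1],\vali[2])=\mech(\vali[1],\vali[2],0,\dots,0)\ge\tfrac1\beta\,\bm(\vali[1],\vali[2],0,\dots,0)=\tfrac1\beta\max\{\tfrac{\vali[1]+\vali[2]}{2},\valith[1]-\tfrac{\valith[2]}{2}\}$ for all $\vali[1],\vali[2]\ge 0$, so $\mech'$ is a two‑agent single‑item platform universally adopted with competitive advantage at most $\beta$; by the $n=2$ case this forces $\beta\ge\tfrac43$. (Concretely: run $\mech'$ on the exponential distribution; since the exponential is MHR the optimal two‑agent auction is a lottery of expected residual surplus $\expect{\val}=1$, so $\expect{\mech'(\vals)}\le 1$, whereas $\expect{\mech'(\vals)}\ge\tfrac1\beta\expect{\bm(\vals)}=\tfrac1\beta\cdot\tfrac43$ by Lemma~\ref{l:lower-bound-n=2}; hence $\beta\ge\tfrac43$.) The one step needing real care is the benchmark identity $\bm(\vali[1],\vali[2],0,\dots,0)=\max\{\tfrac{\vali[1]+\vali[2]}{2},\valith[1]-\tfrac{\valith[2]}{2}\}$: one must be sure that no symmetric Bayesian‑optimal mechanism, equivalently (Theorem~\ref{t:benchmark}) no two‑level lottery, gains anything from the extra agents. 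This is immediate once one notes that an agent with value $0$ is rejected by $\Lottery_{\highprice,\lowprice}$ whenever $\lowprice\ge 0$; alternatively it can be checked by hand, using the remark ending Section~\ref{sec:benchmark} to confine the optimal thresholds $\highprice,\lowprice$ to $\{0,\infty,\vali[1],\vali[2]\}$ and then evaluating the finitely many resulting lotteries via~\eqref{eq:highpayment} (taking care with the regimes $s\le k<s+t$ and $k\ge s+t$).
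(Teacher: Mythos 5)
Your proof is correct, and for $n=2$ it is verbatim the paper's argument: run the platform against the two-agent exponential distribution, note that by Lemma~\ref{l:lower-bound-n=2} the Bayesian optimum (a lottery, since exponential is MHR) earns only $\tfrac34\expect{\bm(\vals)}$, and average to extract a bad profile. Where you go beyond the paper is the $n>2$ case. The paper's surrounding text gives the argument only for $n=2$ (Lemma~\ref{l:lower-bound-n=2} is explicitly a two-agent statement) and then simply asserts the corollary for $n\ge 2$; your zero-padding reduction is the natural way to close that gap, and you correctly pin down the two facts it needs: (i) on a padded profile $(\vali[1],\vali[2],0,\dots,0)$ every $(\highprice,\lowprice)$-lottery with $\lowprice\ge 0$ (forced by non-negative payments) rejects all zero-valued agents, so Theorem~\ref{t:benchmark} collapses the benchmark to the two-agent formula~\eqref{eq:benchmark-n=2}; and (ii) hard-wiring agents $3,\dots,n$ to zero yields a genuine two-agent IC/IR mechanism, which requires the platform to be dominant-strategy incentive compatible. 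That DSIC assumption is never stated in the paper but is forced by the pointwise, distribution-free performance criterion, as you observe. One might also have tried computing $\expect{\bm(\vals)}$ directly for $n$ i.i.d.\ exponentials; this does not obviously recover $4/3$ (for $n=3$ the best single two-level lottery averages only about $23/18\approx 1.28$), so your reduction to $n=2$ is not just cleaner, it is essentially necessary for the argument to go through uniformly in $n$.
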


We conclude that the ratio auction with ratio $\ratio = 2$ and bias
$\bias = 3/4$ is an optimal platform for two-agent, single-item residual
surplus maximization.

\subsection{Multi-unit, Multi-agent Platforms}
\label{subsec:rsol}

\newcommand{\sm}{{\setminus}}
\newcommand{\sse}{{\subseteq}}
\newcommand{\event}{{\cal E}}
\newcommand{\lb}{{ ??}}

We now turn to markets with $n>2$ agents and $k \ge 1$ units.
We show that the minimum competitive advantage for universal adoption
is a finite constant, independent of the number of units, the number
of bidders, and the support size of the valuations. 

In contrast to the $n=2$ case, neither the Vickrey auction, the
lottery, nor a convex combination thereof obtains a constant-factor
approximation of the benchmark~$\bm$ (Definition~\ref{d:benchmark}).  For instance,
with one object and valuation profile $\vals = (1,1,0,\ldots,0)$, the
Vickrey auction has zero residual surplus and the lottery has expected
residual surplus $2/n$, while the benchmark residual surplus is
$\bm(\vals) =1$.  In fact, no Bayesian optimal auction (a.k.a.,
standard auction) or mixture over standard auctions is universally
adopted with a competitive advantage that is an absolute constant.
This result is stated as Theorem~\ref{t:standard-auction-lb}, below,
and proved in Appendix~\ref{app:standard-auction-lb}.  We conclude that
the derivation of a platform mechanism that is universally adopted
with a constant competitive advantage requires non-standard auction
techniques.

\begin{theorem} 
\label{t:standard-auction-lb}
%As a function of the number $n$ of agents, t
%For an $n$-agent $1$-unit setting, the minimum competitive
%advantage for universal adoption of any mixture of standard auctions
%is $\Omega(\sqrt{\log n})$.
For every $\rho > 1$ there is a sufficiently large~$n$ such
that, for an $n$-agent, 1-unit setting, no mixture over standard
auctions is universally adopted with competitive advantage $\rho$.
\end{theorem}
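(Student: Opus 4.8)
My plan is to first reduce the claim about \emph{mixtures} of standard auctions to a claim about a \emph{single} standard auction via linearity, and then to defeat every single standard auction with one carefully chosen distribution over valuation profiles.

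\textbf{Reduction.} For the given $\rho$ it suffices to produce an $n$ and a distribution $D$ over $n$-agent profiles with $\expect[\vals\sim D]{\bm(\vals)}>\rho\cdot\sup_{A}\expect[\vals\sim D]{A(\vals)}$, the supremum over standard (symmetric Bayesian-optimal) single-unit auctions $A$. Then for any mixture $\mech$ over standard auctions, $\expect[\vals\sim D]{\mech(\vals)}=\expect[A\sim\mech]{\expect[\vals\sim D]{A(\vals)}}\le\sup_A\expect[\vals\sim D]{A(\vals)}<\tfrac1\rho\expect[\vals\sim D]{\bm(\vals)}$, so some profile $\vals$ in the support of $D$ has $\mech(\vals)<\tfrac1\rho\bm(\vals)$, i.e.\ no such mixture is universally adopted with competitive advantage $\rho$. (If $D=\dist^n$ is i.i.d.\ then by Theorem~\ref{t:main} $\sup_A\expect[\vals\sim D]{A(\vals)}=\expect[\vals\sim D]{\Mye_\dist(\vals)}$, so the task would reduce to finding a prior whose expected benchmark beats $\rho$ times its Bayesian-optimal residual surplus; driving the ratio to infinity appears to require a non-product $D$, so I would not insist on i.i.d.)

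\textbf{Anatomy of a standard auction.} By the characterization behind Theorem~\ref{t:benchmark} and the ironing discussion before Theorem~\ref{t:main}, a standard single-unit auction $A$ is given by a nondecreasing ironed virtual value function $\ironmarg\ge 0$: on a profile it awards the item to a uniformly random agent whose value lies in the maximal flat interval of $\ironmarg$ containing the top value, with payments from Lemma~\ref{l:ic} (equivalently, $A$ realizes a two-level lottery whose thresholds are read off that flat-interval structure). Vickrey ($\ironmarg$ strictly increasing), the lottery ($\ironmarg$ constant) and the indirect Vickrey auctions ($\ironmarg$ a staircase) are special cases. The statistics to track are, for each value $v$, the bottom $\tau_A(v)$ and top $\beta_A(v)$ of the flat interval of $\ironmarg$ containing $v$; both are nondecreasing with $\tau_A(v)\le v\le\beta_A(v)$. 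On a profile, $A$'s residual surplus equals (average value among the tied winners) $-\,\approx\tau_A(\text{top value})$, while if a single large value $v$ beats a cluster of lower values it wins alone and pays $\approx\beta_A(\text{cluster value})$.

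\textbf{The hard profiles.} I would exploit the tension already visible for $n=2$. To match $\bm$ on a profile with several near-equal large values one needs a lottery over those agents at a small effective reserve -- a long flat interval of $\ironmarg$ reaching well below the large values. To match $\bm$ on a ``spiky'' profile -- one large value above a populous cluster of moderate values -- one needs that value to win \emph{alone} at a price close to the cluster value: its flat interval must neither swallow the cluster (else the winner is usually a moderate agent and the ratio collapses by a factor of (large/moderate)) nor overcharge. Accordingly I would take, over $L\approx\log_a n$ geometric scales and a fine grid of sub-scales, profiles of the shapes $(a^j,\dots,a^j;\ a^{j-1},\dots,a^{j-1};\ 0,\dots,0)$ (small top block, large moderate block), all-equal profiles at each scale, and ``one large agent plus a cluster'' profiles; using \eqref{eq:highpayment} and Theorem~\ref{t:benchmark} one computes $\bm$ on each (roughly $a^j-a^{j-1}$, attained by a lottery over the top block at reserve $\approx a^{j-1}$) and $A$'s value on each in terms of where $\tau_A(a^j)$ and the $\beta_A$'s sit. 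The design goal is that the narrow ``good placement'' windows for $\ironmarg$ at the various (sub-)scales be jointly infeasible for any single monotone function, so that every standard auction's $D$-averaged ratio to $\bm$ tends to $0$ as the number of scales (hence $n$) grows, which yields the theorem.

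\textbf{Main obstacle.} The crux is this joint-infeasibility step, and it is genuinely delicate because a single standard auction is very flexible: $\ironmarg$ may be an arbitrary monotone staircase, and a well-placed staircase (e.g.\ the dyadic one) is within a constant factor on whole families of two-scale profiles. Thus $D$ must be robust against \emph{all} staircases simultaneously -- this is exactly why the construction needs many scales \emph{and} a fine grid of sub-scales, so that whatever flat-interval bottom $\tau_A$ a staircase places near a given scale, some profile has its moderate cluster sitting just above it and catches the auction. Making this quantitative -- proving the uniform failure and carrying out the benchmark and payment computations through \eqref{eq:highpayment} and Theorem~\ref{t:benchmark} -- is the technical heart of the argument; the linearity reduction and the averaging are routine.
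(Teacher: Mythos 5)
Your reduction paragraph is correct and matches the paper's: linearity of residual surplus over convex combinations of auctions means it suffices to exhibit a distribution over valuation profiles on which every \emph{single} standard auction has expected performance a factor $\rho$ below the expected benchmark, and then the averaging yields a bad profile in the support. That part is fine.

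The genuine gap is in the construction of the hard distribution, which you identify yourself as ``the technical heart'' and leave unproved. Your plan --- geometric-scale profiles $(a^j,\dots,a^j;a^{j-1},\dots,a^{j-1};0,\dots,0)$ with a fine sub-grid, aiming to show that no monotone staircase $\ironmarg$ can place its flat intervals well simultaneously at all scales --- is a plausible intuition, but it is not carried out, and it is not obvious how to make the ``joint infeasibility'' quantitative: as you note, a well-placed dyadic staircase is within a constant factor on whole families of two-scale profiles, so the argument has to rule out \emph{every} staircase, not just the obvious ones. The paper's actual proof takes a cleaner route that sidesteps this combinatorial difficulty. It constructs, for a parameter $\lbparam\geq\max\{24\rho,2\}$ and each $\lbind\in\{0,\dots,\lbparam-1\}$, a piece-wise exponential i.i.d.\ distribution $\dist_{\lbind,\lbparam}$ whose virtual value for utility equals $1$ everywhere except on a single ``special interval'' $[\lbind\lbparam,\lbind\lbparam+\lbparam)$, where it equals $\lbparam$, and then picks $\lbind$ uniformly at random with $n\approx e^{\lbparam^2}$ agents. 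The memoryless property of the exponential pieces then gives two clean claims: (a) for each $\dist_{\lbind,\lbparam}$ the $\lbind\lbparam$-lottery achieves expected residual surplus $\geq\lbparam/4$ (so $\expect{\bm}\geq\lbparam/4$ via inequality~\eqref{eq:bm2}); and (b) any fixed standard auction, being committed to a single ironed-interval structure, cannot ``know'' where the special interval is, and over the random choice of $\lbind$ its expected residual surplus is at most a constant (the paper gets $6$). The ratio $\lbparam/24\geq\rho$ follows.

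Two smaller comments. First, your guess that ``driving the ratio to infinity appears to require a non-product $D$'' is misleading: the paper's $D$ is a mixture of i.i.d.\ distributions indexed by the random $\lbind$, and for each fixed $\lbind$ the relevant gap between $\Mye_{\dist_{\lbind,\lbparam}}$ and a fixed standard auction already exists; the mixture merely converts ``the auction cannot be tuned to all $\lbind$ at once'' into a single averaging statement. Second, your ``anatomy'' paragraph reasons about a standard auction's behavior in terms of the flat-interval endpoints $\tau_A,\beta_A$ of $\ironmarg$; this is correct but you never use it quantitatively, whereas the paper's case analysis (is the auction's ironed interval at $\lbparam^2$ well above, well below, or near the hidden special interval?) is precisely the place where the construction delivers a bound. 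In short, your strategy is an attempt at the same theorem by a genuinely different construction, but the central claim that the construction defeats all staircases is asserted, not proved, and so as written the proposal does not establish the theorem.
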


Due to the complexity of the problem, we relax the goal of determining
the optimal platform mechanism and instead look for a heuristic
platform that is universally adopted with a constant competitive
advantage.  We believe this heuristic pinpoints properties of good
platforms, while the optimal platform is complex and
perhaps difficult to interpret.

This heuristic follows the random sampling paradigm of \citet{GHW-01}.
Half of the agents (henceforth: sample) are used for a market analysis
to determine a good mechanism to run on the other half of the
agents (henceforth: market).  We do not attempt to estimate the
distribution of the sample, as distributions are complex objects.
Instead, we use the sample to determine a good two-level lottery and
then simply run that two-level lottery on the market.  Two-level
lotteries are described by two numbers and are therefore,
statistically, far simpler objects than distributions.

To make this task even simpler, we first argue that two-level
lotteries can be approximated by one-level lotteries.  

\begin{definition} 
The {\em one-level $\oneprice$-lottery}, denoted $\Lottery_{\oneprice}$, serves agents
with values strictly more than $\oneprice$, while supplies last
(breaking ties randomly).
Winners are charged $\oneprice$ and agents with values below
$\oneprice$ are rejected.
\end{definition}

\begin{lemma} \label{lem:lotteries}
For every valuation profile $\vals$ and parameters $k$, $\highprice$,
and $\lowprice$, there is an $\oneprice$ such that the $k$-unit
$\oneprice$-lottery obtains at least half of the expected residual
surplus of the $k$-unit $(\highprice,\lowprice)$-lottery.
\end{lemma}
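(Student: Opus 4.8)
We must show that the residual surplus of the best one-level lottery is at least half that of an arbitrary two-level $(p,q)$-lottery on a fixed profile $\vals$ with $k$ units. Recall the payment formula: with $S$ the agents above $p$, $T$ the agents in $(q,p]$, $s=|S|$, $t=|T|$, and assuming $s \le k < s+t$, the agents in $S$ each pay $p - (p-q)\tfrac{k-s+1}{t+1}$ and the $k-s$ lottery winners in $T$ each pay $q$. The residual surplus of $\Lottery_{p,q}$ is therefore
\[
\textstyle\sum_{i\in S}\vali \;-\; s\bigl(p-(p-q)\tfrac{k-s+1}{t+1}\bigr)\;+\;\tfrac{k-s}{t}\bigl(\sum_{i\in T}\vali - (k-s)q\bigr),
\]
plus the trivial boundary cases (if $s\ge k$ the top $k$ agents each pay $p$ and it is really a one-level $p$-lottery with ties; if $s+t\le k$ everyone in $S\cup T$ wins and pays $\le p$).

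**The plan.** The natural approach is to compare $\Lottery_{p,q}$ against the two candidate one-level lotteries $\Lottery_p$ and $\Lottery_q$, and argue that at least one of them captures half the residual surplus. Split the residual surplus of the two-level lottery into the contribution from the $S$-agents (the "high" part) and from the $T$-agents (the "low, randomized" part). The $p$-lottery $\Lottery_p$ serves exactly the agents in $S$ at price $p$, which is at least as high a price as in $\Lottery_{p,q}$, so its residual surplus $\sum_{i\in S}(\vali-p)$ is a lower bound — but possibly a weak one since the two-level lottery discounts the $S$-payments. The $q$-lottery $\Lottery_q$ serves $k$ agents uniformly at random from $S\cup T$ (all of whom have value $>q$) at price $q$. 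The key quantitative step is to bound the "discount" term $s(p-q)\tfrac{k-s+1}{t+1}$ and the $T$-contribution against what $\Lottery_q$ recovers.

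**Key steps, in order.** First, handle the boundary cases ($s\ge k$ and $s+t\le k$) separately — these reduce essentially to one-level lotteries already, so the factor is trivial or the two-level lottery is dominated. Second, in the main case $s\le k<s+t$: write the two-level residual surplus as $A + B$, where $A = \sum_{i\in S}\vali - s p + s(p-q)\tfrac{k-s+1}{t+1}$ handles $S$ and $B = \tfrac{k-s}{t}(\sum_{i\in T}\vali - (k-s)q)$ handles $T$. Bound $A \le \sum_{i\in S}(\vali - q)$ crudely (since the discount only lowers effective price toward $q$), and observe $\Lottery_q$'s residual surplus is $\tfrac{k}{s+t}\bigl(\sum_{i\in S\cup T}\vali - (s+t)q\bigr) = \tfrac{k}{s+t}(\sum_{i\in S}(\vali-q) + \sum_{i\in T}(\vali-q))$. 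The claim then reduces to a numerical inequality: $A+B$ versus $2\max\{\sum_{i\in S}(\vali-p),\ \tfrac{k}{s+t}(\sum_{i\in S}(\vali-q)+\sum_{i\in T}(\vali-q))\}$. Third, push this through by case analysis on whether the $S$-value-mass or the discount-plus-$T$-mass dominates.

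**Where the difficulty lies.** The main obstacle is that the crude bound $A \le \sum_{i\in S}(\vali - q)$ can be very loose exactly when $p$ is much larger than $q$ and the $S$-agents' values lie just above $p$ — in that regime the two-level lottery's residual surplus on $S$ is nearly zero while $\sum_{i\in S}(\vali-q)\approx s(p-q)$ is large, so comparing $A$ to $\Lottery_q$ this way wastes most of the budget. The fix is to not bound $A$ by $\Lottery_q$ at all in that regime but instead to notice that when $S$ contributes little, the two-level lottery's value is dominated by $B$ together with the discount term, and both of those are genuinely recovered (up to the factor $\tfrac{k-s}{t}$ versus $\tfrac{k}{s+t}$, which differ by at most a factor of $2$ since $s\le k$ forces $\tfrac{k}{s+t}\ge\tfrac12\cdot\tfrac{k-s}{t}$... this ratio bookkeeping is the delicate part). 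So the real work is choosing, as a function of which term dominates, whether to charge the loss to $\Lottery_p$ or to $\Lottery_q$, and verifying the constant $2$ survives the worst split — I would expect the tight case to be something like $s=0$ (pure lottery among $T$) or $s$ close to $k$, where the two candidate one-level lotteries each capture almost exactly half.
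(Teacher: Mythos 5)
Your instinct to compare $\Lottery_{\highprice,\lowprice}$ against $\Lottery_{\highprice}$ and $\Lottery_{\lowprice}$ is exactly right, and the target inequality you should be proving is $\Lottery_{\highprice,\lowprice}(\vals) \le \Lottery_{\highprice}(\vals) + \Lottery_{\lowprice}(\vals)$; taking the better of the two one-level lotteries then immediately gives the factor~$2$. But your argument doesn't close, and the reason is the step you flag yourself: the crude bound ``$A \le \sum_{i\in S}(\vali-\lowprice)$'' throws away too much precisely when $\highprice \gg \lowprice$, and the case analysis you sketch to recover this never gets carried out. (There is also a small algebra slip in your expression for $B$ --- the total payment from the $T$-winners is $(k-s)\lowprice$, not $\tfrac{(k-s)^2}{t}\lowprice$ --- but that is incidental.)

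The missing idea is to prove the inequality \emph{agent by agent}, using a convexity split on each $S$-agent's utility rather than an aggregate bound. Write $\lowprob = \tfrac{k-s+1}{t+1}$. Each agent $i\in S$ in $\Lottery_{\highprice,\lowprice}$ has utility $\vali - \highprice + \lowprob(\highprice-\lowprice) = (1-\lowprob)(\vali-\highprice) + \lowprob(\vali-\lowprice)$. Bound the first term by $\vali-\highprice$ (her utility in $\Lottery_{\highprice}$, since with $s\le k$ she wins there outright) and the second term by $\tfrac{k}{s+t}(\vali-\lowprice)$ (her utility in $\Lottery_{\lowprice}$), which requires only $\lowprob \le \tfrac{k}{s+t}$, a one-line consequence of $k < s+t$. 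For agents in $T$, compare winning probabilities directly: $\tfrac{k-s}{t} \le \tfrac{k}{s+t}$, again from $k<s+t$. Summing over agents gives $\Lottery_{\highprice,\lowprice}(\vals) \le \Lottery_{\highprice}(\vals) + \Lottery_{\lowprice}(\vals)$ with no case split on which term dominates. Your approach is trying to re-derive this aggregate inequality after having already collapsed the per-agent structure, which is why you end up fighting the worst-case split; once you keep the decomposition at the agent level, the delicate ``ratio bookkeeping'' you worried about disappears.
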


\begin{proof}
%(of Lemma~\ref{lem:lottery}.)
We prove the lemma by showing that $\Lottery_{\highprice,\lowprice}(\vals) \le
\Lottery_{\highprice}(\vals) + \Lottery_{\lowprice}(\vals)$.  We argue the stronger
statement that each agent enjoys at least as large a combined expected
utility in $\Lottery_{\highprice}(\vals)$ and $\Lottery_{\lowprice}(\vals)$ as in
$\Lottery_{\highprice,\lowprice}(\vals)$.

Let $S$ and $T$ denote the agents with values in the ranges
$(\highprice,\infty)$ and $(\lowprice,\highprice]$, respectively.  Let
  $s = \setsize{S}$ and $t = \setsize{T}$.  Assume that $0 < s \leq k
  < s+t$ as otherwise the $k$-unit $(\highprice,\lowprice)$ lottery is
equivalent to a one-level lottery.  Each agent in~$T$ participates in a
  $k$-unit $\lowprice$-lottery in $\Lottery_{\lowprice}$ and only a
  $(k-s)$-unit $\lowprice$-lottery in $\Lottery_{p,q}$; her expected
  utility can only be smaller in the second case.  Now consider $i \in
  S$.  Writing $\lowprob = (k-s+1)/(t+1)$ in
  equation~\eqref{eq:highpayment} we can upper bound the utility of
  agent $i$ in $\Lottery_{\highprice,\lowprice}$ by
$$v_i - \highprice + \lowprob (\highprice-\lowprice) = 
(1-\lowprob)(v_i-\highprice) + \lowprob(v_i-\lowprice)
\le (v_i-\highprice) + \tfrac{k}{s+t} \cdot (v_i-\lowprice),$$
which is the combined expected utility that the agent obtains from
participating in both a $k$-unit $\highprice$-lottery (with $s \leq k$) and a
$k$-unit $\lowprice$-lottery.
\end{proof}

\begin{corollary} \label{cor:lottery} 
For every valuation profile $\vals$, the benchmark~$\bm$ is at most
twice the expected residual surplus of the best one-level lottery:
%that is,
$\bm(\vals) \leq 2 \cdot \sup_\oneprice \Lottery_{\oneprice}(\vals)$.
%For every valuation profile~$\vals$, there is a $k$-unit $\oneprice$-lottery
%with expected residual surplus at least $\bm(\vals)/2$.
\end{corollary}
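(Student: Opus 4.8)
The plan is to combine the characterization of the benchmark in Theorem~\ref{t:benchmark} with the one-level-lottery approximation of Lemma~\ref{lem:lotteries}. First I would invoke Theorem~\ref{t:benchmark}, which gives $\bm(\vals) = \sup_{\highprice,\lowprice} \Lottery_{\highprice,\lowprice}(\vals)$, so it suffices to show that every two-level lottery's expected residual surplus on $\vals$ is at most twice $\sup_\oneprice \Lottery_\oneprice(\vals)$.

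Next, fix an arbitrary pair $(\highprice,\lowprice)$. Applying Lemma~\ref{lem:lotteries} with the given number of units $k$ produces a threshold $\oneprice$ for which $\Lottery_\oneprice(\vals) \ge \tfrac12 \Lottery_{\highprice,\lowprice}(\vals)$, and hence $\sup_{\oneprice'} \Lottery_{\oneprice'}(\vals) \ge \tfrac12 \Lottery_{\highprice,\lowprice}(\vals)$. Since this holds for every $(\highprice,\lowprice)$, taking the supremum over $(\highprice,\lowprice)$ on the right yields $\sup_\oneprice \Lottery_\oneprice(\vals) \ge \tfrac12 \sup_{\highprice,\lowprice} \Lottery_{\highprice,\lowprice}(\vals) = \tfrac12 \bm(\vals)$, which rearranges to the claimed bound $\bm(\vals) \le 2 \cdot \sup_\oneprice \Lottery_\oneprice(\vals)$.

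I expect no real obstacle here: the statement is an immediate consequence of the two cited results. The only point requiring a little care is that we are working with suprema rather than attained maxima, but the pattern ``prove the inequality for each fixed $(\highprice,\lowprice)$, then take the supremum'' handles that cleanly. One might also remark that the factor $2$ is precisely the loss from Lemma~\ref{lem:lotteries} and is not asserted to be tight in this corollary.
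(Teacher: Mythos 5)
Your proof is correct and matches the paper's intended argument exactly: the corollary is positioned in the paper as an immediate consequence of Theorem~\ref{t:benchmark} (which identifies $\bm$ as the supremum over two-level lotteries) combined with Lemma~\ref{lem:lotteries} (the two-to-one approximation by one-level lotteries), and your fix-a-pair-then-take-suprema argument is the standard way to chain these. The remark about handling suprema rather than attained maxima is a fine point of care but does not change the substance.
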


The following auction does market analysis on the fly to identify and
run a good one-level lottery.  We have deliberately avoided optimizing
the parameters of this mechanism in order to keep its description and analysis as simple
as possible.

\begin{definition}%[Random Sampling Optimal Lottery ($\RSOL$)]
\label{def:rsol}
The $k$-unit {\em Random Sampling Optimal Lottery (RSOL)} mechanism
works as follows.
\begin{enumerate}

\item Partition the agents uniformly at random into a market $M$ and a
  sample $S$, i.e., each agent is in $S$ or $M$ independently with
  probability $1/2$ each. 

\item Calculate the optimal $k$-unit lottery price $\oneprice_S$ for
  the sample: $\oneprice_S = \argmax_\oneprice \Lottery_{\oneprice}
  (\vals_S)$.

\item Run the $k$-unit $\oneprice_S$-lottery on the market $M$; reject
  the agents in the sample $S$.

\end{enumerate}
\end{definition}

We show that this RSOL mechanism gives a good
approximation to the residual surplus of the optimal one-level lottery
unless a majority of its residual surplus is derived from the
highest-valued agent.  If a majority of its residual
surplus is derived from the highest-valued agent, then the $k$-unit
Vickrey auction is a good approximation of the benchmark.  Therefore,
mixing between 
the two auctions gives a platform that is universally adopted with
constant competitive advantage (independent of $k$ and $n$).

\begin{theorem}\label{thm:worst}
For every $n,k \ge 1$,
there is an $n$-agent $k$-unit platform mechanism that is universally
adopted with constant competitive advantage.
\end{theorem}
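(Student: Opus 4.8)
The plan is to combine the Random Sampling Optimal Lottery (RSOL) mechanism with the $k$-unit Vickrey auction in a fixed mixture and show the mixture is universally adopted with constant competitive advantage. By Corollary~\ref{cor:lottery}, it suffices to show that the mixture obtains a constant fraction of $\sup_\oneprice \Lottery_{\oneprice}(\vals)$ on every profile $\vals$; fix a profile and let $\oneprice^*$ be an optimal one-level lottery price, with value $L = \Lottery_{\oneprice^*}(\vals)$. The key dichotomy is how much of $L$ comes from the single highest-valued agent. Write $\valith[1]$ for the highest value and note that the contribution of agent~$1$ to $\Lottery_{\oneprice^*}(\vals)$ is at most $\valith[1] - \oneprice^* \le \valith[1] - \valith[2]$ when $\oneprice^* \ge \valith[2]$ (and the whole lottery degenerates when $\oneprice^* < \valith[2]$, since then at least two agents clear the price). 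Meanwhile the $k$-unit Vickrey auction has residual surplus exactly $\valith[1] - \valith[2]$ on any profile with $k=1$, and more generally the residual surplus of $k$-unit Vickrey lower-bounds the contribution of the top agent to any one-level lottery. So if at least half of $L$ is attributable to the top agent, Vickrey already yields $\ge L/2$.

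The main work is the complementary case, where at least half of $L$ is spread over agents $2, 3, \ldots, n$; here I want to show RSOL recovers a constant fraction of $L$ in expectation over the random partition. The standard random-sampling argument (following \citet{GHW-01}) is: partition into $M$ and $S$; when the top agent's individual contribution is small relative to $L$, the residual-surplus ``mass'' of $\Lottery_{\oneprice^*}$ is distributed across many agents, so with constant probability both halves contain a comparable share, and the optimal sample price $\oneprice_S$, when run on $M$, performs within a constant factor of $\Lottery_{\oneprice^*}(\vals_M)$. Concretely one shows $\expect{\Lottery_{\oneprice_S}(\vals_M)} \ge c \cdot L$ for an absolute constant $c$: because $\oneprice_S$ is chosen to maximize over the sample and residual surplus of a one-level lottery is a monotone, well-behaved function of which agents clear the price, the price that is good on $S$ is also good on $M$ up to the sampling error, and the sampling error is controlled precisely because no single agent carries more than half the mass. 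The Vickrey residual surplus $\valith[1] - \valith[2]$ is the ``anti-concentration'' term that the lemma must discard; that is exactly why the theorem pairs RSOL with Vickrey rather than using RSOL alone.

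Putting the two cases together: run RSOL with probability $\lambda$ and $k$-unit Vickrey with probability $1-\lambda$ for a suitable constant $\lambda \in (0,1)$. In the ``top-heavy'' case Vickrey gives $\ge L/2$, so the mixture gives $\ge (1-\lambda) L/2$; in the ``spread-out'' case RSOL gives $\ge c' L$ in expectation for an absolute constant $c'$ (using Corollary~\ref{cor:lottery} and the sampling bound), so the mixture gives $\ge \lambda c' L$. Choosing $\lambda$ to balance these and invoking $\bm(\vals) \le 2L$ from Corollary~\ref{cor:lottery} yields $\expect{\text{mixture}(\vals)} \ge \frac{1}{\beta}\bm(\vals)$ for an absolute constant $\beta$ independent of $n$, $k$, and the valuation support. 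Since this holds for every profile $\vals$, the mixture is universally adopted with competitive advantage $\beta$, proving the theorem.

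I expect the main obstacle to be the random-sampling estimate for RSOL in the spread-out case: one must argue that the empirically optimal sample price transfers to the market with only constant loss, which requires quantifying how the residual surplus of a one-level $\oneprice$-lottery fluctuates under the random split when the contribution of the single top agent is bounded by $L/2$ (equivalently, when $\valith[1] - \valith[2]$ is at most a constant times $L$). This is where the ``discard the top agent'' idea earns its keep — handling the top agent separately via Vickrey is precisely what makes the remaining mass amenable to a clean concentration/coupling argument — and it is also why the stated constant is left unoptimized.
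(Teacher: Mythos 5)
Your high-level plan---mix RSOL with $k$-unit Vickrey and reduce via Corollary~\ref{cor:lottery} to approximating the best one-level lottery---is the same as the paper's, but your dichotomy on ``how much of $L$ the top agent contributes'' has a genuine gap in the Vickrey case. The claim that ``the residual surplus of $k$-unit Vickrey lower-bounds the contribution of the top agent to any one-level lottery'' is false, and the conclusion that Vickrey yields $\ge L/2$ whenever the top agent carries $\ge L/2$ of the lottery's mass fails with it. Take $k=1$ and $\vals = (10,\, 10-\epsilon,\, 0, \ldots, 0)$. The optimal one-level lottery is the $0$-lottery, which serves agents~1 and~2 only (agents with value~$0$ do not strictly exceed the price $0$), giving $L = 10 - \epsilon/2$. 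Agent~$1$'s contribution is $\tfrac{1}{2}\cdot 10 = 5 > L/2$, so by your dichotomy this is the top-heavy case, but the Vickrey auction obtains only $\valith[1]-\valith[2] = \epsilon \to 0$, nowhere near $L/2 \approx 5$. The problem is that when $\oneprice^* < \valith[2]$, agent~$1$'s contribution $(\valith[1]-\oneprice^*)\alloci[1]$ can far exceed $\valith[1]-\valith[2]$; the inequality you need holds only when $\oneprice^* \ge \valith[2]$, i.e., when the optimal lottery serves agent~$1$ alone.

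The paper's argument avoids this by decomposing rather than splitting into cases: it compares the original profile to the truncated profile $\vals\super{2} = (\valith[2],\valith[2],\valith[3],\ldots,\valith[n])$, showing that for every $\oneprice$ the quantity $\Lottery_{\oneprice}(\vals) - \Lottery_{\oneprice}(\vals\super 2)$ is at most $\valith[1] - \valith[2]$ (the only thing that changed is agent~$1$'s value, and the change in agent~$1$'s term is at most $\valith[1]-\valith[2]$ times her winning probability). This is the quantity Vickrey actually covers. RSOL is then analyzed against $\vals\super 2$, which by construction has two equal top values, so the balance condition ($1 \in M$, $2 \in S$) guarantees that each side of the partition sees one top agent and the sampling estimates go through; you cannot get this automatically from ``top agent contributes $< L/2$'' alone (in the example above, the top two agents each carry essentially half the mass, and neither condition in your dichotomy gives a good bound). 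Your Case~2 sketch also leaves the quantitative sampling step unspecified; the paper pins this down with the Balanced Sampling Lemma (Lemma~\ref{l:balanced-sampling}) and two concrete estimates (Steps~1 and~2 in the paper's proof) rather than an appeal to the ``standard random-sampling argument.''
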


A key fact that enables the analysis of RSOL is that, with
constant probability, the relevant statistical properties of the
full valuation profile are preserved in the market and the sample.  These
statistical properties can be summarized in terms of a
``balance'' condition.  Define a partition of the
agents~$\{1,2,3,\ldots,n\}$ into a market $M$ and 
a sample $S$ to be {\em balanced} if $1 \in M$, $2 \in S$, and for all $i
\in \{3,\ldots,n\}$, between $i/4$ and $3i/4$
%between a quarter and three quarters 
of the $i$ highest-valued
agents are in $S$ (and similarly~$M$).  In the proof of
Theorem~\ref{thm:worst}, we use the following adaptation of the
``Balanced Sampling Lemma'' of \citet{FFHK-05} to bound from below 
the probability that $\RSOL$ selects a balanced partitioning.  

\begin{lemma}
\label{l:balanced-sampling}
When each agent is assigned to the market~$M$ or sample~$S$ independently
according to a fair coin, the resulting partitioning is balanced with
probability at least~$0.169$.
\end{lemma}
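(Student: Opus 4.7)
The plan is to reduce the balanced-partition condition to a simple-random-walk survival problem. Because the events $\{1 \in M\}$ and $\{2 \in S\}$ are independent of each other and of the coin flips for agents $3, 4, \ldots$, each with probability $\tfrac{1}{2}$, they contribute a factor of $\tfrac{1}{4}$. It therefore suffices to show that, conditional on these two events, the count constraints for $i \ge 3$ hold with conditional probability at least $4 \cdot 0.169 = 0.676$. Let $X_i$ be the number of top-$i$ agents in $S$ and set $W_i := 2X_i - i$. Conditional on $1 \in M$ and $2 \in S$, we have $W_2 = 0$ and the increments $W_i - W_{i-1}$ are independent uniform $\pm 1$ for $i \ge 3$, so $(W_i)_{i \ge 2}$ is a simple random walk. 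The requirement $X_i \in [i/4, 3i/4]$ is exactly $|W_i| \le i/2$, so the task becomes showing that the walk stays inside this widening corridor with probability at least $0.676$.

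I would compute this survival probability via the sub-probability distribution
\[
q_i(d) \;=\; \Pr\!\left[W_i = d \text{ and } |W_j| \le j/2 \text{ for all } 3 \le j \le i\right],
\]
which satisfies $q_i(d) = \tfrac{1}{2}(q_{i-1}(d-1) + q_{i-1}(d+1)) \cdot \mathbf{1}[|d| \le i/2]$ with $q_2(0) = 1$. A parity argument, using $W_i \equiv i \pmod 2$ together with the fact that $i/2$ is a half-integer precisely when $i$ is odd, shows that surviving mass can be shed only at times $i \equiv 1 \pmod 4$ with $i \ge 5$. At $i = 4k+1$ the depletion is exactly $L_k := q_{4k}(2k)$ (by symmetry over $\pm 2k$ together with the single-step outward probability $\tfrac{1}{2}$). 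Unrolling the recurrence four steps yields $L_k = \tfrac{1}{16}(L_{k-1} + b_{k-1})$, where $b_j := q_{4j}(2j-2)$; chasing this gives the explicit values $L_1 = \tfrac{1}{4}$, $L_2 = \tfrac{3}{64}$, $L_3 = \tfrac{15}{1024}$, $L_4 = \tfrac{45}{8192}$, which sum to less than $0.317$.

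It remains to bound $\sum_{k \ge 5} L_k < 0.007$. The computed ratios $L_{k+1}/L_k = \tfrac{3}{16}, \tfrac{5}{16}, \tfrac{3}{8}, \tfrac{5}{12}$ are all strictly below $\tfrac{1}{2}$, strongly suggesting geometric decay; the main obstacle is proving this rigorously, since $L_k$ is coupled to the auxiliary sequence $b_k$ by the recurrence. A clean strategy is to establish $b_k \le 7 L_k$ for all $k \ge 4$ inductively, using the identity $b_k = L_k + \tfrac{1}{2} q_{4k-1}(2k-3)$ (obtained by unrolling the barrier recurrence one step back from $d = 2k-2$) and tracking $q_{4k-1}(2k-3)$ through its own barrier recurrence in terms of $L_{k-1}$, $b_{k-1}$, and intermediate near-extreme survival probabilities. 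Closing this system gives $L_{k+1} \le \tfrac{1}{2} L_k$ for $k \ge 4$ and hence $\sum_{k \ge 5} L_k \le 2 L_5 < 0.005$, yielding conditional survival probability at least $0.676$ and therefore the claimed balanced probability of at least $0.169$.
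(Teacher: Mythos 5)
Your reformulation as a two-sided simple random walk $W_i = 2X_i - i$ confined to the widening corridor $|W_i| \le i/2$ is correct, and so is the parity analysis showing mass is shed only at steps $i \equiv 1 \pmod 4$; I also verified your values $L_1 = \tfrac14$, $L_2 = \tfrac{3}{64}$, $L_3 = \tfrac{15}{1024}$, $L_4 = \tfrac{45}{8192}$ and the unrolled recurrence $L_k = \tfrac{1}{16}(L_{k-1} + b_{k-1})$ with $b_j := q_{4j}(2j-2)$. But the argument breaks down at the one place it must not: you never actually prove the tail bound $\sum_{k \ge 5} L_k < 0.007$. The claim $b_k \le 7 L_k$ for all $k \ge 4$ is presented only as ``a clean strategy,'' and the ratios you compute ($b_1/L_1 = 2$, $b_2/L_2 = 4$, $b_3/L_3 = 5$, $b_4/L_4 = \tfrac{17}{3}$) are visibly \emph{increasing}, so the induction is by no means automatic. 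The proposed induction requires controlling $q_{4k-1}(2k-3)$ by yet another boundary quantity, i.e., the two-sequence system $(L_k, b_k)$ does not close and a third sequence is needed; you acknowledge this but do not exhibit the closed system or show it yields $b_k \le 7L_k$. As written, the step ``Closing this system gives $L_{k+1} \le \tfrac12 L_k$'' is an assertion, not a proof, and the lemma's bound is exactly sensitive to this tail.

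Your approach is also genuinely different from the paper's. The paper handles each side of the balance condition separately (and recombines by a union bound), which lets it track the \emph{one-sided} quantity $Z_i = 4|S\cap[i]| - i$. Conditional on $Z_2 = 2$, the walk takes steps $+3$ or $-1$ with equal probability, and ``imbalance'' is the event of hitting $-1$. The probability $r$ of ever retreating by one position satisfies $r = \tfrac12 + \tfrac12 r^4$, whose root in $(0,1)$ is $\approx 0.544$, so the ruin probability is $r^3 \le 0.161$ by the strong Markov property, and a union bound over the two sides gives conditional success probability $\ge 0.678$. That calculation is closed-form and rigorous in a few lines, whereas your direct two-sided corridor computation --- while conceptually clean and tighter in principle --- requires either a verified geometric-decay lemma for $L_k$ or a finite amount of exact enumeration plus a rigorous dominating bound for the remaining tail, neither of which you supply. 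If you want to salvage your route, the most promising fix is to replace the $b_k \le 7L_k$ claim with an explicit comparison of $q_{4k}(\cdot)$ to the unconstrained binomial distribution for $k$ past some finite threshold, and sum the resulting binomial tail directly; but as it stands the proof is incomplete.
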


For completeness, we include a
proof of Lemma~\ref{l:balanced-sampling} in
Appendix~\ref{app:balanced-sampling}.  We now turn to
Theorem~\ref{thm:worst}.

\begin{proof}[Proof of Theorem~\ref{thm:worst}]
We outline the high-level argument and then fill in the details.
We focus on the expected residual surplus of RSOL, where the
expectation is over the random partition of agents, relative to
that of an optimal one-level lottery, on the ``truncated''
valuation profile $\vals \super 2 =
(\valith[2],\valith[2],\valith[3],\ldots,\valith[n])$.  
We only track the contributions to RSOL's expected residual surplus when the
partitioning of the agents is balanced.  In such cases, RSOL's residual
surplus on the truncated valuation profile can only be less than on
the original one.

Step 1 of the analysis proves that, conditioned on the partitioning
of the agents being balanced, the expected residual surplus of the
optimal one-level lottery for the sample is at least~$1/2$ times that
of the optimal one-level lottery for the full truncated valuation
profile.  Step 2 of the analysis proves that, conditioned on an
arbitrary balanced partition, the residual surplus of every one-level
lottery on the market is at least~$1/9$ times its residual surplus on
the sample.  In particular, this inequality holds for the optimal
one-level lottery for the sample.  Combining these two steps with
Lemma~\ref{l:balanced-sampling} implies that the expected residual
surplus of RSOL is at least $0.169 \times \tfrac{1}{2} \times
\tfrac{1}{9} \ge 1/107$ times that of the optimal one-level lottery on
the truncated valuation profile~$\vals \super 2$.  The additional
residual surplus achieved by an optimal one-level lottery on the
original valuation profile~$\vals$ over the truncated one is at most
$\valith[1] - \valith[2]$.  The residual surplus of the $(k+1)$th-price
auction, where~$k$ is the number of units for sale, is at least this
amount.  The platform mechanism that mixes between RSOL with
probability $107/108$ and the $(k+1)$th-price auction with probability
$1/108$ has expected residual surplus at least~$1/108$ times that of
the optimal one-level lottery on $\vals$, and (by
Corollary~\ref{cor:lottery}) at least~$1/216$ times the benchmark
$\bm$.  Below, we elaborate on the two steps described above.

Step 1: {\em Conditioned on a balanced partitioning, the expected
  residual surplus of the optimal one-level lottery for the sample~$S$
  is at least~$1/2$ times that of the optimal one-level lottery for
  the full truncated valuation profile.}
%{\em The expected residual surplus of the optimal lottery for
%  the sample is at least half the residual surplus of the optimal
%  lottery for the full valuation profile (even when conditioned on a
%  balanced partitioning).}  
Let $\oneprice$ be the price of the optimal one-level lottery for
$\vals \super 2$.  Conditioned on a balanced partition, exactly one of the top two
(equal-valued) bidders of $\vals \super 2$ lies in~$S$.  By symmetry,
each other bidder has probability~$1/2$ of lying in~$S$.
%Since $\valith[1] \super 2 =
%\valith[2] \super 2$ and balance implying that $1\in M$ and $2 \in S$,
%we can think of a fair coin deciding which of the two highest-valued
%agents is in $M$ and which is in $S$.  With this perspective,
%conditioned on balance the probability that any $i$ with $\vali >
%\oneprice$ is in the sample $S$ is a half.  Of course $S$ has fewer agents
%above $\oneprice$ therefore the probability that each wins the lottery is only
%increased.  
The winning probability of bidders in~$S$ with value at
least~$\oneprice$ is only higher than that when all agents are present.
Summing over the bidders' contributions to the residual surplus and
using the linearity of expectation, 
%These two observations imply that
$\expect[S]{\Lottery_\oneprice(S) \given \text{balanced partition} } \geq
\Lottery_\oneprice(\vals \super 2)/2$.  Of course, the optimal one-level
lottery for the sample is only better.

Step 2: {\em Conditioned on an arbitrary balanced partition, for the
  truncated valuation profile~$\vals \super 2$, the
  residual surplus of every one-level lottery on the market is at
  least~$1/9$ times its residual surplus on the sample.}
Fix a balanced partition into~$S$ and~$M$ and a one-level lottery at
price $\oneprice$.  
%We can assume that $\oneprice$ is at most the
%  highest truncated valuation, so that at least two agents have value
%  at least~$\oneprice$.
%
The expected contribution of a bidder~$j$ to a $\oneprice$-lottery is
$(\val_j-\oneprice)$ times its winning probability (if $\val_j >
\oneprice$) or~0 (otherwise).  The balance condition ensures that, for
every~$i \ge 2$, the number of the $i$ highest-valued bidders that belong to
the market is between~$1/3$ and~$3$ times that of the sample.  In
particular, the winning probability of bidders with value at least
$\oneprice$ in~$M$ is at least~$1/3$ of that of such bidders in~$S$.
Moreover, the balance condition implies that
$$
\sum\nolimits_{j \in M} \max \{ \val_j-\oneprice,0 \}
\ge
\tfrac{1}{3} \sum\nolimits_{j \in S} \max \{ \val_j-\oneprice,0 \}
$$
%Notice that a balanced sample
%implies that both the sample and the market have between an quarter
%and three-quarters of any prefix of the agents (in sorted order);
%therefore, from any prefix the ratio between the number of agents in
%the market and sample is between a third and three.  Consider any
%$\oneprice$ lottery on the sample $S$ and let $\ell$ denote the number
%of agents in the sample with value at least $\oneprice$.  A balanced
%partitioning implies that there are at most $3\ell$ such agents in the
%market.  This means that the probability of winning the lottery in the
%market is at least a third of the probability of winning the lottery
%in the sample.  Of course, the total residual surplus of the market is
%the sum of the difference of the values of the agents in the market
%and the lottery price $\oneprice$ all multiplied by the previously
%calculated probability of winning.  A balanced partitioning implies
%that this sum for the market is at least a third of the analogous sum
%for the sample.  Therefore, for this $\oneprice$-lottery the residual
%surplus of the market is at least a ninth of that of the sample.  In
%particular, this bound holds for the optimal one-level lottery for the
%sample (with price $\oneprice_S$), i.e.,
%$\Lottery_{\oneprice_S}(\vals_M) \geq
%\Lottery_{\oneprice_S}(\vals_S)$.
for the truncated valuation profile~$\vals \super 2$;
the claim follows.
\end{proof}

It is certainly possible to optimize better the parameters of the
platform mechanism defined in the proof of Theorem~\ref{thm:worst}.
Furthermore, since for simplicity we only keep track of RSOL's
performance when the partition is balanced, the mechanism's
performance is better than the proved bound.
%our analysis discards 83\% of the probability space.  
% is discarded. 
%Proving an approximation factor less than eight, say, is likely to require
%a different approach.  Interested readers should consult the methods
%of \citet{AMS-09}.

% END OF SECTION

\Xcomment{

The next definition formalizes the class of mechanisms that define the
benchmark.
\begin{definition}[$\Mye_\dist$]
\label{def:opt}
For an i.i.d.\ distribution $\dist$ with ironed virtual valuation
(for utility) function $\ironmarg$, the mechanism $\Mye_\dist$
is defined as follows.
\begin{enumerate}

\item Given $\vals$, choose a feasible allocation
maximizing $\sum_i \ironmarg(\vali) \alloci$.  If there are multiple
such allocations, choose one uniformly at random.

\item Let $\allocs$ denote the corresponding allocation rule,
with $\alloci(\vals)$ denoting the probability that player~$i$
receives an item given the valuation profile $\vals$.
Let $\prices$ denote the (unique) payment rule dictated by
Lemma~\ref{l:ic}.

\item Given valuations $\vals$ and the random choice of allocation in
the first step, charge each winner~$i$ the price 
$\pricei(\vals)/\alloci(\vals)$ and each loser~0.

%\item Impose price according to Lemma~\ref{l:ic}: for $\vals_{-i}$
%fixed, $\pricei$ is calculated as:
%\begin{enumerate}
%\item Let $\alloci'(\vali)$ be the probability of allocating to agent
%$i$ with value $\vali$ conditioned on $\vals_{-i}$.
%\item Let $\pricei'(\vali)$ be the expected payment of agent $i$ with
%value $\vali$ as given from Lemma~\ref{l:ic} and allocation rule
%$\alloci'(\cdot)$.
%\item Set payment $\pricei = 
%\begin{cases}
%\tfrac{\pricei'(\vali) }{ \alloci'(\vali)} & \text{if $\alloci = 1$}\\
%0 & \text{otherwise.}
%\end{cases}$ 
%\end{enumerate}
\end{enumerate}
\end{definition}
By Theorem~\ref{t:main}, $\Mye_\dist$ maximizes the expected residual
surplus for valuations drawn from~$\dist$.
Using Lemma~\ref{l:ic}, it is also incentive-compatible and ex post
individually rational.
It is symmetric provided the set of feasible allocations
is symmetric (i.e., is a $k$-unit auction).  In this case, the first
step awards the $k$ units to the agents with the top~$k$ ironed
virtual valuations (for utility) with respect to the
distribution~$\dist$, breaking ties uniformly at random.

%%% BREAK

This benchmark is, by definition, distribution-independent.  As such,
it provides a yardstick by which we can measure prior-free mechanisms:
we say that a (randomized) mechanism {\em $\beta$-approximates the
benchmark~$\bm$} if, for every valuation profile~$\vals$, its expected
residual surplus is at least $\bm(\vals)/\beta$.  Note the strength of
this guarantee: for example, if a mechanism
$\beta$-approximates the benchmark~$\bm$, then on any
i.i.d.~distribution it achieves at least a~$\beta$ fraction of the
expected residual surplus of every mechanism.  Naturally, no prior-free
mechanism is better than 1-approximate; we give stronger lower bounds
in Section~\ref{subsec:lb}.

%% BREAK

\paragraph{Remark.}
Restricting attention in Definition~\ref{def:opt} to optimal
mechanisms that use symmetric tie-breaking rules is crucial for
obtaining a tractable benchmark.  
For example, when $\dist$ is an i.i.d.\ distribution satisfying the
MHR assumption, Theorem~\ref{t:main} implies that {\em every} constant
allocation rule that allocates all items (with zero payments) is optimal 
(recall Corollary~\ref{c:k-lottery}). 
For a single-item auction and a valuation profile $\vals$, say with
the first bidder having the highest valuation, the mechanism that
always awards the good to the first bidder and charges nothing
achieves the full surplus.  (Of course, this mechanism has extremely
poor performance on many other valuation profiles.)
As no incentive-compatible money-burning mechanism always achieves a
constant fraction of the full surplus (see Proposition~\ref{prop:poa_lb}), 
allowing arbitrary asymmetric optimal mechanisms to participate
in~\eqref{eq:bm} would yield an unachievable benchmark.

%%% BREAK

\subsection{Multi-Unit Auctions and Priority Lotteries}

The definition of~$\bm$ in~\eqref{eq:bm} is meaningful in general
single-parameter settings, but appears to be analytically tractable
only in problems with additional structure, symmetry in particular.
We next give a simple description of this benchmark, and an even
simpler approximation of it, for multi-unit auctions.

In the MHR case, the optimal mechanism is the $k$-Unit $0$-Lottery.
In the anti-MHR case, the optimal mechanism is the $k$-Unit Vickrey
Auction, which for valuation profile $\vals$ is equivalent to the
$k$-Unit $(\valith[k+1]$-Lottery
(Figure~\ref{fig:ironed-residual-cases}).  If these were the only
cases we could conclude that $\bm(\vals) = \max\{\frac{k}{n}\sum_i
\vali, \sum_{i\leq k} (\valith - \valith[k+1])$ would be appropriate.
Unfortunately, the case where the hazard rate it not monotone in
either direction requires more careful analysis.

%Recall that in a multi-unit auction, there are $k$ identical units of
%an item to be allocated to $n$ agents, each of whom wants only one
%unit.  
What does $\Mye_\dist$ look like for such problems?  
When the distribution on valuations satisfies the
MHR assumption, $\Mye_\dist$ is a $k$-unit lottery.
Under the anti-MHR assumption, $\Mye_\dist$ is a $k$-unit
Vickrey auction.
We can view the $k$-unit Vickrey auction, ex post, as a $k$-unit
$\valith[k+1]$-lottery,
% (Definition~\ref{def:p-lottery}),
where $\valith[k+1]$ is the $k+1$st highest valuation, in the
following sense.
\begin{definition}[$k$-unit $p$-lottery]\label{def:p-lottery}
The {\em $k$-unit $p$-lottery}, denoted $\Lottery_p$, allocates to
agents with value at least $p$ at price $p$.  If there are more than
$k$ such agents, the winning agents are selected uniformly at random.
\end{definition}

One natural conjecture is that, ex post, the outcome of every
mechanism of the form $\Mye_\dist$ on a valuation profile $\vals$
looks like a $k$-unit $p$-lottery for some value of $p$.  For non-MHR
distributions~$\dist$, however, $\Mye_\dist$ can assume the more
complex form of a two-level lottery, ex post.

\begin{definition}[$k$-unit $(p,q)$-lottery]\label{def:pq}
A {\em $k$-unit $(p,q)$-lottery}, denoted $\Lottery_{p,q}$, is the
following mechanism.  Let $s$ and $t$ denote the number of agents
with bid in the range $(p,\infty)$ and $(q,p]$, respectively.
\begin{enumerate}

\item If $s \ge k$, run a $k$-unit $p$-lottery on the top~$s$ agents.

\item If $s+t \leq k$, sell to the top $s+t$ agents at price $q$.

\item Otherwise, run a $(k-s)$-unit $q$-lottery on the agents with
bid in $(q,p]$ and allocate each of the top~$s$ agents a good at
the price dictated by Lemma~\ref{l:ic}:
$
\tfrac{k-s+1}{t+1} q + 
\tfrac{s+t-k}{t+1} p.
$
\end{enumerate}
\end{definition}

We now prove that for every i.i.d.~distribution $\dist$ and every 
valuation profile $\vals$, the mechanism $\Mye_\dist$ results in an
outcome and payments that, ex post, are identical to those of a $k$-unit
$(p,q)$-lottery.

\begin{lemma} \label{lem:benchmark} 
For every valuation profile~$\vals$, there is a $k$-unit
$(p,q)$-lottery with expected residual surplus 
$\bm(\vals)$.
%
%For valuations $\vals$ and $p'$ and $p''$ chosen to optimize the
%residual surplus of the $k$-Unit $p'$-Priority $p''$-Lottery on
%$\vals$, 
%$$
%\bm(\vals) = \sup_\dist \Mye_\dist(\vals) = \Lottery_{p',p''}(\vals).
%$$
\end{lemma}

\begin{proof}
By definition~\eqref{eq:bm}, we only need to show that, for every
i.i.d.\ distribution $\dist$ and valuation profile~$\vals$, $\Mye_\dist(\vals)$ has the same outcome as a $k$-unit $(p,q)$-lottery.

Fix~$\dist$ and~$\vals$, and assume that $\val_1 \ge \cdots \ge
\val_n$.  Thus, $\ironmarg(\val_1) \ge \cdots \ge \ironmarg(\val_n)$.
Recall by Definition~\ref{def:opt} that $\Mye_\dist$ maximizes
$\sum\nolimits_i \ironmarg(\vali)\alloci$ and breaks ties randomly.
Define $S = \{ i \,:\, \ironmarg(\val_i) > \ironmarg(\val_{k+1}) \}$,
$T = \{ i \,:\, \ironmarg(\val_i) = \ironmarg(\val_{k+1})\}$, $s =
\setsize{S}$, and $t = \setsize{T}$.  Assume we are in the more
technical case that $0 < s < k < s+t$ (the other cases
follow from similar arguments).  It is easy to see that $\Mye_\dist$
assigns a unit to each bidder in~$S$ and allocates the remaining $k-s$
units randomly to bidders in~$T$.  Let $q = \inf \{ \val \,:\,
\ironmarg(\val) = \ironmarg(\val_{k+1}) \}$ and 
$p = \inf \{ \val \,:\, \ironmarg(\val) > \ironmarg(\val_{k+1}) \}$. 
%$p = \inf \{ \val \,:\, \ironmarg(\val) = \ironmarg(\val_{i}) \}. 
%(or $p = q$ if $S = \emptyset$).  
The allocation is thus identical to a $k$-unit $(p,q)$-lottery.  It
remains to show that the payments are correct.

Let $\alloci(\cdot)$ be as in Definition~\ref{def:opt}.
Consider agent $i \in T$.  If $i$ bids below $q$ then $i$ loses, while
if $i$ bids at least $q$ then $i$ wins with the same probability as
when $i$ bids $\vali$.  Therefore, $\alloci(\val)$ for $\val \leq \vali$
is step function at $\val = q$.  Thus, $\pricei(\vali) = \vali
\alloci(\vali) - \int_0^{\vali} \alloci(\val) d\val = q
\alloci(\vali)$ and $i$'s payment on winning is 
$\pricei(\vali) / \alloci(\vali) = q$, as in
%Thus, the payments for agents
%in $T$ match those of 
the $k$-unit $(p,q)$-lottery.
Now consider an agent $i \in S$.  If $i$ were to bid $\val < q$, $i$ would
lose, i.e., $\alloci(\val) = 0$.  If $i$ were to bid $\val \in [q,p)$
then $i$ would leave the set $S$ of agents guaranteed a unit, and
would join the set $T$, making $t+1$ agents who would share
$s-k+1$ remaining items by lottery.  In this case, $\alloci(\val) =
\frac{s-k+1}{t+1}$.  Of course, $\alloci(\val) = 1$
when $\val > p$.
As $\alloci(\cdot)$ is identical to the allocation function for agent
$i$ in the $k$-unit $(p,q)$-lottery, the payments are also identical.
\end{proof}

As we have seen, mechanisms of the form $\Mye_\dist$ can produce
outcomes not equivalent to that of a one-level lottery.
Our next lemma
shows that $k$-unit $p$-lotteries give 2-approximations to $k$-unit
$(p,q)$-lotteries.
This allows us to relate the performance of one-level
lotteries to our benchmark (Corollary~\ref{cor:lottery}),
which will be useful in our construction of an approximately optimal
prior-free mechanism in the next section. 

\begin{lemma} \label{lem:lotteries}
For every valuation profile $\vals$ and parameters $k$, $p$, and $q$,
there is a $p'$ such that the $k$-unit $p'$-lottery obtains at least
half of the expected residual surplus of the $k$-unit
$(p,q)$-lottery.
\end{lemma}

\begin{proof}
%(of Lemma~\ref{lem:lottery}.)
We prove the lemma by showing that $\Lottery_{p,q}(\vals) \le
\Lottery_{p}(\vals) + \Lottery_{q}(\vals)$.
We argue the stronger statement that
each agent enjoys at least as large a combined expected
utility in $\Lottery_{p}(\vals)$ and $\Lottery_{q}(\vals)$ as in
$\Lottery_{p,q}(\vals)$.

Let $S$ and $T$ denote the agents with values in the ranges
$(p,\infty)$ and $(q,p]$, respectively.  Let $s = \setsize{S}$ and $t
= \setsize{T}$.  Assume that $0 < s < k < s+t$ as otherwise the
$k$-unit $(p,q)$ lottery is a one-level lottery.  Each agent in~$T$
participates in a $k$-unit $q$-lottery in $\Lottery_{q}$ and only a
$(k-s)$-unit $q$-lottery in $\Lottery_{p,q}$; its expected utility can
only be smaller in the second case.  Now consider $i \in S$.  Writing
$r = (k-s+1)/(t+1)$, we can upper bound the utility of an agent $i$ in $\Lottery_{p,q}$ by
$$v_i - rq - (1-r)p = 
(1-r)(v_i-p) + r(v_i-q)
\le (v_i-p) + \tfrac{k}{s+t} \cdot (v_i-q),$$
which is the combined expected utility that the agent obtains from
participating in both a $k$-unit $p$-lottery (with $s < k$) and a
$k$-unit $q$-lottery.
\end{proof}

\begin{corollary} \label{lem:lottery} \label{cor:lottery} 
For every valuation profile~$\vals$, there is a $k$-unit $p$-lottery
with expected residual surplus at least $\bm(\vals)/2$.
\end{corollary}

%Corollary~\ref{lem:lottery}
%is important because it reduces the problem of designing a
%money-burning mechanism that $O(1)$-approximates~$\bm$ 
%to that of approximating the residual surplus achieved by the optimal
%$k$-unit $p$-lottery.

%%% BREAK

\begin{proof}
It is obvious that $\bm(\vals) \geq \bm'(\vals)/2$ because a
$p$-lottery is a special case of a priority-lottery.  For the opposite
direction, we show that residual surplus of every $k$-unit $p'$-priority
$p''$-lottery is less than the sum of the residual surpluses of the
$p'$-lottery and $p''$-lottery.  This is a simple charging argument.
Notice that if we ran a $p'$-lottery we would sell to all agents in
$S'$ at price $p'$.  If we ran a $p''$-lottery we would sell to agents
in $S''$ at price $p''$ with probability $k/k''$.  The $p'$-priority
$p''$-lottery (using the definitions of $S'$, $S''$, $L$, $q$, $k'$,
and $k''$ from above) sells to agents in $S'$ at price $(1-q) p' + q
p''$.  Recall that $q = \frac{k - k' +
1}{k'' - k' + 1} \geq \frac{k}{k''}$.

We claim that the combined residual surplus from an agent $i$ in $S'$ in
the $p'$-lottery and $p''$-lottery is at least the residual surplus
from the $p'$-priority $p''$-lottery.  The former term is $X = \vali - p'
+ \frac{k}{k''}(\vali - p'')$ the latter term is 
\begin{align*}
Y &= \vali - (1-q) p' + q p'' \\
  &= (1-q)(\vali - p') + q(\vali - p'')\\
  &\leq \vali - p' + \tfrac{k}{k''} (\vali - p'')\\
  &= X.
\end{align*}
For an agent $i$ in $L$, the $p'$-priority $p''$-lottery has residual
surplus $\frac{k-k'}{k''-k'} (\vali - \price'')$ of course this is at
most $\frac{k}{k''}(\vali - \price'')$, the residual surplus for $i$
in the $\price''$-lottery.  This concludes the proof.
\end{proof}

%%% BREAK

\subsection{Lower Bounds for Prior-Free Money-Burning Mechanisms}\label{subsec:lb}

This section establishes a lower bound of~$4/3$ on the approximation
ratio of every prior-free money-burning mechanism.  This implements
the fourth step of the prior-free mechanism design template outlined
in the Introduction.  Our proof follows from showing that there is a
i.i.d.~distribution~$\dist$ for which the expected value of our
benchmark~$\bm$ is a constant factor larger than the expected residual
surplus of an optimal mechanism for the distribution, such as
$\Mye_{\dist}$.  This shows an inherent gap in the prior-free analysis
framework that will manifest itself in the approximation factor of
every prior-free mechanism.

\begin{proposition}
No prior-free money-burning mechanism has approximation ratio better
than $4/3$ with respect to the benchmark~$\bm$, even for the special
case of two agents and one unit of an item.
\end{proposition}

\begin{proof}
Our plan to exhibit a distribution over valuations such that the
expected residual surplus of the Bayesian optimal mechanism is at most
$3/4$ times that of the expected value of the benchmark~$\bm$.  It
follows that, for every randomized mechanism, there exists a valuation
profile $\vals$ for which its expected residual surplus is at most
$3/4$ times $\bm(\vals)$.

Suppose there are two agents with valuations drawn i.i.d.\ from a
standard exponential distribution with density $f(x) = e^{-x}$ on
$[0,\infty)$.  There is a single unit of an item.
This distribution has constant hazard rate, so
a lottery is an optimal mechanism (as is every mechanism that always
allocates the item and charges payments according to
Lemma~\ref{l:ic}).
The expected (residual) surplus of this mechanism is~1.

To calculate the expected value of~$\bm(\vals)$, first note that for a
valuation profile $(\val_1,\val_2)$ with $\val_1 \ge \val_2$,
the optimal $(p,q)$-lottery either chooses
$p = q = 0$ or $p = \val_2$ and $q = 0$.  Thus,
$$
\bm(\vals) = \max \left\{ \tfrac{\val_1+\val_2}{2}, \val_1 -
\tfrac{\val_2}{2} \right\}.
$$
Next, note that $(\val_1+\val_2)/2 \ge \val_1-(\val_2/2)$
if and only if $\val_1 \le 2\val_2$.

Now condition on the smaller valuation $v_2$ and write $v_1 = v_2 + x$
for $x \ge 0$. 
Since the exponential distribution is memoryless, $x$ is exponentially
distributed.  
Thus, $\expect{\bm(\val_1,\val_2) | \val_2}$ can be computed as follows (integrating over
possible values for $x \in [0,\infty)$):
\begin{eqnarray*}
\expect{\bm(v_1,v_2) | v_2}
& = &
\int_0^{v_2} \left( v_2+\frac{x}{2} \right) e^{-x}dx
+ 
\int_{v_2}^{\infty} \left(\frac{v_2}{2} + x\right)e^{-x}dx\\
& = &
v_2(1-e^{-v_2}) + \tfrac{1}{2} \left( 1 - (v_2 + 1)e^{-v_2} \right)
+ \tfrac{v_2}{2}e^{-v_2} + (v_2 + 1)e^{-v_2}\\
& = &
v_2 + \tfrac{1}{2} \left( 1 + e^{-v_2} \right).
\end{eqnarray*}

The smaller value~$v_2$ is distributed according to an exponential
distribution with rate~2.
Integrating out yields
\begin{eqnarray*}
\expect{\bm(v_1,v_2)}
& = &
\int_{0}^{\infty} (2e^{-2x})\left(x + \tfrac{1}{2} + \tfrac{1}{2}e^{-x}
\right)dx\\
& = &
\tfrac{1}{2} + \tfrac{1}{2} + \int_0^{\infty} e^{-3x}dx\\
& = & \tfrac{4}{3}.
\end{eqnarray*}
\end{proof}

For the special case of two agents and a single good, an appropriate
mixture of a lottery and the Vickrey auction is a $3/2$-approximation
of the benchmark~$\bm(\vals)$.  Determining the best-possible
approximation ratio is an open question, even in the two agent, one
unit special case.

\begin{proposition}
For two bidders and a single unit of an item, there is a prior-free
mechanism that $3/2$-approximates the benchmark~$\bm$.
\end{proposition}

\begin{proof}
Consider a valuation profile with $\val_1 \ge \val_2$.
If we run a Vickrey auction with probability~$1/3$ and a lottery with
probability~$2/3$, then the expected residual surplus is
$$
\tfrac{1}{3}\left( v_1 - v_2 \right)
+ \tfrac{2}{3} \left( \tfrac{v_1+v_2}{2} \right)
= \tfrac{2}{3}v_1
\ge \tfrac{2}{3} 
\max \left\{ \tfrac{\val_1+\val_2}{2}, \val_1 -
\tfrac{\val_2}{2} \right\} = \tfrac{2}{3}
\bm(\vals).
$$
\end{proof}

}

\section{Platform Design and Prior-Free Profit Maximization}
%\section{Prior-Free Profit Maximization: A Survey}
\label{sec:profit}

While the objective of profit maximization is not central to this paper,
there have been a number of studies of prior-free mechanisms
for profit maximization that are relevant to platform design.  
%In the
%discussion that follows we consider 
This section discusses digital good settings
(Section~\ref{subsec:profit-digital-good}), multi-unit settings
(Section~\ref{subsec:profit-multi-unit}), and more general settings
(Section~\ref{subsec:profit-general}).  We describe these results
using the terminology of platform design.  An important goal of
our discussion is to compare our performance benchmark, which is
justified by Bayesian foundations, with the prior-free benchmarks
employed in this literature.

\subsection{Digital Good Settings}
\label{subsec:profit-digital-good}

The simplest setting for platform design is that of a {\em digital
  good}, i.e., a multi-unit setting with the same number $k=n$ of
units as (unit-demand) agents.  This environment admits a trivial
optimal mechanism for surplus and residual surplus (serve all agents
for free); but for profit maximization, designing a good platform
mechanism is a challenging problem.

%
% Bayesian optimal, benchmarks 
%
The Bayesian optimal mechanism for a digital good when values are
drawn i.i.d.~from the distribution~$\dist$ simply posts the monopoly
price for $\dist$, i.e., an $\oneprice$ that maximizes $\oneprice
(1-\dist(\oneprice))$.  In the language of the preceding sections, this
optimal mechanism can be viewed as an $\oneprice$-lottery.  The
performance benchmark described in Section~\ref{sec:benchmark}
simplifies to
\begin{align}
\label{eq:benchmark-digital-goods}
\bm(\vals) &= \max\nolimits_i i \valith.
\end{align}

For $n=1$ agent, the benchmark \eqref{eq:benchmark-digital-goods}
equals the surplus and, as we concluded in
Section~\ref{sec:monop-pricing}, it cannot be well approximated by any
platform mechanism.  Because of this technicality, the benchmark $\bm
\super 2$ to which prior-free digital good auctions have been compared
\citep[e.g.,][]{GHKSW-06} explicitly excludes the possibility of
deriving all its profit from one agent:
\begin{align}
\label{eq:benchmark2-digital-goods}
\bm \super 2(\vals) &= \max\nolimits_{i\geq 2} i \valith.
\end{align}

Therefore, up to the technical difference between
benchmarks~\eqref{eq:benchmark-digital-goods}
and~\eqref{eq:benchmark2-digital-goods}, the prior-free literature for
digital goods is compatible with our framework for platform design.
Some notable results in this literature are as follows.  For reasons
we explain shortly, we refer to the
approximation of $\bms$ as giving near-universal adoption.  Optimal
platform mechanisms are given in \citet{GHKSW-06} and \citet{HM-05}
for two and three-player digital goods settings, where the competitive
advantages for near-universal adoption are precisely 2 and $13/6$,
respectively.  
As the number $n$ of agents tends to infinity, \citet{GHKSW-06} show
that there is no platform mechanism that is near-universally adopted with
competitive advantage less than 2.42; and \citet{CGL-14} show that
there exists a mechanism that matches this bound.  This optimal
platform mechanism is fairly complex; \citet{HM-05}
had previously given a simple mechanism that is near-universally adopted
with competitive advantage 3.25.

%There is, of course, a fundamental problem with t
The benchmark $\bm \super 2$ does not satisfy our most basic
requirement for benchmarks: there exist distributions for which the
expected Bayesian optimal revenue exceeds the expected value of the
benchmark.\footnote{For example, consider $n$ agents, each having
  value~$n^2$ with probability $1/n^2$ and 0 otherwise.  The expected
  revenue of a Bayesian optimal mechanism is~$n$.  The expected value
  of the benchmark $\bm \super 2$ is bounded above by a constant,
independent of~$n$.}
Therefore, mechanisms that approximate the benchmark may not be
universally adopted.  This problem is not an artifact of the $\bm
\super 2$ benchmark and is inherent to profit
maximization: pathological distributions show that
% it is easy to verify that there does not exist a
there is {\em no} benchmark $\bm'$ and constant $\beta$ that satisfy $\beta
\expect[\vals]{\Mye_\dist(\vals)} \geq \expect[\vals]{\bm'(\vals)}
\geq \expect[\vals]{\Mye_\dist(\vals)}$ for every distribution $\dist$.

For the profit objective, the requirement of universal adoption can be
relaxed to adoption for every distribution in a
large permissive class of distributions.  Approximation
of the benchmark $\bms$ implies such a near-universal adoption, in the
following sense. 
\begin{proposition}
\label{prop:near-universal-adoption}
If mechanism $\mech$ is a $\beta$ approximation to $\bms$ on all
valuation profiles, i.e., $\mech(\vals) \geq \bms(\vals)/\beta$ for
every $\vals$, then $\mech$ is adopted with competitive advantage
$\beta$ on distributions $\dist$ with
$\expect[\vals]{\bms(\vals)} \geq \expect[\vals]{\Mye_\dist(\vals)}$.
\end{proposition}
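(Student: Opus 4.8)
The plan is to observe that the pointwise approximation guarantee passes to expectations by monotonicity and linearity of expectation, and then to chain this with the defining inequality of the permissive class of distributions. There is essentially no hard step here; the proposition is a direct unwinding of the definition of ``adopted with competitive advantage $\beta$.''

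First I would fix an arbitrary distribution $\dist$ in the stated class, i.e., one satisfying $\expect[\vals]{\bms(\vals)} \geq \expect[\vals]{\Mye_\dist(\vals)}$, where $\vals = (\vali[1],\ldots,\vali[n])$ is drawn i.i.d.\ from $\dist$. Applying the hypothesis $\mech(\vals) \geq \bms(\vals)/\beta$, which holds for \emph{every} valuation profile, and taking expectations over $\vals$, I get $\expect[\vals]{\mech(\vals)} \geq \tfrac{1}{\beta}\,\expect[\vals]{\bms(\vals)}$ by linearity and monotonicity of expectation. Combining with the assumed inequality for $\dist$ yields $\expect[\vals]{\mech(\vals)} \geq \tfrac{1}{\beta}\,\expect[\vals]{\Mye_\dist(\vals)}$.

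Finally I would interpret this: since $\Mye_\dist$ is the (symmetric) Bayesian optimal mechanism for $\dist$, the inequality says that $\mech$'s expected profit on $\dist$ is at least a $1/\beta$ fraction of the optimal expected profit on $\dist$, so the approximation factor of $\mech$ on $\dist$ is at most $\beta$; as $\dist$ was an arbitrary member of the class, $\mech$ is adopted with competitive advantage $\beta$ on every such $\dist$. The only point worth flagging — and it is not really an obstacle — is that no regularity hypothesis on $\dist$ is needed beyond the one explicitly imposed on the permissive class, and that $\bms$ here denotes the digital-good benchmark $\max_{i\geq 2} i\valith$ from \eqref{eq:benchmark2-digital-goods}, so the argument is insensitive to which particular benchmark variant is used as long as the class is defined to dominate $\expect[\vals]{\Mye_\dist(\vals)}$ in expectation.
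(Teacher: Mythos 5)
Your proof is correct and is the same straightforward chaining of definitions that the paper implicitly relies on (the proposition is stated without a separate proof precisely because it is immediate): take expectations of the pointwise guarantee, chain with the defining inequality of the distribution class, and invoke the definition of adoption with competitive advantage $\beta$. Nothing is missing.
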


Proposition~\ref{prop:near-universal-adoption} has bite in that it is
satisfied by most relevant distributions.  The following lemma, which
we prove in Appendix~\ref{app:profit-benchmark-lb}, gives a sufficient
condition for the distribution.  Intuitively, this condition states
that the revenue from posting a price does not drop too quickly
as that price is lowered, and it is a strict generalization of
the regularity condition of \citet{mye-81}.  
This condition is not satisfied in the bad example above,
as most of the optimal revenue is derived from one high-valued agent.

\begin{lemma}
\label{l:bms>mye}
For digital good settings and every distribution $\dist$ with
$\val\,(1-\dist(\val))/\dist(\val)$ non-increasing,
$\expect[\vals]{\bms(\vals)} \geq \expect[\vals]{\Mye_\dist(\vals)}$.
\end{lemma}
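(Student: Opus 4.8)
\emph{Setup and reduction.} Write $R(\oneprice)=\oneprice\,(1-\dist(\oneprice))$ for the single-agent revenue curve, and $R^{\star}=\max_{\oneprice}R(\oneprice)$, attained at a monopoly price $\oneprice^{\star}$. Since the digital-good Bayesian optimal mechanism posts $\oneprice^{\star}$ to all $n$ agents, $\expect[\vals]{\Mye_{\dist}(\vals)}=nR^{\star}$, so it suffices to prove $\expect[\vals]{\bms(\vals)}\ge nR^{\star}$. I will use the formula $\bms(\vals)=\max_{i\ge 2}i\,\valith[i]$ together with the observation that this equals $\sup\{\,\oneprice\,N_{\oneprice}:N_{\oneprice}\ge 2\,\}$, where $N_{\oneprice}=|\{j:\vali[j]\ge\oneprice\}|$. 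The hypothesis will enter through the fact that it forces prices \emph{below} $\oneprice^{\star}$ to still collect a decent fraction of $R^{\star}$ --- which is precisely the revenue that the ``$i\ge 2$'' restriction in $\bms$ threatens to discard.

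\emph{Base case $n=2$, which contains the whole idea.} Here $\bms(\vals)=2\valith[2]=2\min(\vali[1],\vali[2])$, so
\begin{align*}
\expect[\vals]{\bms(\vals)}=2\int_{0}^{\infty}(1-\dist(t))^{2}\,dt=4\int_{0}^{\infty}R(t)\,\dens(t)\,dt,
\end{align*}
the second step being integration by parts (the boundary term vanishes). Passing to quantile coordinates $q=\dist(t)$ and setting $\xi(q)=R(\dist^{-1}(q))=\dist^{-1}(q)(1-q)$, the right-hand side is $4\int_{0}^{1}\xi(q)\,dq$. Now the hypothesis ``$\val(1-\dist(\val))/\dist(\val)=R(\val)/\dist(\val)$ non-increasing'' says exactly that $q\mapsto\xi(q)/q$ is non-increasing, while monotonicity of $\dist^{-1}$ says exactly that $q\mapsto\xi(q)/(1-q)$ is non-decreasing. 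With $q^{\star}=\dist(\oneprice^{\star})$ (so $\xi(q^{\star})=R^{\star}$; if $q^{\star}\in\{0,1\}$ then $R^{\star}=0$ and there is nothing to prove), these two monotonicities give $\xi(q)\ge(q/q^{\star})R^{\star}$ for $q\le q^{\star}$ and $\xi(q)\ge\tfrac{1-q}{1-q^{\star}}R^{\star}$ for $q\ge q^{\star}$, hence
\begin{align*}
\int_{0}^{1}\xi(q)\,dq \ge \tfrac{q^{\star}}{2}R^{\star}+\tfrac{1-q^{\star}}{2}R^{\star} = \tfrac{R^{\star}}{2},
\end{align*}
and therefore $\expect[\vals]{\bms(\vals)}=4\int_{0}^{1}\xi(q)\,dq\ge 2R^{\star}=\expect[\vals]{\Mye_{\dist}(\vals)}$ (this is tight).

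\emph{General $n$, and the main obstacle.} For $n>2$ I would induct on $n$, conditioning on the second-highest value $Y=\valith[2]=y$; the remaining $n-2$ values are then i.i.d.\ from the distribution $\dist_{y}$ obtained by conditioning $\dist$ on $[0,y]$, and a key structural fact is that $\dist_{y}$ \emph{inherits} the hypothesis --- a one-line computation gives
\begin{align*}
\frac{\val\,(1-\dist_{y}(\val))}{\dist_{y}(\val)}=\dist(y)\cdot\frac{\val\,(1-\dist(\val))}{\dist(\val)}+(\dist(y)-1)\,\val,
\end{align*}
visibly a sum of two non-increasing functions of $\val$. I would then bound $\bms(\vals)$ from below by $\max\{\,2y,\ \text{(a benchmark-type functional of the }n-2\text{ residual i.i.d.\ values)}\,\}$, push the expectation through the outer $\max$ by convexity, feed the residual term to the inductive hypothesis, control the contribution of the prices below $\oneprice^{\star}$ with the $n=2$-style quantile estimate, and integrate out $y$ against the law of $\valith[2]$ to reassemble $nR^{\star}$. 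The hard part is this last assembly. The first reason it is hard: $\bms$ is a \emph{maximum over $i$}, not a single order statistic, so expectation and maximum cannot simply be interchanged --- applying the pointwise bound $\bms(\vals)\ge\oneprice\,N_{\oneprice}\,\mathbf{1}[N_{\oneprice}\ge 2]$ at a fixed $\oneprice$ and only then taking expectations yields merely $\expect[\vals]{\bms(\vals)}\ge\sup_{\oneprice}nR(\oneprice)(1-\dist(\oneprice)^{n-1})$, which in general falls short of $nR^{\star}$ (for the uniform distribution with $n=2$ it gives $8/27$ against $nR^{\star}=\tfrac{1}{2}$). The second reason: conditioning on $\valith[2]=y$ turns one i.i.d.\ coordinate into a deterministic value, so matching the residual profile up with a genuinely i.i.d.\ sample (and handling the small cases $n=3,4$) takes care. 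Correctly balancing the ``many agents above $\oneprice^{\star}$'' and the ``few agents above $\oneprice^{\star}$'' regimes --- the exact analogue of the split at $q^{\star}$ in the $n=2$ computation --- is where the regularity-type hypothesis becomes indispensable; indeed, the pathological distribution in the footnote just before the lemma shows that no such benchmark inequality can hold without it.
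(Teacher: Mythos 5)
Your $n=2$ computation is correct and, in fact, is a pleasant alternative to what the paper does there: by passing to quantile coordinates and reading the hypothesis as ``$\xi(q)/q$ non-increasing'' together with ``$\xi(q)/(1-q)$ non-decreasing,'' you get $\int_0^1 \xi \ge R^\star/2$, which is exactly the content of the paper's ``inscribed triangle'' Lemma~\ref{l:inscribed-triangle} (two-agent Vickrey revenue $\ge$ single-agent optimal), proved directly rather than geometrically. Your formula for the conditional hazard-like ratio under conditioning to $[0,y]$ is also correct.

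But the general $n$ case is a genuine gap, and you say so yourself. The inductive plan --- condition on $\valith[2]=y$, hope to ``reassemble $nR^\star$'' --- is not executed, and in fact it is not the right decomposition. The paper avoids induction entirely by conditioning on the \emph{third}-highest value $\valith[3]=z$ and splitting on whether $z$ is below or above the monopoly price $\price^\star$. When $z\le\price^\star$, the optimal digital-good auction can sell to at most the top two agents, so its revenue is at most twice the optimal single-agent revenue for the conditional distribution (conditioned on exceeding $z$); a conditioning lemma (the paper's Lemma~\ref{l:conditioning-lemma}, which is the \emph{lower}-truncation version of the inheritance fact you noted for upper truncation) shows the hypothesis persists, and Lemma~\ref{l:inscribed-triangle} gives $\bms\ge 2\valith[2]\ge$ optimal revenue. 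When $z>\price^\star$, the optimal auction sells some $k^\star\ge 3$ units, each at price $\price^\star$, and then trivially $\bms\ge k^\star\valith[k^\star]\ge k^\star\price^\star=\Mye_\dist$. This is precisely the ``balance between many-winners and few-winners'' split that you correctly identify as the crux, but it is accomplished by a direct case analysis on $\valith[3]$ rather than by an induction on $n$, and it uses the pointwise inequalities $\bms\ge 2\valith[2]$ and $\bms\ge k^\star\valith[k^\star]$ rather than any interchange of $\sup$ and $\expect{\cdot}$. Your write-up correctly rules out the naive interchange but does not then supply the argument that actually works.

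A small additional mismatch: you condition on $\valith[2]$ and verify inheritance of the hypothesis under \emph{upper} truncation; the paper conditions on $\valith[3]$ and needs inheritance under \emph{lower} truncation (conditioning to exceed a price), because it is the top two agents' conditional distribution that feeds into the Bulow--Klemperer-style comparison. Both inheritance facts are true, but only the latter is used in the complete proof.
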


\Xcomment{
These distributions seem rather pathological, especially if~$n$ is not tiny;
therefore, approximating the
benchmark $\bm \super 2$ should be enough to guarantee adoption on the
large class of distributions that do not exhibit such behavior.  We
make this idea precise with a parameterized analysis.  The next
definition describes the key parameter.  Define 
$\Mye^{(1)}_{\dist}(\vals) = \Mye_{\dist}(\vals)$ for valuation
profiles with at most one winner and 0 otherwise, 
and $\Mye^{(2)}_{\dist}(\vals) = \Mye_{\dist}(\vals)$ for valuation
profiles with at least two winners and 0 otherwise.
%To
%describe this class of distributions consider the following
%generalization of regularity.

%NEW PROPOSED DEFN
\begin{definition}\label{def:tail}
A distribution $\dist$ is {\em $\alpha$-tail regular for $n \ge 2$ agents}
if the expected contribution of single-winner valuation profiles
to the optimal revenue $\expect{\Mye^{(1)}_{\dist}(\vals)}$ is at most
$\alpha$ times the expected value $\expect{\valith[2]}$ of the
second-highest valuation.
\end{definition} 

To develop intuition for Definition~\ref{def:tail}, consider a
single-item auction with $n$ bidders with valuations drawn
from~$\dist$.  Assume that the optimal auction is a Vickrey auction
with a reserve price.  This optimal auction earns more revenue than a
standard Vickrey auction only when its reserve price binds, an
increasingly unlikely event as the number~$n$ of bidders grows.
Since the expected revenue of the standard Vickrey single-item auction is
precisely $\expect{\valith[2]}$ and the expected revenue of an optimal
single-item auction is at least the expression
$\expect{\Mye^{(1)}_{\dist}(\vals)}$ in Definition~\ref{def:tail},
this heuristic argument suggests that non-pathological distributions
should approach 1-tail regular as the number of bidders grows.

Definition~\ref{def:tail}, which captures a certain
non-pathology of the distribution, gives a Bayesian justification for the
usage of the benchmark $\bm \super 2$ for profit-maximizing platform
design.
Precisely, the expected optimal revenue with a tail-regular
distribution is not much more than the expected value of the benchmark
$\bm \super 2$.

\begin{lemma}
\label{l:tail-regular}
In a digital good setting with values drawn i.i.d.~from 
a distribution $\dist$ that is $\alpha$-tail regular for the number of
agents~$n \ge 2$,
$$\left(1+\frac{\alpha}{2}\right) \cdot \expect{\smash{\bm \super 2
    (\vals)}} \geq \expect{\Mye_\dist(\vals)}.$$
\end{lemma}

\begin{proof}
By writing $\Mye_{\dist} = \Mye^{(1)}_{\dist} + \Mye^{(2)}_{\dist}$,
we can look at the expected contribution to the optimal 
revenue in the cases where there is at most one winner and
at least two winners.  
Since~$\dist$ is $\alpha$-tail regular, 
$\expect{\Mye^{(1)}_{\dist}}$ is at most 
$\alpha \expect{\valith[2]}$.  Since $\bms(\vals) \ge 2 \valith[2]$
for every valuation profile, this term is at
most~$\tfrac{\alpha}{2} \expect{\bms(\vals)}$.
%the revenue of the
%optimal auction that sells at most one item, and this revenue, by
%$n$-tail regularity, is at most $\expect{2 \valith[2]} \leq
%\expect{\bms(\vals)}$.  
The second term~$\expect{\Mye^{(2)}_{\dist}}$, which derives revenue
only from outcomes that have  
two or more winners and a common selling price,
is at most $\expect{\bms(\vals)}$.
Summing both these terms shows that the optimal auction has expected
revenue $\expect{\Mye_\dist(\vals)} \leq
(1+\tfrac{\alpha}{2})\,\expect{\bms(\vals)}$.
\end{proof}

Lemma~\ref{l:tail-regular} implies that a $\beta$-approximation to
$\bm \super 2$ in an $n$-agent digital good setting implies (ex ante)
adoption for $\alpha$-tail regular distributions provided the competitive
advantage is at least $(1+\tfrac{\alpha}{2})\beta$.

Most distributions of interest are $\alpha$-tail regular with small
values of~$\alpha$.  For example, every regular distribution is
$\tfrac{n}{n-1}$-tail regular for~$n \ge 2$ bidders.\footnote{This
  follows from the main result in \citet{BK-96}, which states that, for
  every $n \ge 2$, the
  expected revenue of the Vickrey single-item auction with $n$ bidders with
  valuations drawn i.i.d.\ from a regular distribution (i.e.,
  $\expect{\valith[2]}$) is at least
  that of an optimal single-item auction with only $n-1$ such
  bidders.  The contribution~$\expect{\Mye^{(1)}_{\dist}(\vals)}$ of
  profiles with at most one winner is at most the expected revenue of
  an optimal single-item auction (with $n$ bidders), which is at most
  $\tfrac{n}{n-1}$ times the expected revenue of an optimal
  single-item auction with only $n-1$ bidders.}
Definition~\ref{def:tail} is much weaker than regularity, however ---
essentially, it only insists on regularity at the tail of the
distribution, where the highest valuations are most likely to
lie.\footnote{Other ways of controlling irregularity at a
  distribution's tail can also be used to prove analogs of
  Lemma~\ref{l:tail-regular}.  For example, if, for a
  distribution~$\dist$ and a number~$n$ of bidders, the contribution
  $\expect{\Mye_{\dist}^{(1)}(\vals)}$ of valuations profiles with at
  most one winner is at most $\gamma$ times the optimal expected revenue
  $\expect{\Mye_{\dist}(\vals)}$, then the optimal expected revenue is
  at most $\tfrac{1}{1-\gamma} \cdot \expect{\bms(\vals)}$.}
}

%Notice that regularity, by the theorem of \citet{BK-96}, implies tail
%regularity.  Further, as we should expect, $n$-tail regularity only
%constrains the tail of the distribution, i.e., $\val$ such that
%$\dist(\val) \approx 1-1/n$.  The distribution can otherwise be
%irregular.  The following lemma summarizes that a
%$\beta$-approximation to $\bm \super 2$ in an $n$ agent digital good
%setting implies (ex ante) adoption for $n$-tail regular distributions
%with competitive advantage $2\beta$.

%OLD VERSION
%\begin{lemma}
%\label{l:tail-regular}
%In an $n$ agent digital good setting with values drawn i.i.d.~from an
%$n$-tail regular distribution $\dist$,
%$$2\,\expect{\smash{\bm \super 2 (\vals)}} \geq
%\expect{\Mye_\dist(\vals)}.$$
%\end{lemma}

\subsection{Multi-unit Settings}
\label{subsec:profit-multi-unit}

We next consider maximizing profit in a $k$-unit auction with
unit-demand bidders.  We assume throughout that $k \ge 2$.
%In the $k$-unit settings which were under discussion in the previous
%sections of this paper, t
%Though, these priority lotteries
%may opt, unlike for the objective of residual surplus, not to sell all
%$k$ units.  
We next define a variant of the performance benchmark $\bms$ for
platform design and compare it to the benchmark $\ofs(\vals) = \max_{2 \leq
  i \leq k} i \valith$ that has been employed, without formal
justification, in previous work on prior-free multi-unit auctions.

The benchmark $\bm$ defined as the supremum of Bayesian optimal
mechanisms is, by Theorem~\ref{t:benchmark}, equivalent to the
supremum over two-level lotteries (which need not sell all units).
Two-level lotteries are not useful for profit-maximization in 
digital goods settings, where a $(\highprice,\lowprice)$-lottery is
equivalent to a $\lowprice$-lottery which is equivalent to a
$\lowprice$ price posting.  They are useful in limited
supply settings, however.
The performance benchmark~$\bms$ is defined by $\bms(\vals)
= \bm(\valith[2],\valith[2],\valith[3],\ldots,\valith[n])$.
For every valuation profile,
%the benchmark $\bm \super 2$ restricted
%to the $k$ highest-valued agents (which obviates two-level lotteries)
%equals $\ofs$ on the full valuation profile, which in turn is at least
%half of $\bms$ on the full valuation profile
%(cf.~Lemma~\ref{lem:lotteries}).  
the benchmark $\bm \super 2$ is at most twice the value of $\ofs$
(cf.~Lemma~\ref{lem:lotteries}).  Thus,
every multi-unit auction that $\beta$-approximates the
benchmark~$\ofs$ also $2\beta$-approximates the benchmark~$\bms$.  
As in Section~\ref{subsec:profit-digital-good}, approximation of the
benchmark~$\bms$ implies approximation of the optimal expected revenue
in every Bayesian setting with a non-pathological distribution.

The above discussion provides Bayesian foundations for the benchmark
$\ofs$, and translates the known results for approximating that
benchmark into good platform designs.  Specifically, every digital
good auction that $\beta$-approximates the $\ofs$ benchmark ---
equivalently for a digital good, the $\bms$ benchmark --- can be
easily converted into a multi-unit auction that $\beta$-approximates
the $\ofs$ benchmark~\citep{GHKSW-06} and hence $2\beta$-approximates
the $\bms$ benchmark.  The optimal platform mechanism for digital
goods from \citet{CGL-14} can be thus converted into a platform
mechanism for limited supply that is near-universally adopted with
competitive advantage 4.84, i.e., it is a 4.84-approximation to
$\bms$.

%Therefore, if the digital good auction
%is a $\beta$-approximation to $\bms$ then the resulting multi-unit
%auction is a $2\beta$-approximation to $\bms$.

%MOVE TO SECTION 7 OR DELETE WHAT FOLLOWS?
%
%The following reduction from \citet{GHKSW-06} shows that, for every $k
%\ge 2$, an auction for digital goods settings that
%$\beta$-approximates the benchmark 
%
%which 
%is the same as $\sup_{\oneprice\leq \valith[2]}
%\Lottery_{\oneprice}(\vals)$, a variant of our approximate benchmark
%that is implicit in Corollary~\ref{cor:lottery}.
%
%The following approach from \citet{GHKSW-06} gives an approximate
%reduction from the multi-unit problem to the digital goods problem.  
%\begin{enumerate}
%\item Simulate the digital good auction on the $k$ highest-valued agents.
%\item Charge each winner the maximum of her simulation price and $\valith[k+1]$.
%\end{enumerate}
%
%\begin{theorem} 
%With respect to performance benchmark $\bm \super 2$, for any digital
%good auction that is a $\beta$-approximation, there is a multi-unit
%auction that is a $2\beta$-approximation.
%\end{theorem}

\subsection{General Environments}
\label{subsec:profit-general}

The approach to Bayesian optimal mechanism design discussed in
Section~\ref{sec:bayesian} characterizes optimal mechanisms beyond
just multi-unit settings.  For every
single-parameter setting, where agents want service and there is a
feasibility constraint over the set of agents that can be
simultaneously served, the optimal mechanism is the ironed virtual
surplus maximizer.

Our benchmark~$\bm$ is difficult to analyze beyond multi-unit settings.  In
follow-up work to this paper, \citet{HY-11} gave a refinement
of our benchmark using the notion of envy-freedom.  For instance, when
the set system that constrains feasible outcomes satisfies a 
substitutes condition (formally: the set of feasible outcomes are the
independent sets of a matroid set system), their benchmark is at least
as large as ours.  They give a mechanism that is similar to the RSOL
(Definition~\ref{def:rsol}) that, for these set systems, is
near-universally adopted with constant competitive advantage.

\section{Discussion}\label{sec:conc}

We defined an analysis framework for platform design based on relative
approximation of a performance benchmark.  Auctions that approximate
this benchmark are simultaneously near-optimal in every Bayesian
setting with i.i.d.\ bidder valuations.  Optimizing within this
analysis framework suggests novel multi-unit auction formats,
different from those suggested by Bayesian analysis.  The framework is
flexible and permits several extensions and modifications, discussed
next.

We focused on platform design for the objectives of residual
surplus (Section~\ref{sec:worst}) and profit maximization
(Section~\ref{sec:profit}), but our platform design approach extends
beyond these objectives.  As an example, imagine the $k$-unit auction in
an i.i.d.~Bayesian setting where the optimal solution is characterized
by optimizing the ironed virtual value corresponding to ``a 8\%
government sales tax.''  
The objective is then the value of the agents and
mechanism less the tax deducted by government, and 
the corresponding virtual value function has the form
$\virt(\val) = 0.92 \val - 0.08 \tfrac{1-\dist(\val)}{\dens(\val)}$.
The optimal $k$-unit $(\highprice,\lowprice)$-priority lottery
remains the appropriate benchmark for this and every other linear objective.
For every linear objective with $\priceweight < 0$, see
equation~\eqref{eq:objective},
there is always an optimal $(\highprice,\lowprice)$-priority lottery
with $\highprice$ and $\lowprice$ at most the second highest value
$\valith[2]$, and the mechanism RSOL of Section~\ref{subsec:rsol}
%as long as the optimal
%$(\highprice,\lowprice)$-priority lottery makes use of prices
%$\highprice$ and $\lowprice$ that are bounded above by the second
%highest bid, $\valith[2]$, as it is for residual surplus (or any
%general linear objective with $\priceweight < 0$; see
%equation~\eqref{eq:objective}), then our prior-free mechanism, RSOL,
can be mixed with a Vickrey auction to approximate the benchmark.

%\item Be more thorough with which linear objectives are interesting
%  and how we approximate them?  (Note by Tim: I'd rather not.)

Optimal mechanisms are ironed virtual surplus maximizers in every
single-parameter Bayesian setting with independent private
values~\citep{mye-81}, not just in the multi-unit auction settings
studied here.  Examples of more general single-parameter settings
include constrained matching markets, single-minded combinatorial
auctions, and public projects.  The performance benchmark
(Definition~\ref{d:benchmark}) can again be defined pointwise as the
supremum over the performance of ironed virtual surplus maximizers on
a given valuation profile.  As discussed in
Section~\ref{subsec:profit-general}, this performance benchmark seems
hard to characterize beyond multi-unit settings (cf.,
Theorem~\ref{t:benchmark}).  The follow-up work of \citet{HY-11}
recently proposed an alternative benchmark based on envy freedom.
This benchmark has structure similar to that of Bayesian optimal
auctions and, for this reason, is analytically tractable.
\citet{HY-11}, for the profit objective and the envy-free benchmark,
give platform mechanisms that are near-universally adopted with
constant competitive advantage in general settings.

%\item Discussion of envy-free benchbark.  Approximation results from
%  Hartline-Yan and follow-us.

We defined the performance benchmark as the supremum performance of
(symmetric) mechanisms that are optimal in some Bayesian setting with
i.i.d.\ valuations.  Clearly, one can define such a benchmark with
respect to {\em any} class of mechanisms.  As an example alternative,
consider residual surplus maximization and the class of symmetric
mechanisms that are Bayesian optimal for some i.i.d..\ distribution
that is either MHR or anti-MHR --- that is, the class consisting
solely of the Vickrey auction and the (zero-price) lottery.  This
class arises naturally when domain knowledge suggests that only
MHR and anti-MHR distributions are relevant, or if outside consultants
are only equipped to design optimal mechanisms for these cases.
Specializing to the two-bidder single-item case studied in
Section~\ref{subsec:n=2}, the platform design benchmark decreases from
$\max \{ \tfrac{\vali[1] + \vali[2]}{2}, \valith[1] -
\tfrac{\valith[2]}{2}\}$ to $\max \{ \tfrac{\vali[1] + \vali[2]}{2},
\valith[1] - \valith[2]\}$.  Reworking the analysis of that section
for this new benchmark shows that the optimal mechanism remains a ratio
auction, just with a different setting of the parameters (namely,
$\ratio = 3$ and $\bias = 4/5$, for an approximation ratio of~$5/4$).
Notably, the format of the optimal platform is robust to this
particular change in the benchmark.

The mechanism in Section~\ref{subsec:rsol} demonstrates the
existence of platform mechanisms for residual surplus maximization
that are universally adopted with constant competitive advantage,
independent of the number of units and bidders.  No standard auction,
meaning an ironed virtual surplus maximizer,
enjoys such a guarantee (Appendix~\ref{app:standard-auction-lb}).
Developing our understanding of $n$-player platform design further is
an interesting research direction.  For starters, there should be a
much tighter analysis of our mixture of RSOL and Vickrey auctions.
One avenue for improvement is to track
contributions to the residual surplus on the 83\% of the probability
space for which the chosen partition is not balanced.  The ``average
balance'' approach of \citet{AMS-09}, used previously to improve over
the balanced partition technique of \citet{FFHK-05} in a
profit-maximization context, can be used to give such an improved
bound.  A second idea is to compare
the mechanism's residual surplus directly to the benchmark in
Theorem~\ref{t:benchmark}, rather than to the simpler ``approximate
benchmark'' in Corollary~\ref{cor:lottery}.
%, which causes a factor of~2 loss in the approximation guarantee.  
Analogously, avoiding the
approximate benchmark~$\ofs$ could lead to better profit-maximizing
platform designs (see Section~\ref{subsec:profit-multi-unit}).

There are surely platforms that are universally adopted with smaller
competitive advantage than is required by 
that in Section~\ref{subsec:rsol}.  A challenging problem
is to characterize optimal platforms for the residual surplus
objective.  Even the case of three-bidder
single-item settings appears challenging.
We conjecture that, for residual surplus maximization
with $n$ bidders and a single item, the minimum competitive advantage
required by an optimal platform for universal adoption is precisely
the expected value of the performance benchmark 
when bidders' valuations are drawn
i.i.d.\ from the exponential distribution
(as in Theorem~\ref{t:benchmark}).

While solving for optimal platforms is interesting theoretically, we
suspect that optimal platforms will suffer from some drawbacks.
First, when the number of agents is large, the optimal platform is a
complex object, perhaps a distribution over a very large number of
different auctions.  This complexity is characteristic of exact
optimization in any auction analysis framework; other well-known
examples include profit-maximizing auctions in Bayesian single-item
settings when bidders' valuations are not identical or are
i.i.d.\ from an irregular distribution~\citep{mye-81}.  The complexity
of optimal auctions motivates the design and analysis of platforms
that are relatively simple while requiring a competitive advantage
that is almost as small as the minimum possible, e.g., in the spirit
of~\citet{BK-96} and~\cite{HR-09}.  Second, in optimizing a min-max
criterion, the optimal platform will equalize the approximation factor
of the benchmark across all valuation profiles (cf., the proof of
Lemma~\ref{l:optimal-platform-n=2}).  In practice there might be
agreed-upon ``common inputs'' and ``rare inputs'' --- without there
necessarily being a full prior --- with auction performance on common
inputs being the most important.  For example, the
random-sampling-based auctions of \citet{BBHM-08} out perform the
optimal platform mechanism \citep{GHKSW-06} on a family of
common inputs.

An auction that approximates the performance benchmark is
simultaneously near-optimal in every Bayesian setting with
i.i.d.\ bidder valuations.  The converse need not hold, and an
interesting research direction is to better understand the
relationship between these two conditions.  Sometimes, as with the
monopoly pricing problem studied in Section~\ref{sec:monop-pricing},
simultaneous Bayesian near-optimality is as hard as approximation of
the performance benchmark.\footnote{For a value of~$h \ge 1$, consider
  the set of distributions that are concentrated at a single point
  in~$[1,h]$.  For each such distribution, the corresponding optimal
  auction extracts full surplus.  As in
  Section~\ref{sec:monop-pricing}, no single auction can extract more
  than a~$1/\ln h$ fraction of the surplus for every such
  distribution.}  This is not always the case, however.  For example,
\citet{DRY-10} showed that the digital good auction that partitions
the agents into pairs and runs a Vickrey auction to serve one agent in
each pair obtains a 2-approximation to the revenue of the Bayesian
optimal mechanism whenever the distribution is {\em regular}, meaning
that virtual values are increasing (cf.~Section~\ref{sec:bayesian} and
Appendix~\ref{app:profit-benchmark-lb}).  The proof of this
2-approximation is a simple consequence of the $n=1$ special case of
the main theorem of \citet{BK-96}, i.e., that the 2-agent Vickrey
auction obtains more revenue than monopoly pricing a single agent.  As
mentioned in Section~\ref{subsec:profit-digital-good}, no auction for
a digital good achieves a 2-approximation of the
benchmark~$\bms$~\citep{GHKSW-06}.  On the other hand, all work thus
far on simultaneous Bayesian near-optimality that avoids the pointwise
benchmark approach --- termed ``prior-independent guarantees'' by
\citet{DRY-10} --- are confined to regular
distributions~\citep{DRY-10,DHKN-11,RTY-12}.  By contrast, our
benchmark approximations directly imply prior-independent guarantees
for most distributions (for profit maximization) and for all
distributions (for residual surplus maximization).

\Xcomment{
\begin{itemize}

\item simple approximation of benchmark beyond multi-unit auctions

\item constant-factor prior-free approximations beyond multi-unit
auctions

\item cost of money-burning beyond multi-unit auctions

\item for starters: how about matroid domains?

\item further applications of our prior-free template

\item tight prior-free bounds for a small number of bidders

\end{itemize}
}

\appendix

\section{Distribution Construction from Virtual Values}
\label{app:distribution-construction}

Section~\ref{sec:bayesian} gives a formula for calculating an
agent's virtual value function from the distribution from which her
value is drawn.  For the residual surplus objective, this formula is
$\marg(\val) =  
\frac{1-\dist(\val)}{\dens(\val)}$.  
This section reverses the calculation and gives a
constructive proof that every non-negative piecewise constant
function arises as the virtual value (for utility) function for some
distribution.
This fact is alluded to in Section~\ref{sec:benchmark} and is used
explicitly in the proof of Theorem~\ref{t:standard-auction-lb} in
Appendix~\ref{app:standard-auction-lb}.

First observe that the exponential distribution has constant virtual
value equal to its mean.  That is, the exponential distribution with
mean $\exval$ has rate $1/\exval$, cumulative distribution
$\dist_\exval (z) = 1 - e^{-z/\exval}$, and virtual value
$\marg_{\exval}(z) = \exval$.  

Now consider a non-negative piecewise constant function $\pwcfunc :
[0,\infty) \to \reals_+$, where the boundaries of each interval are
  given by $\interval_0=0,\interval_1,\ldots$, and where the value of
  the function on the interval $[\interval_j,\interval_{j+1})$ is
    $\pwcfunc_j$.  We construct the distribution $\dist$ with
    virtual value function $\pwcfunc(\cdot)$ inductively.  The
    starting interval is given by 
    the distribution function $\dist(z) = \dist_{\pwcfunc_0}(z)$ for
 $z \in [\interval_0,\interval_1]$.
    With the first $j-1$ intervals defined by $\dist(z)$ for $z \in
    [\interval_{0},\interval_j]$, we define the distribution function
    for the $j$th interval 
by $\dist(\interval_j+z) =
    \dist_{\pwcfunc_j}(\dist_{\pwcfunc_j}^{-1}(\dist(\interval_j)) +
    z)$
for $\interval_j+z \in
    [\interval_j,\interval_{j+1}]$.
Intuitively, this construction does a horizontal shift of the
    distribution function of the exponential distribution with mean
    $\exval_j$ so 
    that its height matches the height of the constructed (so far)
    distribution function at $\interval_j$.  This construction is
    illustrated in Figure~\ref{fig:dist-construction}.

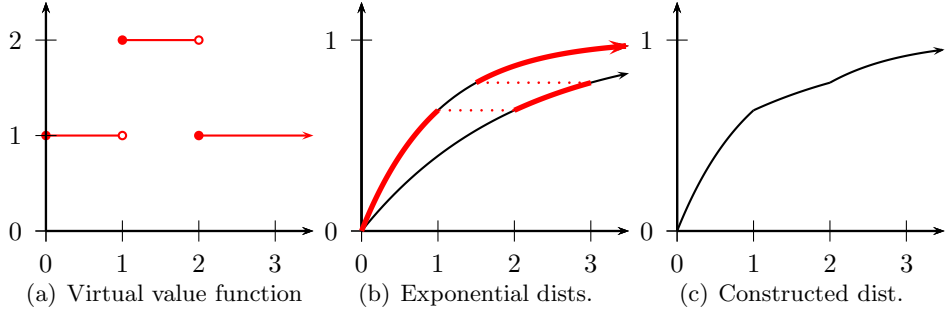
\begin{figure}[t]
\small
\psset{xunit=.4in,yunit=.5in}
\begin{center}
\subfigure[Virtual value function]{% $\marg(\val)$.]{
\begin{pspicture}(-.4,-.4)(3.5,2.4)

%\rput[b](.4,.8){$\marg(\val)$}

\psline[linecolor=red]{*-o}(0,1)(1,1)
\psline[linecolor=red]{*-o}(1,2)(2,2)
\psline[linecolor=red]{*->}(2,1)(3.5,1)

\psaxes{->}(3.5,2.4)

\end{pspicture}}
\psset{yunit=1in}
\subfigure[Exponential dists.]
{
\begin{pspicture}(-.4,-.2)(3.5,1.2)

\psaxes{->}(3.5,1.2)

% mu = 1
\psplot[arrows=->]{0}{3.5}{1 2.71828182846 x neg exp sub}
% mu = 2
\psplot[arrows=->]{0}{3.5}{1 2.71828182846 x .5 mul neg exp sub}

\psset{linewidth=2pt,linecolor=red}
% mu = 1
\psplot{0}{1}{1 2.71828182846 x neg exp sub}
\psplot[arrows=->]{1.5}{3.5}{1 2.71828182846 x neg exp sub}

% mu = 2
\psplot{2}{3}{1 2.71828182846 x .5 mul neg exp sub}

% F_1(1) F_2(2) = 1 - e^{-1} = .6321
\psset{linecolor=red,linewidth=1pt,linestyle=dotted}
\psline(1,.6321)(2,.6321)
% F_1(1.5) F_2(3) = 1 - e^{-1.5} = .7769
\psline(1.5,.7769)(3,.7769)

\end{pspicture}}
\subfigure[Constructed dist.]
{
\begin{pspicture}(-.4,-.2)(3.5,1.2)

\psplot{0}{1}{1 2.71828182846 x neg exp sub}
\psplot{1}{2}{1 2.71828182846 x 1 add .5 mul neg exp sub}
\psplot[arrows=->]{2}{3.5}{1 2.71828182846 x .5 sub neg exp sub}

\psaxes{->}(3.5,1.2)

\end{pspicture}}
\end{center}
\caption{Construction of the cumulative distribution function $\dist$ 
  with virtual value function (for utility) that is
  piecewise constant on $[0,1)$, $[1,2)$, and $[2,\infty)$, with
        virtual values 1, 2, and 1, respectively.}  
\label{fig:dist-construction}
\end{figure}

\section{Inadequacy of Standard Auctions}
\label{app:standard-auction-lb}

This section establishes the limitations of {\em standard auctions}, i.e.,
ironed virtual value maximizers (including lotteries and the Vickrey
auction), as platform mechanisms.  Section~\ref{subsec:n=2} shows that
with $n=2$ agents, the optimal platform mechanism is not a mixture of
standard auctions.  Here we show that, even with $k=1$ unit, there is
no finite competitive advantage for which a mixture of standard
auctions is universally adopted for all $n$.  This result contrasts
with Theorem~\ref{thm:worst}, which gives a (non-standard) platform
mechanism that is universally adopted with a constant competitive
advantage, independent of $n$ and $k$.

%To do so,
%we fix any standard auction and draw a parameterized distribution from
%a class of distributions.  We argue that the standard auction's
%performance (in our case, residual surplus) is well below the
%performance of the Bayesian optimal auction for the particular
%distribution that was drawn.

Our argument uses the distributions in the construction in
Appendix~\ref{app:distribution-construction}; we next note some of
their salient properties.  These distributions are piece-wise
exponential distributions, with piece-wise constant virtual values for
utility and piece-wise constant hazard rates.  Recall that the virtual
value for an exponential distribution equals its expected value which
equals the reciprocal of its hazard rate.  Also, exponential
distributions are memoryless: given that the value $\val$ from an
exponential distribution is at least $z$, the conditional distribution
of $\val$ is identical to that of $z+w$ where $w$ is exponential with
the same rate.  Of particular relevance, the probability that an
exponential random variable with mean one exceeds $\lbparam$ is
$e^{-\lbparam}$, and the probability that an exponential random
variable with mean $\lbparam$ exceeds $\lbparam$ is $1/e$.  For
piece-wise exponential distributions, these properties hold within
each piece.  From the analysis of Section~\ref{sec:bayesian}, the
expected residual surplus of any mechanism $\mech$ on distribution
$\dist$ is equal to its expected virtual surplus.

\begin{numberedtheorem}{\ref{t:standard-auction-lb}}
%As a function of the number $n$ of agents, t
%For an $n$-agent, $1$-unit setting, the minimum competitive
%advantage for universal adoption of any mixture of standard auctions
%is $\Omega(\sqrt{\log n})$.
For every $\rho > 1$ there is a sufficiently large~$n$ such
that, for an $n$-agent, 1-unit setting, no mixture over standard
auctions is universally adopted with competitive advantage $\rho$.
\end{numberedtheorem}

\begin{proof}
Define $\lbparam$ to be an integer greater than or equal to $\max \{
24 \cdot \rho, 2\}$.  For $\lbind \in \{0,1,2,\ldots,\lbparam-1\}$,
let $\dist_{\lbind,\lbparam}$ denote the piece-wise exponential
distribution (as in Appendix~\ref{app:distribution-construction}) with
virtual value for utility equal to~1 everywhere except on the interval
$[\lbind \lbparam,\lbind\lbparam+\lbparam)$, where it is equal to
  $\lbparam$.  Such a distribution thus has a ``special interval''
  where the hazard rate is relatively low (and virual value is
  relatively high).  Let $\C_{\lbparam}$ denote the set of
  distributions $\{ \dist_{\lbind,\lbparam}
  \}_{\lbind=0}^{\lbparam-1}$.  Consider a setting with a single item
  and $n=e^{\beta^2}$ agents (rounded up to the nearest integer).

We claim the following:
\begin{enumerate}[(a)]

\item For every $\dist_{\lbind,\lbparam} \in \C_{\lbparam}$,
  there is an auction with expected residual surplus at least
  $\lbparam/4$.

\item For every standard auction~$\A$, if
  $\dist_{\lbind,\lbparam} \in \C_{\lbparam}$ is chosen uniformly at
  random, then the expected residual surplus of $\A$, over the choice
  of $\dist_{\lbind,\lbparam}$ and valuations $\val_1,\ldots,\val_n
  \sim \dist_{\lbind,\lbparam}$, is at most~6.

\end{enumerate}

Claims~(a) and~(b) imply the theorem.  To see this, the residual
surplus of a convex combination of mechanisms $\mech$ is the convex
combination of their residual surpluses.  As the inequality of
property~(b) holds for each auction in the support of such a convex
combination, it also holds for the combination.
Taking expectation over $\dist$ uniform from $\C_{\lbparam}$ in
inequality~\eqref{eq:bm2} from Section~\ref{sec:benchmark} we have,
\begin{align*}
\expect[\vals,\dist]{\bm(\vals)} 
  &\geq \expect[\vals,\dist]{\Mye_{\dist}(\vals)}
  \geq \tfrac{\beta}{24}\expect[\vals,\dist]{\mech(\vals)}.
\end{align*}
By the definition of expectation, there must exist a valuation profile
$\vals$ that achieves this separation, i.e., with $\bm(\vals) \geq
\frac{\beta}{24}\mech(\vals)$.  Thus, the competitive advantage needed
for universal adoption is at least $\beta/24 \geq \rho$.

We now proceed to the proofs of~(a) and~(b).
Call an agent {\em high-valued}
if her value is at least $\lbparam^2$.
The probability that there is no high-valued agent is at most
$1/\lbparam$.\footnote{The analysis is elementary.  Using the
  memoryless property of exponential distributions, the
  probability that a given agent is high-valued is
%Consider
%each of the intervals of width $\lbparam$:  The probability that the
%value makes it through such an interval is $e^{-\lbparam}$ for
%$\lbparam-1$ mean-one intervals and $e^{-1}$ for
%the one mean-$\lbparam$ interval.  The total probability is 
$\eta =
(e^{-\lbparam})^{\lbparam-1} \cdot e^{-1} = e^{-\lbparam^2 + (\lbparam
  -1)}$.  With $n = e^{\lbparam^2} = e^{\lbparam-1}/\eta$ agents, the
probability of no high-valued agents is $(1-\eta)^n \leq e^{-\eta n} \le
%(1-\eta)^{e^{\lbparam+1}/\eta} \leq
  e^{-e^{\lbparam-1}} \leq 1/\lbparam$.  The last inequality can be
verified by checking the lower endpoint of $\lbparam = 2$ and
comparing the   derivatives of $e^{e^{\lbparam-1}}$ and $\lbparam$.} 
%The choice of $n = e^{\lbparam^2}$ ensures that with
%high probability there is at least one high-valued agent; conversely
%that the probability that all agents are low valued is at most
%$1/\lbparam$.  
%and its bound will be utilized in the analysis below.

To prove~(a), fix a choice of $\lbind \in \{0,1,\ldots,\lbparam-1\}$
%A mechanism that is privy to the choice of $\lbind$ is able to attain
%a residual surplus of $\Theta(\lbparam)$.  To see this 
and consider the $\lbind\lbparam$-lottery.  The probability that all
agents' values are below $\lbind\lbparam$ is less than the probability
that all agents' values are below $\lbparam^2$; thus the probability
of a winner in this lottery is at least $1-1/\lbparam \ge 1/2$.
Otherwise, the winner is a random agent with value at least
$\lbind\lbparam$.  By the memoryless property of exponential
distributions, the probability that the winner, with value at least
$\lbind\lbparam$, has value less than $\lbind\lbparam + \lbparam$ is
$1-1/e \geq 1/2$; such an winner has virtual value $\lbparam$.
%Such winning agent
%has value 
%within the $\lbparam$ interval with probability $1/e$ and thus
%contributes at least $\lbparam/e$ to the virtual surplus of lottery.
%The desired conclusion follows.
Virtual values for residual surplus are non-negative, so the expected
virtual surplus of the $\lbind\lbparam$-lottery is at least
$\tfrac{1}{2} \cdot \tfrac{1}{2} \cdot \lbparam$, as claimed.

To prove~(b), we first warm up by considering the case where $\A$ is
an $\oneprice$-lottery.  Choose $\dist_{\lbind,\lbparam} \in
\C_{\lbparam}$ uniformly at random.  The intuition is that this random
choice effectively ``hides'' the location of the large virtual values.

If $\oneprice \ge \lbind \lbparam + \lbparam$, then the winner of $\A$
(if any) has virtual value~1.
If $\oneprice \le \lbind \lbparam - \lbparam$, then by the memoryless
property of exponential distributions, the value of a winner is less
than $\lbind \lbparam$ with probability at least $1-e^{-\lbparam}$.
Thus, the expected virtual value of a winner in this case (if any) is
at most $1 + \lbparam e^{-\lbparam} \leq 2$.
Finally, if $\oneprice \in (\lbind \lbparam - \lbparam, \lbind
\lbparam + \lbparam)$,
then the virtual value of a winner (if any) is at most $\lbparam$.
As the third case occurs with probability at most $2/\lbparam$ (over
the random choice of $\lbind$), the expected virtual value of $\A$ is
at most $\tfrac{2}{\beta} \cdot \beta + 1 \cdot 2 = 4$.

%Our distribution over distributions, i.e., that chooses $\lbind$
%uniformly from $\{0,\ldots,\lbparam-1\}$ effectively hides $\lbind$.
%Any mechanism that is off by more than one in its guess of $\lbind$
%has residual surplus at most $\Theta(1)$.  Fix $\lbind$ and consider
%any $\oneprice$-lottery for $\oneprice \not\in \lbind\lbparam +
%[-\lbparam,\lbparam]$.  This lottery has residual surplus $\Theta(1)$
%on distribution $\dist_{\lbind,\lbparam}$.  The winner of the
%mechanism either has valuation at most~$\lbind\lbparam - \lbparam$
%(and virtual value one) or has a random value conditioned to be at least
%at least~$\lbind\lbparam - \lbparam$.  In the former case the winner's
%virtual value is one; in the latter case the expected virtual value of
%the winner is at most $1 + \lbparam e^{-\lbparam} \leq 2$ as the virtual
%value $\lbparam$ interval is reached only with probability
%$e^{-\lbparam}$.  Now for $\lbind$ from the uniform distribution on
%$\{0,\ldots,\lbparam-1\}$, the probability that a given
%$\oneprice$-lottery is close to $\lbind\lbparam$ is at most
%$2/\lbparam$ and so the contribution to the expected residual surplus
%from this case is at most $\Theta(1)$.  The probability that it is not
%close is $1-1/\lbparam$ and the contribution to the expected residual
%surplus from this case is $\Theta(1)$.  Consequently, the residual
%surplus of any $\oneprice$-lottery is $\Theta(1)$ on the distribution
%on distributions.

We conclude by extending the argument of the preceding paragraph for
one-level lotteries to an arbitrary ironed virtual surplus
maximizer~$\A$.  In our single-item symmetric setting, the
auction~$\A$ specifies ironed intervals where ties are broken
randomly but otherwise awards the item to the agent with the highest
value.  Let $[\oneprice,\oneprice']$ denote the ironed interval
of~$\A$ that contains the value $\lbparam^2$.  (If
value~$\lbparam^2$ is not ironed, then set $\oneprice = \oneprice' =
\lbparam^2$.)  Valuation profiles without a high-valued bidder occur
with probability at most $1/\lbparam$ (by the above analysis) and give
virtual surplus at most $\beta$; thus their contribution to the
expected virtual surplus of $\A$ is at most 1.  Valuation profiles
with a high-valued bidder and highest value in
$[\oneprice,\oneprice']$ contribute the same expected virtual surplus
to $\A$ as to a $\oneprice$-lottery; by the previous paragraph,
this contribution is at most~4.  In every valuation profile with
highest value greater than $\oneprice'$, the item is awarded to a
bidder with virtual value~1; these profiles contribute at most~1 to
the expected virtual surplus of~$\A$.  
%The proof is complete.
%
%We
%show that the average (over~uniformly chosen $\lbind$) expected
%virtual surplus of every ironed virtual surplus optimizer~$\mech$ is
%at most a constant larger than the $\oneprice$-lottery of the
%preceding paragraph.  
%Conditioned on their being at least one high-valued
%agent, the highest value is either in the interval
%$[\oneprice,\oneprice']$ or is greater than $\oneprice'$.  In the
%first case, the virtual surplus is the same as that of an auction that
%irons the interval~$[\oneprice,\infty)$, e.g., the
%  $\oneprice$-lottery; and in the second case (where all high-valued
%  agents have virtual value one) it can only be worse.  Since the
%  probability that there is no high-valued agent is at most
%  $1/\lbparam$, the virtual surplus of the mechanism in all cases is
%  bounded by a constant.  The proof is complete.
\end{proof}

This proof, in fact, gives a lower bound on the competitive advantage
for universal adoption of any standard auction that grows
proporionally to $\sqrt{\log n}$ with the number $n$ of agents.

\section{The Balanced Sampling Lemma}

\label{app:balanced-sampling}

Recall that a partitioning of the agents~$\{1,2,3,\ldots,n\}$ into a
market $M$ and sample $S$ is {\em balanced} if $1 \in M$, $2 \in S$,
and for all $i \ge 3$, between $i/4$ and $3i/4$ of the $i$th
highest-valued agents are in~$M$ (and similarly~$S$).  We restate and
prove the balanced sampling lemma below.

\begin{numberedlemma}{\ref{l:balanced-sampling}}
When each agent is assigned to the market~$M$ or sample~$S$ independently
according to fair coin, the resulting partitioning is balanced with
probability at least~$0.169$.  
\end{numberedlemma}

\begin{proof}
Call a subset of the agents {\em imbalanced} if, for some~$i \ge 3$,
it contains fewer than~$i/4$ of the $i$ highest-valued agents. After
conditioning on the events that $1 \in M$ and $2 \in S$, the
probability that~$S$ is imbalanced can be calculated as at most
$0.161$ by a simple probability of ruin analysis proposed in
\citet{FFHK-05} (details given below).  By symmetry, the same bound
holds for~$M$.  By the union bound, the partition is balanced with
probability at least $0.678$.  Agent 1 is in $M$ and 2 is in $S$ with
probability~$1/4$, so the unconditional probability that the partition
is balanced is at least $0.169$.

The following analysis from \citet{FFHK-05} shows that the conditional
probability 
that $S$ is imbalanced is at most $0.161$.  Consider the random
variable $Z_i = 4\setsize{S \cap \{1,\ldots,i\}} - i$; the balanced
condition is equivalent to $Z_i \geq 0$ for all $i \geq 3$.  By the
conditioning, $S \cap \{1,2\} = \{2\}$ and so $Z_2 = 2$.
View $Z_i$ as the positions of a random walk on the integers that
starts from position two and takes three steps forward (at step $i$ with $i
\in S$) or one step back (at step $i$ with $i\not\in S$), each with
probability~$1/2$.  The set~$S$ is imbalanced if and only if this
random walk visits position~$-1$.
The probability $r$ of ever (with $n \rightarrow \infty$)
visiting the preceding position in such a random walk can be calculated as the
root of $r^4-2r+1$ on the interval $(0,1)$, which is approximately
$0.544$.  
The probability of imbalance, which requires eventually moving backward
three steps from position~2, is at most $r^3 \leq 0.161$, as claimed.
%ever take 
%three steps back when starting from position two).  By symmetry, the
%probability of imbalance in the market $M$ is also at most $0.161$.
\end{proof}

\section{Profit Maximization with Near-universal Adoption}
\label{app:profit-benchmark-lb}

%As described in Section~\ref{sec:profit}, there is no prior-free
%mechanism with profit that approximates the performance benchmark
%$\bm(\vals) = \max_i \valith$ on each valuation profile $\vals$.
%Consequently the literature has turned to $\bms = \max_{i\geq 2} i
%\valith$ (essentially excluding the possibility of selling to the
%highest valued agent at her value).  
Recall from Section~\ref{sec:profit} the benchmark $\bms = \max_{i\geq
  2} i \valith$, which effectively excludes selling to the
highest-valued agent at her value.
%This benchmark has the draw back
%that, neither pointwise on all valuation profiles nor in expectation
%over any distribution, does its expected revenue always exceed the
%revenue of the Bayesian optimal mechanism.  Therefore, a mechanism
%that is a $\beta$ approximation to $\bms$ is not universally adopted
%with competitive advantage $\beta$.  Instead we 
We will show that a mechanism $\mech$ that achieves a
$\beta$-approximation of this benchmark on every valuation profile is
near-universally adopted with competitive advantage $\beta$, meaning
that for every distribution $\dist$ in a large class, the expected
profit of~$\mech$ is at least a $\beta$ fraction of that of the
Bayesian optimal auction for $\dist$.  By
Proposition~\ref{prop:near-universal-adoption}, it suffices to give
sufficient condition on $\dist$ that guarantees that
$\expect[\vals]{\bms(\vals)} \geq \expect[\vals]{\Mye_\dist(\vals)}$.

From \citet{BR-89}, virtual values for revenue are given by the {\em
  marginal revenue} of the {\em revenue curve} that plots the revenue
$\price\, (1-\dist(\price))$ against the probability $1-\dist(\price)$
that the agent buys (i.e., her expected demand).  Virtual values are
given by the slope of the revenue curve, thus monotonicity of virtual
values, as required by the regularity condition of \citet{mye-81}, is
equivalent to the concavity of the revenue curve.

Our sufficient condition, the ``inscribed triangle property,'' states
that for every point $(1-\dist(\price), \price\,(1-\dist(\price)))$ on
the revenue curve, the triangle formed with the points $(0,0)$ and
$(1,0)$ lies underneath the revenue curve.
%\footnote{Our proofs only
%  require this condition at a single point, where $\price$ is the
%  monopoly price of the distribution.}
This condition is clearly
satisfied whenever the revenue curve is concave --- equivalently,
whenever the distribution is regular --- and 
is also satisfied by a large family of multi-modal distributions that
are not regular.

To understand this condition better, observe that, for every
distribution $\dist$ and price $\price$, the line from~$(0,0)$ to
$(1-\dist(\price), \price\,(1-\dist(\price)))$ lies beneath the
revenue curve.  The reason is that, for every $\alpha \in [0,1]$, the
price $p'$ with selling probability $\alpha \cdot (1-\dist(\price))$
is at least $p$ and hence obtains revenue $p'(1-\dist(\price')) \ge
\alpha \cdot p(1-\dist(\price))$.  The inscribed triangle property is
therefore equivalent to requiring that, for every price $\price$, the
line between $(1-\dist(\price), \price\,(1-\dist(\price)))$ and
$(1,0)$ lies beneath the revenue curve.  For an economic
interpretation of this condition, consider the measure of types that
are not served at a price $\price$, i.e., $\dist(\price)$.  Viewing
the revenue curve as a function of $\dist(\price)$, the condition says
that as the price is dropped, the revenue per unit of types that are
not served is non-decreasing.  In other words, the condition requires
$\price\,(1-\dist(\price))/\dist(\price)$ to be non-increasing.

The inscribed triangle property immediately implies the following
lemma, which is reminiscent of the main theorem of \citet{BK-96}.

\begin{lemma}
\label{l:inscribed-triangle}
For distribution $\dist$ with non-increasing
$\val\,(1-\dist(\val))/\dist(\val)$, the two-agent
Vickrey auction revenue exceeds the single-agent optimal revenue.
\end{lemma}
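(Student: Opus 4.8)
The plan is to pass to the \emph{revenue curve} and compare areas. Recall from the discussion above that the revenue curve $r$ plots $\price\,(1-\dist(\price))$ against the quantile $q = 1-\dist(\price)$; write $r(q)$ for its value at quantile $q$, so that the price selling with probability $q$ equals $r(q)/q$, and the optimal single-agent revenue is $R^\ast = \max_{q\in[0,1]} r(q)$, attained at some monopoly quantile $q^\ast$. The goal will be to show that the two-agent Vickrey revenue, which turns out to be an \emph{average} over the revenue curve, is at least $R^\ast$, using the hypothesis --- equivalently, the inscribed triangle property --- to control the shape of $r$.

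First I would rewrite the two-agent Vickrey revenue in quantile space. With two i.i.d.\ valuations $\val_1,\val_2\sim\dist$, the quantiles $q_i = 1-\dist(\val_i)$ are i.i.d.\ uniform on $[0,1]$, and the Vickrey revenue is $\expect{\min\{\val_1,\val_2\}}$. The smaller value corresponds to the larger quantile $\max\{q_1,q_2\}$, whose density on $[0,1]$ is $2q$, and the value at quantile $q$ is $r(q)/q$; hence
$$\expect{\min\{\val_1,\val_2\}} \;=\; \int_0^1 \frac{r(q)}{q}\cdot 2q\,dq \;=\; 2\int_0^1 r(q)\,dq.$$

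Next I would lower-bound $\int_0^1 r(q)\,dq$ by the area of the triangle $\triangle$ with vertices $(0,0)$, $(q^\ast,R^\ast)$, and $(1,0)$, by checking that $r$ lies weakly above both of its non-horizontal edges. The segment from $(0,0)$ to $(q^\ast,R^\ast)$ lies below $r$ for \emph{every} distribution: for $q=\alpha q^\ast$ with $\alpha\in[0,1]$, the price selling at quantile $q$ is at least the monopoly price, so $r(q)\ge \alpha\,r(q^\ast)=(q/q^\ast)R^\ast$ --- this is exactly the observation already recorded in the discussion above. The segment from $(q^\ast,R^\ast)$ to $(1,0)$ lies below $r$ precisely because of the inscribed triangle property, i.e.\ the hypothesis that $\price\,(1-\dist(\price))/\dist(\price)$ is non-increasing, which gives $r(q)\ge \tfrac{1-q}{1-q^\ast}R^\ast$ for $q\in[q^\ast,1]$. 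Since also $r\ge 0$, integrating yields $\int_0^1 r(q)\,dq \ge \operatorname{area}(\triangle)=\tfrac12 R^\ast$, and therefore $\expect{\min\{\val_1,\val_2\}} = 2\int_0^1 r(q)\,dq \ge R^\ast$.

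I expect no genuine obstacle here: the substantive input is the inscribed triangle property, which is already in hand, so the remaining work is just the change of variables to quantile space and the two-piece area estimate. The only point deserving comment is the strict inequality asserted in the statement: equality in $\int_0^1 r \ge \tfrac12 R^\ast$ would force $r$ to coincide with the upper boundary of $\triangle$ on all of $[0,1]$ --- a degenerate, exactly ``triangular'' revenue curve --- so under any mild non-degeneracy assumption on $\dist$ the comparison is strict.
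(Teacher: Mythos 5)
Your proof is correct and takes essentially the same route as the paper's: express the two-agent Vickrey revenue as $2\int_0^1 r(q)\,dq$ and then lower-bound the integral by the area of the inscribed triangle. The only cosmetic difference is in the first step: you compute the expected second-order statistic directly via the density $2q$ of $\max\{q_1,q_2\}$, whereas the paper argues that each agent faces a price distributed uniformly in quantile and so contributes $\int_0^1 r(q)\,dq$ in expectation, then sums over the two agents. Your closing remark about when the inequality is strict is apt; the paper's proof (like yours) only establishes the weak inequality, and "exceeds" in the statement should be read accordingly.
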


\begin{proof}
In the two-agent Vickrey auction, each agent faces a
take-it-or-leave-it offer equal to the other agent's bid.
%a random price from
%the distribution.  In other words the 
Thus, each agent faces a random price
$\price$ distributed such that the probability of sale to  this agent is
uniform on $[0,1]$.  
Since the revenue of every such price is given by the revenue curve,
and the distribution of $1-\dist(\price)$ is uniform, the expected
revenue obtained from the agent
equals the area under the revenue curve.
Invoking the inscribed triangle property at the point 
$(1-\dist(\price^*), \price^*\,(1-\dist(\price^*)))$ 
for the monopoly price $\price^*$,
%which, by the
%inscribed triangle property, 
%which follows from the property of
%revenue curves and the condition of the lemma, this area 
%is at least half the
%area of the smallest rectangle that contains the revenue curve.  This
%smallest rectangle has width one and hight equal to the optimal revenue, i.e.,
%its area equals the single-agent optimal revenue.  
we conclude that the expected revenue obtained from one agent is at
least $\tfrac{1}{2} \cdot 1 \cdot \price^*\,(1-\dist(\price^*))$,
half the expected revenue of the monopoly price.
%the expected revenue from one agent in a two-agent Vickrey auction is
%at least half the optimal revenue from a single agent.  There are two
%such agents, thus 
Since there are two agents, the total expected Vickrey revenue is at
least the optimal single-agent revenue.
\end{proof}

\begin{lemma}
\label{l:conditioning-lemma}
For distribution $\dist$ with the non-increasing
$\val\,(1-\dist(\val))/\dist(\val)$ property, conditioning to exceed a
price $\price$ preserves the property.
\end{lemma}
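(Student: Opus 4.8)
The plan is to write the conditional distribution down explicitly and then recognize the quantity to be controlled as a product of two manifestly non-increasing functions.

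First I would set up notation: write $g(\val) = \val\,(1-\dist(\val))/\dist(\val)$ for the quantity that is assumed non-increasing, and let $\dist_\price$ denote $\dist$ conditioned on $\val > \price$. The degenerate cases are immediate — if $\dist(\price)=0$ then $\dist_\price = \dist$ and there is nothing to do, and if $\dist(\price)=1$ the conditioning is vacuous — so I may assume $0 < \dist(\price) < 1$. For $\val \ge \price$ we have $\dist_\price(\val) = \frac{\dist(\val)-\dist(\price)}{1-\dist(\price)}$ and $1-\dist_\price(\val) = \frac{1-\dist(\val)}{1-\dist(\price)}$, so the factor $1-\dist(\price)$ cancels and the quantity whose monotonicity I must verify is
$$g_\price(\val) \;=\; \frac{\val\,(1-\dist_\price(\val))}{\dist_\price(\val)} \;=\; \frac{\val\,(1-\dist(\val))}{\dist(\val)-\dist(\price)}.$$

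The key step is the factorization
$$g_\price(\val) \;=\; \underbrace{\frac{\val\,(1-\dist(\val))}{\dist(\val)}}_{=\,g(\val)} \;\cdot\; \frac{1}{1-\dist(\price)/\dist(\val)}.$$
On $\{\val\ge\price\}$ the first factor is non-negative and non-increasing by hypothesis. For the second factor, $\dist(\val)$ is non-decreasing while $\dist(\price)>0$ is fixed, so $\dist(\price)/\dist(\val)$ is non-increasing, hence $1-\dist(\price)/\dist(\val)$ is non-negative and non-decreasing, and therefore the second factor is non-negative and non-increasing as well (it equals $+\infty$ on any initial interval on which $\dist$ is flat, which is harmless). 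A pointwise product of two non-negative non-increasing functions is non-increasing, so $g_\price$ is non-increasing on the support of $\dist_\price$, which is exactly the claim.

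I do not expect a genuine obstacle: the argument is short once the factorization is spotted. The only points deserving a sentence of care are the degenerate cases $\dist(\price)\in\{0,1\}$ (handled above), the use of $\val\ge 0$ to guarantee $g\ge 0$, and the behavior on an interval just above $\price$ on which $\dist$ is constant (there $g_\price\equiv+\infty$, still consistent with non-increasing). If one prefers a more conceptual phrasing, conditioning on $\val>\price$ simply restricts the revenue curve to quantities in $[0,1-\dist(\price)]$ and rescales both axes by $1/(1-\dist(\price))$, so the inscribed-triangle characterization of the property, applied to this sub-curve, yields the inscribed-triangle property of the rescaled curve; but the algebraic route above is self-contained and shorter.
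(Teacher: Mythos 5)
Your main argument is correct, and it takes a genuinely different route from the paper's. You write the conditional quantity explicitly and factor it as
$g_\price(\val) = g(\val)\cdot\bigl(1-\dist(\price)/\dist(\val)\bigr)^{-1}$,
then observe that both factors are non-negative and non-increasing in $\val$ on the conditional support, so the product is too. This is a clean, purely algebraic verification that needs nothing beyond monotonicity of $\dist$. The paper instead argues geometrically on the revenue curve: it uses the equivalent local characterization that $g$ non-increasing means the slope of the revenue curve (the virtual value for revenue) is at least $-g(\val)$, i.e., at least the slope of the chord from $\bigl(1-\dist(\val),\,\val(1-\dist(\val))\bigr)$ to $(1,0)$. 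Conditioning on $\val>\price$ rescales both axes by $1/(1-\dist(\price))$, which preserves slopes and moves the relevant endpoint from $(1,0)$ to $(1-\dist(\price),0)$; since that chord is only steeper (more negative), the original inequality implies the conditional one. Your closing parenthetical remark about ``rescaling both axes and applying the inscribed-triangle characterization'' is essentially this geometric argument, compressed (and slightly loose as stated, since the rescaled triangle's apex moves to $(1-\dist(\price),0)$, which is what the steepness comparison is doing the work of). On balance, your algebraic factorization is more elementary and sidesteps the differentiability implicit in the slope characterization, while the paper's version fits naturally with the inscribed-triangle exposition that immediately precedes it and is reused in Lemma~\ref{l:inscribed-triangle}.
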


\begin{proof}
The original condition is saying that the virtual value is not
more negative than the slope of the line that connects that point on
the revenue curve to $(1,0)$.  
After conditioning on being at least $\price$,
%Under the conditioning of the
%distribution to values larger than price $\price$, 
the condition can
be viewed on the original revenue curve as the virtual value not being
more negative than the slope of line that connects the point on the
revenue curve to $(1-\dist(\price),0)$.  As this slope is steeper than
the slope of the line through $(1,0)$, the property is preserved
by such conditioning.
\end{proof}

We now combine the above lemmas to prove Lemma~\ref{l:bms>mye},
restated below.  A key intuition in this proof is that
Lemma~\ref{l:inscribed-triangle} implies that
$\expect[\vals]{2\valith[2]} \geq \expect[\vals]{\Mye_\dist(\vals)}$
for $k=2$ items and $n=2$ agents --- the left-hand side 
is double the revenue of the Vickrey auction (with 1 item) and the
right-hand side is double the revenue of the single-agent optimal
mechanism.

\begin{numberedlemma}{\ref{l:bms>mye}}
For digital good settings and every distribution $\dist$ with
$\val\,(1-\dist(\val))/\dist(\val)$ non-increasing,
$\expect[\vals]{\bms(\vals)} \geq \expect[\vals]{\Mye_\dist(\vals)}$.
\end{numberedlemma}

\begin{proof}
Let $\price^* = \argmax_{\price} \price \, (1-\dist(\price))$ be the monopoly
price for the distribution.  The analysis proceeds by conditioning
on $\valith[3] = z$ and considering the cases where $z \leq
\price^*$ and $z > \price^*$.  In the first case,
 we have
\begin{align*}
\expect[\vals]{\smash{\bms(\vals) \given \valith[3] = z \leq \price^*}} 
  & \geq \expect[\vals]{2 \valith[2] \given \valith[3] = z \leq \price^*}\\ 
  & \geq \expect[\vals]{\Mye_\dist(\vals) \given \valith[3] = z \leq \price^*}.
\intertext{The last inequality follows by
  Lemmas~\ref{l:inscribed-triangle} and~\ref{l:conditioning-lemma} and
  the fact that, given $\valith[3] = z \leq \price^*$, $\Mye_\dist$ is
  an auction that sells to at most agents~1 and~2 and therefore
  has revenue that is at most the optimal auction that sells to these
  agents for the conditional distribution.  In the second case, let
  $k^* \geq 3$ be a random variable for the number of units sold by
  $\Mye_\dist$; we have}
\expect[\vals]{\smash{\bms(\vals) \given \valith[3] = z \geq \price^*}} 
  & \geq \expect[\vals]{k^* \valith[k^*] \given \valith[3] = z \geq \price^*}\\ 
  & \geq \expect[\vals]{k^* \price^* \given \valith[3] = z \geq \price^*}\\ 
  & =  \expect[\vals]{\Mye_{\dist}(\vals) \given \valith[3] = z \geq \price^*}.
\end{align*}
Combining the two cases proves the lemma.
\end{proof}

\end{document}